\newtheorem{theorem}{Theorem}[section]
\newtheorem{proposition}[theorem]{Proposition}
\newtheorem{lemma}[theorem]{Lemma}
\newtheorem{corollary}[theorem]{Corollary}
\def\Im{\textnormal{Im}}
\def\Re{\textnormal{Re}}
\def\Tr{\textnormal{Tr}}
\begin{document}

\title{Randomness Expansion from the Greenberger-Horne-Zeilinger Paradox\footnote{This
appendix was written during June - August 2011.  The text given here is unmodified except
for a few minor edits.}}

\author{Carl A.~Miller \and Yaoyun Shi}

\date{}

\maketitle

\appendix

\section{Randomness Expansion Devices}

In this section, we formalize  the notion of a multi-part \textit{randomness expansion} (RE) device.

A \textbf{$3$-part randomness expansion (RE) device} $D$ is a device consisting of $3$ components,
$D_1$, $D_2$, $D_3$, satisfying all of the following conditions:
\begin{enumerate}
\item Each component $D_j$ accepts a single bit $i_j \in \{ 0 , 1 \}$ as input and returns a single bit $o_j
\in \{ 0, 1 \}$
as output.  

\item Each component $D_j$ contains a quantum system, $Q_j$.  When it receives 
its input bit $i_j$, it performs an orthogonal binary measurement on $Q_j$.
The output bit $o_j$ is the result of this measurement.

\item The choice of measurement performed by $D_j$ is determined solely
by its input bit $i_j$.  In particular, it does not depend on the inputs or outputs of any of the other
components.  (The components do not communicate with one another.)
\end{enumerate}

We will say that the input bits $(i_j)$ and the output bits $(o_j)$
\textbf{pass the GHZ test} if the following Boolean relations
are satisfied:
\begin{eqnarray}
\label{ghztest1}
i_1 \oplus i_2 \oplus i_3 = 0, \\
\label{ghztest2}
o_1 \oplus o_2 \oplus o_3 \oplus (i_1 \vee i_2 \vee i_3) = 1.
\end{eqnarray}

In what follows, we will often be choosing the inputs to device $D$ according to
a random probability distribution.  In this case, we denote the inputs by 
the random variables $I_1$, $I_2$, and $I_3$, and the outputs by the random variables
$O_1$, $O_2$, and $O_3$.  Let $I = (I_1, I_2 , I_3)$, and $O = (O_1 , O_2 , O_3)$.
\[
\xymatrix{
I_1 \ar[d] & I_2 \ar[d] & I_3 \ar[d] \\
\framebox{ \framebox[0.5in]{$Q_1$} }_{D_1} \ar[d] &
\framebox{ \framebox[0.5in]{$Q_2$} }_{D_2} \ar[d] &
\framebox{ \framebox[0.5in]{$Q_3$} }_{D_3} \ar[d] \\
O_1 & O_2 & O_3
}
\]
Let $Q$ denote the tripartite quantum system formed by $Q_1$, $Q_2$, and $Q_3$.
As a convention, if $Z$ is a quantum system, then we use
the corresponding symbol $\mathcal{Z}$ to denote the complex
Euclidean state-space of $Z$.    Thus $\mathcal{Q}_1$, $\mathcal{Q}_2$,
and $\mathcal{Q}_3$ denote the state spaces of $Q_1$, $Q_2$, and $Q_3$.

Note that the state of the system $Q$ may be mixed.  Sometimes we will consider an
additional quantum system that is entangled with $Q$:
\[
\xymatrix{
I_1 \ar[d] & I_2 \ar[d] & I_3 \ar[d] \\
\framebox{ \framebox[0.5in]{$Q_1$} }_{D_1} \ar[d] &
\framebox{ \framebox[0.5in]{$Q_2$} }_{D_2} \ar[d] &
\framebox{ \framebox[0.5in]{$Q_3$} }_{D_3} \ar[d] &
\framebox[0.5in]{$E$}  \\
O_1 & O_2 & O_3
}
\]
We will use the expressions $\Gamma^{(pre)}_*$ and $\Gamma^{(post)}_*$ to
denote the pre- and post-measurements states of the systems above.
Thus,
\begin{eqnarray}
\Gamma^{(post)}_{IQE}
\end{eqnarray}
denotes the post-measurement
state of the systems
$I$, $Q$, and $E$, taken together.  
We will drop the
``(pre)'' and ``(post)'' superscripts when they are not necessary.

$\Gamma$ is always presumed to denote a density operator.
If the state of a collection of quantum systems happens to be pure,
then we may also also use the symbol $v$ to denote the vector representing
the pure state.  (Thus for example, if the
expression $v^{(pre)}_Q$ appears, it should be
interpreted to mean a vector in $\mathcal{Q}$ representing
the pure pre-measurement state of $\mathcal{Q}$.)

\subsection{A canonical form}

Let $D$ be a $3$-part RE device (as defined above).  For any $j \in \{ 1, 2, 3 \}$,
and any input bit $i \in \{ 0 , 1 \}$, let
\begin{eqnarray}
\label{examplemeasurement}
\{ P_j^{(i , o )} \}_{o \in \{ 0 , 1 \} }
\end{eqnarray}
be the orthogonal measurement the the component $D_j$ performs on 
its quantum system $Q_j$ on input $i$.  This is a set consisting
of two complementary orthogonal projection operators on
the space $\mathcal{Q}_j$.

It is convenient to express (\ref{examplemeasurement}) as a single
operator.  Let
\begin{eqnarray}
T_j^{(i)} = P_j^{(i, 0)} - P_j^{(i, 1)}.
\end{eqnarray}
This is a Hermitian operator on $\mathcal{Q}_j$ whose square
is equal to $\mathbb{I}$.  In general, any Hermitian operator
$X$ satisfying $X^2 = \mathbb{I}$ yields a binary orthogonal
measurement ($\{ \frac{ \mathbb{I} + X }{2} , \frac{\mathbb{I} - X}{2} \}$).
For this reason, we will often refer to such an operator $X$ simply as a ``measurement.''

Let $D'$ be another $3$-part RE device,
\[
\xymatrix{
\framebox{ \framebox[0.5in]{$Q'_1$} }_{D'_1} &
\framebox{ \framebox[0.5in]{$Q'_2$} }_{D'_2} &
\framebox{ \framebox[0.5in]{$Q'_3$} }_{D'_3}  \\
}
\]
whose measurements are denoted by ${T'}_j^{(i)}$.  Let us
say that a \textbf{unitary embedding} of $D$ into $D'$ is
a collection of unitary embeddings
\begin{eqnarray*}
\Psi_1 \colon \mathcal{Q}_1 \hookrightarrow \mathcal{Q}'_1  \\
\Psi_2 \colon \mathcal{Q}_2 \hookrightarrow \mathcal{Q}'_2  \\
\Psi_3 \colon \mathcal{Q}_3 \hookrightarrow \mathcal{Q}'_3 
\end{eqnarray*}
which map the pre-measurement state of $Q = (Q_1 , Q_2, Q_3 )$ to
the pre-measurement state of $Q' = (Q'_1 , Q'_2 , Q'_3)$, and which 
are such that
\begin{eqnarray}
T_j^{(i)} = \Psi_j^\dagger \left(  {T'}_j^{(i)} \right) \Psi_j
\end{eqnarray}
It is easy to see that if such a unitary embedding exists,
then the device $D'$ has the same output-statistics as
$D$.

If $E$ is an additional quantum system, and the pre-measurement
states of $D$ and $D'$ are given as entangled states
\begin{eqnarray*}
\Gamma^{(pre)}_{QE} \textnormal{ and } {\Gamma'}^{(pre)}_{Q'E},
\end{eqnarray*}
then we say that the unitary embedding $(\Psi_i )$ \textbf{respects 
entanglement with $E$} if the superoperator
\begin{eqnarray}
\left( \Psi_1 \otimes \Psi_2 \otimes \Psi_3 \otimes \mathbb{I}_{\mathcal{E}} 
\right) ( \cdot ) \left( \Psi_1 \otimes \Psi_2 \otimes \Psi_3 \otimes \mathbb{I}_{\mathcal{E}} 
\right)^\dagger
\end{eqnarray}
carries
$\Gamma^{(pre)}_{QE}$ to ${\Gamma'}^{(pre)}_{Q'E}$.

\begin{proposition}
\label{canonicalform}
Let $D$ be a $3$-part RE device which is entangled with an additional
quantum system $E$.  Then there exists a unitary embedding of $D$
into another $3$-part RE device $D'$, respecting
entanglement with $E$, such that the following conditions
hold.
\begin{enumerate}
\item  The state-spaces
of $D'$ have the form
\begin{eqnarray}
\mathcal{Q}'_1 &  = & \mathcal{A} \otimes \mathcal{R}_1 \\
\mathcal{Q}'_2 & = & \mathcal{B} \otimes \mathcal{R}_2 \\
\mathcal{Q}'_3 & = & \mathcal{C} \otimes \mathcal{R}_3.
\end{eqnarray}
where $\mathcal{A}$, $\mathcal{B}$, and $\mathcal{C}$ are complex Euclidean
spaces, and $\mathcal{R}_1$, $\mathcal{R}_2$, and $\mathcal{R}_3$
are copies of $\mathbb{C}^{\{ 0 , 1 \}}$.

\item The measurements operators for $D'_j$ on input $i = 0$ are given,
respectively, by
\begin{eqnarray}
\sum_k \left| a_k \right> \left< a_k \right| \otimes \left[
\begin{array}{cc}
0 & 1 \\ 1 & 0 \end{array} \right], \\
\sum_\ell \left| b_\ell \right> \left< b_\ell \right| \otimes \left[
\begin{array}{cc}
0 & 1 \\ 1 & 0
\end{array} \right],\\
\sum_m \left| c_m \right> \left< c_m \right| \otimes \left[
\begin{array}{cc}
0 & 1 \\ 1 & 0
\end{array}
\right],
\end{eqnarray}
where $\{ a_k \}$, $\{ b_\ell \}$, and $\{ c_m \}$ denote orthogonal
bases for $\mathcal{A}$, $\mathcal{B}$, and $\mathcal{C}$.\footnote{The
$2 \times 2$ matrices here are written with
respect to the computational basis $\{ \left| 0 \right> , \left| 1 \right> \}$
of $\mathbb{C}^{\{ 0 , 1 \}}$.}

\item The measurements operators for $D'_j$ on input
$i = 1$ are given by
\begin{eqnarray}
\sum_k \left| a_k \right> \left< a_k \right| \otimes \left[
\begin{array}{cc}
0 & \lambda_k \\ \overline{\lambda_k} & 0 \end{array} \right], \\
\sum_\ell \left| b_\ell \right> \left< b_\ell \right| \otimes \left[
\begin{array}{cc}
0 & \gamma_\ell \\ \overline{\gamma_\ell} & 0
\end{array} \right],\\
\sum_m \left| c_m \right> \left< c_m \right| \otimes \left[
\begin{array}{cc}
0 & \phi_m \\ \overline{\phi_m} & 0
\end{array}
\right],
\end{eqnarray}
where $\lambda_k$, $\gamma_\ell$, and $\phi_m$
denote complex numbers that satisfy
\begin{eqnarray}
\left| \lambda_k \right| = \left| \gamma_\ell \right|
= \left| \phi_m \right| = 1 , \\
\Im ( \lambda_k ) \geq 0 , \\
\Im ( \gamma_\ell ) \geq 0 , \\
\Im ( \phi_m ) \geq 0.
\end{eqnarray}
\end{enumerate}
\end{proposition}

\begin{proof}
It is easy to show that, given any two Hermitian operators
$X$ and $Y$ on $\mathbb{C}^2$ such that $X^2 = Y^2 = 
\mathbb{I}$, there is a unitary transformation $U \colon
\mathbb{C}^2 \to \mathbb{C}^2$ such that
\begin{eqnarray}
U X U^\dagger & = & \left[ \begin{array}{cc}
0 & 1 \\
1 & 0 \end{array} \right] \\
U Y U^\dagger & = & \left[ \begin{array}{cc}
0 & \zeta \\
\overline{ \zeta } & 0 \end{array} \right],
\end{eqnarray}
where $\zeta$ satisfies $\left| \zeta \right| = 1$
and $\Im ( \zeta ) \geq 0$.  From this fact,
it is easy to prove the proposition for
the case where the spaces $\mathcal{Q}_j$ all
have dimension $\leq 2$.  
The general case then follows via Jordan's lemma.\footnote{See Lemma 1 from  \url{http://www.cs.tau.ac.il/~odedr/teaching/quantum_fall_2005/ln/qma.pdf}.}
\end{proof}

\section{The GHZ Paradox: $2 \times 2 \times 2$ case}

\label{222ghzsection}

We are interested in those RE devices which pass the
GHZ test (see (\ref{ghztest2})) with high probability.
Because of Proposition~\ref{canonicalform}, it is useful to
focus on RE devices that are in a particular form.

Let $J$ be a $3$-part RE device.
\[
\xymatrix{
\framebox{ \framebox[0.5in]{$R_1$} }_{J_1} &
\framebox{ \framebox[0.5in]{$R_2$} }_{J_2} &
\framebox{ \framebox[0.5in]{$R_3$} }_{J_3} \\
}
\]
Suppose that $\mathcal{R}_j = \mathbb{C}^{\{ 0 , 1 \}}$ for each $j$, and suppose
that the measurement
operators $\{ S^{(i)}_j \}$ for $J$ are given by
\begin{eqnarray} 
S^{(0)}_1 = S^{(0)}_2 = S^{(0)}_3 = \left[
\begin{array}{cc} 0 & 1 \\ 1 & 0 \end{array}
\right] \\
S^{(1)}_1  = \left[
\begin{array}{cc} 0 & \lambda \\ \overline{\lambda} & 0 \end{array}
\right] \\
S^{(1)}_2  = \left[
\begin{array}{cc} 0 & \gamma \\ \overline{\gamma} & 0 \end{array}
\right] \\
S^{(1)}_3  = \left[
\begin{array}{cc} 0 & \phi \\ \overline{\phi} & 0 \end{array}
\right]
\end{eqnarray}
with $\left| \lambda \right| = \left| \gamma \right|
= \left| \phi \right| = 1$.  Suppose that the pre-measurement
state of $(R_1 , R_2 , R_3 )$ is a pure state given by a 
vector $\alpha \in \mathcal{R}_1 \otimes \mathcal{R}_2
\otimes \mathcal{R}_3$.

Consider that a $3$-bit string $I_1 I_2 I_3$ is chosen uniformly
at random from the set
\begin{eqnarray}
\label{simpleinputset}
\left\{ 000 , 011 , 101 , 110 \right\},
\end{eqnarray}
and given to device $J$.  Then $J$ passes the GHZ test if
and only if the output string $O_1O_2O_3$ satisfies the relation
\begin{eqnarray}
O_1 \oplus O_2 \oplus O_3 \oplus \left( I_1 \vee I_2 \vee I_3 \right)
 = 1.
\end{eqnarray}
We can calculate the probability that this formula will be satisfied
using the operators $S^{(i)}_j$.  For example, the
probability that the formula
will be satisfied when the input string is $110$ is:
\begin{eqnarray}
\frac{1}{2} & + & \frac{1}{2} \left< \left( S^{(1)}_1 \otimes
S^{(1)}_2 \otimes S^{(0)}_3 \right) \alpha , \alpha \right>
\end{eqnarray}
The probability that this formula will be satisfied on an input
randomly chosen from (\ref{simpleinputset}) is 
\begin{eqnarray}
\label{opexpforGHZ}
\frac{1}{2} & - & \frac{1}{8} \left< \left( S^{(0)}_1 \otimes
S^{(0)}_2 \otimes S^{(0)}_3 \right) \alpha , \alpha \right> \\
\nonumber
& + & \frac{1}{8} \left< \left( S^{(0)}_1 \otimes
S^{(1)}_2 \otimes S^{(1)}_3 \right) \alpha , \alpha \right> \\
\nonumber
& + & \frac{1}{8} \left< \left( S^{(1)}_1 \otimes
S^{(0)}_2 \otimes S^{(1)}_3 \right) \alpha , \alpha \right> \\
\nonumber
& + & \frac{1}{8} \left< \left( S^{(1)}_1 \otimes
S^{(1)}_2 \otimes S^{(0)}_3 \right) \alpha , \alpha \right>
\end{eqnarray}
Let
\begin{eqnarray}
\alpha = \sum_{k, l, m \in \{ 0 , 1 \} }
c_{klm} \left| klm \right>.
\end{eqnarray}
Writing (\ref{opexpforGHZ}) in terms of
$\lambda$, $\gamma$, $\phi$, and $\{
c_{jkl} \}$, and rearranging terms, we find the following expression:
\begin{eqnarray*}
\mathbf{P} \left( \textnormal{$J$ passes the
GHZ test}\right)  \hskip0.05in =  \hskip0.05in 
\frac{1}{2}  & + &
\textnormal{Re} \left[ \overline{c_{111}} \cdot c_{000} \left( \frac{ -1 + \lambda \gamma +
\gamma \phi + \phi \lambda }{4} \right) \right] \\
\nonumber
& + &
\textnormal{Re} \left[ \overline{c_{110}} \cdot c_{001} \left( \frac{ -1 + \lambda \gamma +
\gamma \overline{\phi} + \overline{\phi } \lambda }{4} \right) \right] \\
\nonumber
& + &
\textnormal{Re} \left[ \overline{c_{101}} \cdot c_{010} \left( \frac{ -1 + \lambda \overline{ \gamma } +
\overline{ \gamma } \phi + \phi \lambda }{4} \right) \right] \\
\nonumber
& + &
\textnormal{Re} \left[ \overline{c_{100}} \cdot c_{011} \left( \frac{ -1 + \lambda \overline{ \gamma } +
\overline{ \gamma } \overline{ \phi } + \overline{ \phi } \lambda }{4} \right) \right].
\end{eqnarray*}
We will make frequent use of this expression in the subsections that follow.

\subsection{A Few Lemmas}

The following lemmas will be useful in subsequent sections.

\begin{lemma}
\label{threevariablelemma1}
Let $a$, $b$, and $c$ be complex numbers such that $\left| a \right| = \left| b \right|
= \left| c \right| = 1$ and $\Im ( a ) \geq 0$, $\Im ( b ) \geq 0$, and $\Im ( c ) \geq 0$.  Then,
\begin{eqnarray}
\left| \frac{ -1 + ab + bc + ca }{4} \right| 
\leq \sqrt{ 1 - \frac{ \left| a - i \right|^2 }{4}  }.
\end{eqnarray}
\end{lemma}

\begin{proof}
Consider, for all complex numbers $b'$ satisfying $| b' | = 1$, the value
\begin{eqnarray}
\left| -1 + ab' + b' c + ca \right| = \left| \left( -1 + ca \right)
+ b' \left( a + c \right) \right|.
\end{eqnarray}
This value will be maximized when the
angle of $\left( -1 + ca \right)$ in the complex plane agrees
with the angle of $b' ( a + c )$.  This occurs precisely when $b' = i$.  Therefore,
\begin{eqnarray}
\left| -1 + ab + bc + ca \right| \leq \left| -1 + ai + ic + ca \right|.
\end{eqnarray}
By similar reasoning,
\begin{eqnarray}
\left| -1 + ai + ic + ca \right| \leq \left| -1 + ai + i^2 + ia \right|.
\end{eqnarray}
Therefore,
\begin{eqnarray}
\left| -1 + ab + bc + ca \right| \leq \left| -2 + 2ai \right| =  \left| 2a + 2i \right| 
\end{eqnarray}
Note that $\left| 2a + 2i \right|^2 + \left| 2a - 2i \right|^2 = 16$.  Therefore,
\begin{eqnarray}
\left| -1 + ab + bc + ca \right|^2 & \leq & \left| 2 a + 2 i \right|^2
\end{eqnarray}
implies
\begin{eqnarray}
\left| -1 + ab + bc + ca \right|^2 & \leq & 16 - \left| 2a - 2i \right|^2.
\end{eqnarray}
Dividing by $16$ and taking square roots of both sides of the equation
yields the result.
\end{proof}

\begin{lemma}
\label{threevariablelemma2}
Let $a$, $b$, and $c$ be complex numbers such that $\left| a \right| = \left| b \right|
= \left| c \right| = 1$.  Suppose that $\textnormal{Im} (a ) \geq 0$ and $\Im ( b) \geq 0$,  but $\textnormal{Im} (c ) \leq 0$.  Then,
\begin{eqnarray}
\left| \frac{ -1 + ab + bc + ca}{4} \right| \leq \frac{\sqrt{2}}{2}.
\end{eqnarray}
\end{lemma}

\begin{proof}
Consider the complex numbers
\begin{eqnarray}
(-1 + ab) \hskip0.2in \textnormal{and} \hskip0.2in (a + b ).
\end{eqnarray}
In the complex plane, $(-1 + ab)$ lies at an angle of $+ \pi/2$ (in the counterclockwise
direction) from $(a + b)$.  Since $\Im ( c ) \leq 0$, the angle between the product
$c(a + b)$ and $(-1 + ab)$ must be an obtuse angle or a right angle.  Therefore,
\begin{eqnarray}
\left| (-1 + ab ) + c (a + b ) \right|^2 \leq \left| -1 + ab \right|^2
+ \left| a + b \right|^2 \leq 8.
\end{eqnarray}
The lemma follows.
\end{proof}

\begin{lemma}
\label{infinitynormlemma}
Suppose that $A$, $B$, $A'$ and $B'$ are Hermitian operators on $\mathbb{C}^n$ which
satisfy $\left| \left| \cdot \right| \right|_\infty \leq 1$.
Then,
\begin{eqnarray}
\left| \left| A \otimes B  - A' \otimes B' \right| \right|_\infty
& \leq & \left| \left| A - A' \right| \right|_\infty
+ \left| \left| B - B' \right| \right|_\infty.
\end{eqnarray}
\end{lemma}

\begin{proof}
We have the following:
\begin{eqnarray}
\left| \left| A \otimes B - A' \otimes B' \right| \right|_\infty
& = & \left| \left| (A - A' ) \otimes B + A' \otimes (B - B' ) \right| \right|_\infty \\
& \leq & \left| \left| A - A' \right| \right|_\infty \left| \left| B \right| \right|_\infty
+ \left| \left| A' \right| \right|_\infty \left| \left| B - B' \right| \right|_\infty \\
& \leq & \left| \left| A - A' \right| \right|_\infty + \left| \left| B - B' \right| \right|_\infty.
\end{eqnarray}
\end{proof}

\subsection{Characterizing Approximate GHZ Devices}

\label{characterizationsubsection}

There exists $3$-part RE devices which satisfy the GHZ test perfectly. Consider
the $3$-qubit device whose pre-measurement state is the pure state
\begin{eqnarray}
\label{ghzstate}
g = \frac{1}{\sqrt{2}} \left( \left| 000 \right> -
\left| 111 \right> \right) \in \mathbb{C}^{\{ 0 , 1 \}} \otimes
\mathbb{C}^{\{ 0 , 1 \}} \otimes \mathbb{C}^{\{ 0 , 1 \}} 
\end{eqnarray}
and which is the such that each component uses the measurement
operator
\begin{eqnarray}
\sigma_x = \left[ \begin{array}{cc} 0 & 1 \\
1 & 0 \end{array} \right]
\end{eqnarray}
on input $i = 0$, and the measurement operator
\begin{eqnarray}
\sigma_y = \left[ \begin{array}{cc} 0 & i \\
-i & 0 \end{array} \right]
\end{eqnarray}
on input $i = 1$.  This device passes the GHZ test with probability $1$.
Let us refer to this device as \textbf{the ideal GHZ device}.

We will use the above example for comparison.  In 
the next two propositions, we see that if a $2 \times 2 \times 2$
RE device passes the GHZ test with
high probability, then it is similar to an ideal GHZ device.

\begin{proposition}
\label{msrmtapproxprop}
Let $J$ be a $3$-part RE device whose state spaces 
$\mathcal{R}_j$ are copies of the qubit-space $\mathbb{C}^{
\{ 0 , 1 \} }$, and whose pre-measurement state
is a pure state
\begin{eqnarray}
\alpha = \sum_{k, l , m \in \{ 0 , 1 \} } c_{klm} \left| klm \right>.
\end{eqnarray}
Suppose that the measurement operators for $J$ are
\begin{eqnarray}
\begin{array}{ccc}
S_1^{(0)} =  \left[ \begin{array}{cc} 0 & 1 \\ 1 & 0 \end{array} \right]  & 
S_2^{(0)} = \left[ \begin{array}{cc} 0 & 1 \\ 1 & 0 \end{array} \right] & 
S_3^{(0)} = \left[ \begin{array}{cc} 0 & 1 \\ 1 & 0 \end{array} \right] \\ \\
S_1^{(1)} =  \left[ \begin{array}{cc} 0 & \lambda \\ \overline{\lambda} & 0 \end{array} \right]  & 
S_2^{(1)} = \left[ \begin{array}{cc} 0 & \gamma \\ \overline{\gamma} & 0 \end{array} \right] & 
S_3^{(1)} = \left[ \begin{array}{cc} 0 & \phi \\ \overline{\phi} & 0 \end{array} \right] \\ 
\end{array}
\end{eqnarray}
where
\begin{eqnarray}
\left| \lambda \right| = \left| \gamma \right|
= \left| \phi \right| = 1 \\
\Im ( \lambda ) \geq 0 \\
\Im ( \gamma ) \geq 0 \\
\Im ( \phi ) \geq 0.
\end{eqnarray}
Suppose that
\begin{eqnarray}
\mathbf{P} \left( \textnormal{$J$ passes the GHZ test} \right) = 1 - \epsilon.
\end{eqnarray}
Then,
\begin{eqnarray}
\label{lambdaineq}
\left| \lambda - i \right|^2 \leq 16 \epsilon \\
\label{gammaineq}
\left| \gamma - i \right|^2 \leq 16 \epsilon \\
\label{phiineq}
\left| \phi - i \right|^2 \leq 16 \epsilon.
\end{eqnarray}
\end{proposition}

\begin{proof}
From the discussion at the beginning of this section, we know that
the probability that $J$ passes the GHZ test is given by
\begin{eqnarray}
\label{expforpassprob}
\frac{1}{2} & + &
\textnormal{Re} \left[ \overline{c_{111}} \cdot c_{000} \left( \frac{ -1 + \lambda \gamma +
\gamma \phi + \phi \lambda }{4} \right) \right] \\
\nonumber
& + &
\textnormal{Re} \left[ \overline{c_{110}} \cdot c_{001} \left( \frac{ -1 + \lambda \gamma +
\gamma \overline{\phi} + \overline{\phi } \lambda }{4} \right) \right] \\
\nonumber
& + &
\textnormal{Re} \left[ \overline{c_{101}} \cdot c_{010} \left( \frac{ -1 + \lambda \overline{ \gamma } +
\overline{ \gamma } \phi + \phi \lambda }{4} \right) \right] \\
\nonumber
& + &
\textnormal{Re} \left[ \overline{c_{100}} \cdot c_{011} \left( \frac{ -1 + \lambda \overline{ \gamma } +
\overline{ \gamma } \overline{ \phi } + \overline{ \phi } \lambda }{4} \right) \right]
\end{eqnarray}
Therefore, the quantity $(1 - \epsilon)$ satisfies
\begin{eqnarray}
1 - \epsilon \hskip0.05in \leq \hskip0.05in \frac{1}{2} & + &
\left| c_{111} \right| \left|  c_{000} \right|  \left| \frac{ -1 + \lambda \gamma +
\gamma \phi + \phi \lambda }{4} \right| \\
\nonumber
& + &
\left|  c_{110} \right| \left| c_{001} \right| \left| \frac{ -1 + \lambda \gamma +
\gamma \overline{\phi} + \overline{\phi } \lambda }{4} \right| \\
\nonumber
& + &
\left| c_{101} \right| \left|  c_{010} \right| \left| \frac{ -1 + \lambda \overline{ \gamma } +
\overline{ \gamma } \phi + \phi \lambda }{4} \right| \\
\nonumber
& + &
\left| c_{100} \right| \left| c_{011} \right| \left| \frac{ -1 + \lambda \overline{ \gamma } +
\overline{ \gamma } \overline{ \phi } + \overline{ \phi } \lambda }{4} \right|
\end{eqnarray}
which implies
\begin{eqnarray}
\label{absvalueinequality}
1 - \epsilon \hskip0.05in \leq \hskip0.05in \frac{1}{2} & + & \frac{1}{2} \left[ 
\left( \left| c_{111}  \right|^2 +  \left|  c_{000} \right|^2 \right)  \left| \frac{ -1 + \lambda \gamma +
\gamma \phi + \phi \lambda }{4} \right| \right. \\
\nonumber
& + &
\left( \left| c_{110}  \right|^2 +  \left|  c_{001} \right|^2 \right)  \left| \frac{ -1 + \lambda \gamma +
\gamma \overline{\phi} + \overline{\phi} \lambda }{4} \right| \\
\nonumber
& + &
\left( \left| c_{101}  \right|^2 +  \left|  c_{010} \right|^2 \right)  \left| \frac{ -1 + \lambda \overline{ \gamma } +
\overline{ \gamma } \phi + \phi \lambda }{4} \right|  \\
\nonumber
& + &
\left. \left( \left| c_{100}  \right|^2 +  \left|  c_{011} \right|^2 \right)  \left| \frac{ -1 + \lambda \overline{ \gamma  } +
\overline{ \gamma } \overline{ \phi } + \overline{ \phi}  \lambda }{4} \right| \right],
\end{eqnarray}
or equivalently,
\begin{eqnarray}
\label{absvalueinequality2}
1 - 2 \epsilon & \leq &
\left( \left| c_{111}  \right|^2 +  \left|  c_{000} \right|^2 \right)  \left| \frac{ -1 + \lambda \gamma +
\gamma \phi + \phi \lambda }{4} \right|  \\
\nonumber
& + &
\left( \left| c_{110}  \right|^2 +  \left|  c_{001} \right|^2 \right)  \left| \frac{ -1 + \lambda \gamma +
\gamma \overline{\phi} + \overline{\phi} \lambda }{4} \right| \\
\nonumber
& + &
\left( \left| c_{101}  \right|^2 +  \left|  c_{010} \right|^2 \right)  \left| \frac{ -1 + \lambda \overline{ \gamma } +
\overline{ \gamma } \phi + \phi \lambda }{4} \right|  \\
\nonumber
& + &
\left( \left| c_{100}  \right|^2 +  \left|  c_{011} \right|^2 \right)  \left| \frac{ -1 + \lambda \overline{ \gamma  } +
\overline{ \gamma } \overline{ \phi } + \overline{ \phi}  \lambda }{4} \right|.
\end{eqnarray}

Consider the following four quantities from inequality (\ref{absvalueinequality2}):
\begin{eqnarray*}
\left| \frac{ -1 + \lambda \gamma +
\gamma \phi + \phi \lambda }{4} \right| , 
\left| \frac{ -1 + \lambda \gamma +
\gamma \overline{\phi} + \overline{\phi} \lambda }{4} \right| , 
\left| \frac{ -1 + \lambda \overline{ \gamma } +
\overline{ \gamma } \phi + \phi \lambda }{4} \right| , 
\left| \frac{ -1 + \lambda \overline{ \gamma  } +
\overline{ \gamma } \overline{ \phi } + \overline{ \phi}  \lambda }{4} \right|.
\end{eqnarray*}
Note that the coefficients of these quantities in (\ref{absvalueinequality2})
sum to $1$.
By Lemma~\ref{threevariablelemma2}, the
second, third, and fourth quantities are each less than or equal to 
$\sqrt{2}/2$.  If the first of the quantities were also less than or
equal to $\sqrt{2} / 2$, then we have $1 - 2 \epsilon \leq \frac{\sqrt{2}}{2}$,
and therefore $\epsilon \geq \frac{1}{2} - \frac{\sqrt{2}}{4} > \frac{1}{8}$,
and inequalities (\ref{lambdaineq}) - (\ref{phiineq}) would be trivially satisfied.  
So, let us assume that the first of these quantities is greater than $\sqrt{2}/2$, which
makes it the largest of the four.
Then,
\begin{eqnarray}
1 - 2 \epsilon & \leq &  \left| \frac{ -1 + \lambda \gamma + \gamma \phi
+ \phi \lambda}{4} \right|.
\end{eqnarray}
By Lemma~\ref{threevariablelemma1},
\begin{eqnarray}
1 - 2 \epsilon & \leq & \sqrt{ 1 - \frac{ \left| \lambda - i \right|^2}{4} }.
\end{eqnarray}
Therefore,
\begin{eqnarray}
\left( 1 - 2 \epsilon \right)^2 & \leq &  1 - \frac{ \left| \lambda - i \right|^2}{4} \\
1 - 4 \epsilon & \leq &  1 - \frac{ \left| \lambda - i \right|^2}{4} \\
16 \epsilon & \geq & \left| \lambda - i \right|^2
\end{eqnarray}
Similar reasoning shows that $\left| \gamma - i \right|^2 \leq 16 \epsilon$
and $\left| \phi - i \right|^2 \leq 16 \epsilon$.
\end{proof}

\begin{proposition}
\label{stateapproxprop}
Let $J$ be the device from Proposition~\ref{msrmtapproxprop}.  
Let 
\begin{eqnarray}
g = \frac{1}{\sqrt{2}} \left( \left| 000 \right> -
\left| 111 \right> \right)
\end{eqnarray}
Then, 
\begin{eqnarray}
\label{prestateinequality}
\left| \left< \alpha , g \right> \right| \geq 1 - 7 \epsilon.
\end{eqnarray}
\end{proposition}

\begin{proof}
As before, we know that the probability
that the probability that $J$ passes the GHZ test is given by (\ref{expforpassprob}).
We prove the proposition in $3$ cases.

\vskip0.2in 

\textbf{Case 1:} $\mathbf{c_{000} = c_{111} = 0}$.

Using Lemma~\ref{threevariablelemma2}, we have the following:

\begin{eqnarray}
1 - \epsilon \hskip0.05in = \hskip0.05in \frac{1}{2} & + &
\textnormal{Re} \left[ \overline{c_{110}} \cdot c_{001} \left( \frac{ -1 + \lambda \gamma +
\gamma \overline{\phi} + \overline{\phi } \lambda }{4} \right) \right] \\
\nonumber
& + &
\textnormal{Re} \left[ \overline{c_{101}} \cdot c_{010} \left( \frac{ -1 + \lambda \overline{ \gamma } +
\overline{ \gamma } \phi + \phi \lambda }{4} \right) \right] \\
\nonumber
& + &
\textnormal{Re} \left[ \overline{c_{100}} \cdot c_{011} \left( \frac{ -1 + \lambda \overline{ \gamma } +
\overline{ \gamma } \overline{ \phi } + \overline{ \phi } \lambda }{4} \right) \right] \\
\hskip0.05in \leq \hskip0.05in \frac{1}{2} & + &
\left|  c_{110} \right| \left| c_{001} \right| \left| \frac{ -1 + \lambda \gamma +
\gamma \overline{\phi} + \overline{\phi } \lambda }{4} \right| \\
\nonumber
& + &
\left| c_{101} \right| \left|  c_{010} \right| \left| \frac{ -1 + \lambda \overline{ \gamma } +
\overline{ \gamma } \phi + \phi \lambda }{4} \right| \\
\nonumber
& + &
\left| c_{100} \right| \left| c_{011} \right| \left| \frac{ -1 + \lambda \overline{ \gamma } +
\overline{ \gamma } \overline{ \phi } + \overline{ \phi } \lambda }{4} \right|  \\
\nonumber
\hskip0.05in \leq \hskip0.05in \frac{1}{2}  & + &
\left( \frac{ \left| c_{110}  \right|^2 +  \left|  c_{001} \right|^2 }{2} \right)  \cdot 
\frac{\sqrt{2}}{2} \\
\nonumber
& + &
\left( \frac{ \left|  c_{101}  \right|^2 +  \left|  c_{010} \right|^2 }{2} \right)  \cdot 
\frac{\sqrt{2}}{2}  \\
\nonumber
& + &
\left( \frac{ \left|  c_{100}  \right|^2 +  \left|  c_{011} \right|^2 }{2} \right)  \cdot \frac{\sqrt{2}}{2} \\
\hskip0.05in = \hskip0.05in \frac{1}{2} & + & \frac{\sqrt{2}}{4}
\end{eqnarray}
Therefore,
\begin{eqnarray}
\epsilon \geq 1 - \left( \frac{1}{2} + \frac{\sqrt{2}}{4} \right)
\end{eqnarray}
Since the quantity above is greater than $\frac{1}{7}$, condition
(\ref{prestateinequality}) is trivially satisfied.  

\hskip0.2in

\textbf{Case 2:} $\mathbf{c_{001} = c_{010} = c_{100} = c_{011} = c_{101} = c_{110} = 0}$.

We have
\begin{eqnarray*}
1 - \epsilon &  = & \frac{1}{2} +
\textnormal{Re} \left[ \overline{c_{111}} \cdot c_{000} \left( \frac{ -1 + \lambda \gamma +
\gamma \phi + \phi \lambda }{4} \right) \right] \\
& = & \textnormal{Re} \left[ \overline{c_{111}} \cdot c_{000} \left( \frac{-1}{4} \right) \right]
+ \textnormal{Re} \left[ \overline{c_{111}} \cdot c_{000} \left( \frac{ \lambda \gamma +
\gamma \phi + \phi \lambda }{4} \right) \right] \\
& \leq & \textnormal{Re} \left[ \overline{c_{111}} \cdot c_{000} \left( \frac{-1}{4} \right) \right]
+ \left|  c_{111} \right| \left| c_{000}  \right| \left| \frac{ \lambda \gamma +
\gamma \phi + \phi \lambda }{4} \right| \\
& \leq & \textnormal{Re} \left[ \overline{c_{111}} \cdot c_{000} \left( \frac{-1}{4} \right) \right]
+ \left( \frac{ \left|  c_{111} \right|^2 +  \left| c_{000}  \right|^2}{2} \right) \left| \frac{ \lambda \gamma +
\gamma \phi + \phi \lambda }{4} \right| \\
& \leq & \textnormal{Re} \left[ \overline{c_{111}} \cdot c_{000} \left( \frac{-1}{4} \right) \right]
+ \left( \frac{1}{2} \right) \left( \frac{3}{4} \right)
\end{eqnarray*}
which implies, by a linear manipulation,
\begin{eqnarray}
1 - 8 \epsilon  \leq ( -2 )  \cdot \Re \left[ \overline{c_{111}} c_{000} \right]
\end{eqnarray}
We have the following:
\begin{eqnarray}
2 - 8 \epsilon & \leq & 1 - 2 \cdot \Re \left[ \overline{c_{111}} c_{000} \right]
\end{eqnarray}
By an easy calcuation, the right side of this inequality is equal to 
$2 \left| \left< \alpha, g \right> \right|^2$.  Thus,
\begin{eqnarray}
2 - 8 \epsilon & \leq & 2 \left| \left< \alpha , g \right> \right|^2 \\
1 - 4 \epsilon & \leq & \left| \left< \alpha , g \right> \right|^2 \\
1 - 4 \epsilon & \leq & \left| \left< \alpha , g \right> \right| \\
1 - 7 \epsilon & \leq & \left| \left< \alpha , g \right> \right|.
\end{eqnarray}

\vskip0.2in

\textbf{Case 3: General case.}

Let
\begin{eqnarray}
\alpha = x \alpha' + y \alpha''
\end{eqnarray}
where $\alpha', \alpha''$ are unit vectors satisfying the
conditions of Case 1 and Case 2, respectively, and
$(x, y)$ is a unit-length vector in $\mathbb{R}^2$.
Let $J_{\alpha'}$ and $J_{\alpha''}$ denote the
device $J$ with the state $\alpha$ replaced by $\alpha'$
and $\alpha''$, respectively.  It is clear from expression (\ref{expforpassprob})
that
\begin{eqnarray}
\nonumber
\mathbf{P} \left( \textnormal{$J$ passes the GHZ test} \right)
& = & x^2 \cdot \mathbf{P} \left( \textnormal{$J_{\alpha'}$ passes
the GHZ test} \right) \\
\nonumber
& + &  y^2 \cdot \mathbf{P} \left( \textnormal{$J_{\alpha''}$ passes the
GHZ test} \right).
\end{eqnarray}
This case follows from cases 1 and 2 by an easy linearity argument.
\end{proof}

\begin{corollary}
\label{distancecorollary}
Let $J$ be the device from Proposition~\ref{msrmtapproxprop}.  There
exists a complex number $\zeta$ with $\left| \zeta \right| = 1$ such
that
\begin{eqnarray}
\left| \left| \alpha - \zeta g \right| \right|^2 \leq 14 \epsilon.
\end{eqnarray}
\end{corollary}

\begin{proof}
Choose $\zeta$ so that the inner product $\left< \alpha , \zeta g \right>$
is a nonnegative real number.  (Thus, $\left< \alpha , \zeta g \right>
= \left| \left< \alpha, g \right> \right|$.)  Then,
\begin{eqnarray}
\left| \left| \alpha - \zeta g \right| \right|^2 & = & \left< \alpha , \alpha \right> - 
\left< \alpha , \zeta g \right> - 
\left< \zeta g, \alpha \right> + 
\left< \zeta g , \zeta g \right>  \\
& = & 2 - 2 \left| \left< \alpha, g \right> \right| \\
& \leq & 2 - 2 ( 1 - 7 \epsilon ) \\
& = & 14 \epsilon.
\end{eqnarray}
\end{proof}

\subsection{The Post-Measurement State of an Approximate GHZ Device}

\label{postmsrmtsubsection}

Let $J$ be the $3$-part RE device from Proposition~\ref{msrmtapproxprop}.
Suppose, as usual, that $J$ is given an input string $I_1 I_2 I_3$ that is randomly
chosen from the set $\{ 000 , 011 , 101 , 110 \}$.  Consider the post-measurement
state of $J$ together with its input and output registers.
\[
\xymatrix{
I_1 \ar[d] & I_2 \ar[d] & I_3 \ar[d] \\
\framebox{ \framebox[0.5in]{$R_1$} }_{J_1} \ar[d] &
\framebox{ \framebox[0.5in]{$R_2$} }_{J_2} \ar[d] &
\framebox{ \framebox[0.5in]{$R_3$} }_{J_3} \ar[d] \\
O_1 & O_2 & O_3
}
\]
Recall that the pre-measurement state of $R = (R_1, R_2, R_3)$ is a pure state $\alpha$.
We can express the post-measurement state of $IOR$ as a pure state like so:
\begin{eqnarray}
\label{postmsrmtexp}
v^{(post)}_{IOR} = 
 \left[  \left( \frac{1}{4} \right) \cdot \sum_{(i_k) , (o_k)}
\left| i_1 i_2 i_3 \right> \otimes \left| o_1 o_2 o_3 \right> \otimes \right.
\hskip1in \\
\nonumber
\left. \left( 
\frac{\mathbb{I} + (-1)^{o_1} S_1^{(i_1)}}{2} \right) \otimes
\left( 
\frac{\mathbb{I} + (-1)^{o_2} S_2^{(i_2)}}{2} \right) \otimes
\left( 
\frac{\mathbb{I} + (-1)^{o_3} S_3^{(i_3)}}{2} \right)  \right] \alpha.
\end{eqnarray}
where the summation is taken over all input/output combinations $(i_k)$, $(o_k)$
which satisfy the GHZ test ((\ref{ghztest1})--(\ref{ghztest2})).

Let
\begin{eqnarray}
g = \frac{1}{\sqrt{2}} \left( \left| 000 \right> - \left| 111 \right> \right)
\end{eqnarray}
and
\begin{eqnarray}
\sigma^{(0)} = \left[ \begin{array}{ccc} 0 & 1 \\ 1 & 0 \end{array} \right] & & 
\sigma^{(1)} = \left[ \begin{array}{ccc} 0 & i \\ -i & 0 \end{array} \right]
\end{eqnarray}
Define a new state, denoted $v^{(ideal)}_{IOR}$, by expression 
(\ref{postmsrmtexp}) with the operators $S^{(i)}_k$ replaced by $\sigma^{(i)}$
and the state $\alpha$ replaced by $g$:
\begin{eqnarray}
\label{idealexp}
v^{(ideal)}_{IOR} = 
 \left[ \left( \frac{1}{4} \right) \cdot \sum_{(i_k) , (o_k)}
\left| i_1 i_2 i_3 \right> \otimes \left| o_1 o_2 o_3 \right> \otimes \right.
\hskip1in \\
\nonumber
\left. \left( 
\frac{\mathbb{I} + (-1)^{o_1} \sigma^{(i_1)}}{2} \right) \otimes
\left( 
\frac{\mathbb{I} + (-1)^{o_2} \sigma^{(i_1)}}{2} \right) \otimes
\left( 
\frac{\mathbb{I} + (-1)^{o_3} \sigma^{(i_1)}}{2} \right)  \right] g.
\end{eqnarray}
This is the post-measurement state of the ideal GHZ device
from subsection~\ref{characterizationsubsection}.

Let us compare expressions (\ref{postmsrmtexp}) and
(\ref{idealexp}).
From Proposition~\ref{msrmtapproxprop}, we have
\begin{eqnarray}
\left| \left|  S^{(i_k)}_k - \sigma^{(i_k)} \right| \right|_2^2
\leq 32 \epsilon.
\end{eqnarray}
Therefore,
\begin{eqnarray}
\left| \left|  S^{(i_k)}_k - \sigma^{(i_k)} \right| \right|_\infty
& \leq & \left| \left|  S^{(i_k)}_k - \sigma^{(i_k)} \right| \right|_2 \\
& \leq &
\sqrt{32 \epsilon}.
\end{eqnarray}
The individual projection operators from 
(\ref{postmsrmtexp}) and
(\ref{idealexp}) therefore satisfy
\begin{eqnarray}
\left| \left|  \left( 
\frac{\mathbb{I} + (-1)^{o_k} S_1^{(i_k)}}{2} \right) - \left( 
\frac{\mathbb{I} + (-1)^{o_k} \sigma^{(i_k)}}{2} \right) \right| \right|_\infty
\leq \frac{\sqrt{ 32 \epsilon }}{2} = \sqrt{8 \epsilon}
\end{eqnarray}
Let $M$ and $M^{(ideal)}$ be the operators enclosed in
brackets in (\ref{postmsrmtexp}) and (\ref{idealexp}), respectively.  Applying
Lemma~\ref{infinitynormlemma}, we find that
\begin{eqnarray}
\left| \left| M - M^{(ideal)} \right| \right|_\infty 
\leq 3 \cdot \sqrt{8 \epsilon}.
\end{eqnarray}

By Corollary~\ref{distancecorollary}, there is a unit-length
complex number $\zeta$ such that
\begin{eqnarray}
\left| \left| \alpha - \zeta  g \right| \right|^2 \leq 14 \epsilon.
\end{eqnarray}
Therefore:
\begin{eqnarray*}
\left| \left| v^{(post)}_{IOR} - \zeta \cdot v^{(ideal)}_{IOR} \right| \right|
& = &
\left| \left| M \alpha - M^{(ideal)} (\zeta g) \right| \right| \\
& \leq & \left| \left| M \alpha - M^{(ideal)} \alpha \right| \right| 
+ \left| \left| M^{(ideal)} \alpha - M^{(ideal)} (\zeta g ) \right| \right| \\
& \leq & \left| \left| M - M^{(ideal)} \right| \right|_\infty
+ \left| \left| M \right| \right|_\infty \left| \left| \alpha
- \zeta g \right| \right| \\
& \leq & 3 \cdot \sqrt{8 \epsilon} + \sqrt{14 \epsilon}.
\end{eqnarray*}
Squaring, we find
\begin{eqnarray}
\left| \left| v^{(post)}_{IOR} - \zeta \cdot v^{(ideal)}_{IOR} \right| \right|^2
& \leq  & \left( 3 \sqrt{8} + \sqrt{14} \right)^2 \epsilon \\
& < & 150 \epsilon.
\end{eqnarray}

We state the above inequality as a proposition.

\begin{proposition}
\label{222approxstateprop}
Let $J$ be the device from Proposition~\ref{msrmtapproxprop}.
Suppose that
\begin{eqnarray}
\mathbf{P} \left( \textnormal{$J$ passes the GHZ
test} \right) = 1 - \epsilon.
\end{eqnarray}
Then, there exists a unit-length
complex number $\zeta$ such that
\begin{eqnarray}
\label{150ineq}
\left| \left| v^{(post)}_{IOR} - \zeta \cdot v^{(ideal)}_{IOR} \right| \right|^2
& \leq & 150 \epsilon
\end{eqnarray}
where $v^{(ideal)}_{IOR}$ denotes the post-measurement
state of the ideal GHZ device.  
\end{proposition}

\section{The GHZ Paradox: General Case}

Now we will broaden our focus and consider $3$-part
RE devices that contain arbitrarily large quantum systems.

\subsection{The Post-Measurement State of
an Approximate GHZ Device: General Case}

As in subsection~\ref{postmsrmtsubsection}, if
$R$ is a $3$-qubit system, then we write
$v^{(ideal)}_{IOR}$ for the post-measurement
state of an ideal GHZ device on $R$.  Also,
we will write $\Gamma^{(ideal)}_{IOR}$ for
the corresponding density operator.

\begin{theorem}
\label{closenesstheorem}
Let $D'$ be a $3$-part RE device that is in the canonical form described
in Proposition~\ref{canonicalform}.  Let $F$ be a quantum system
which is entangled with $D'$ in such a way that
the state $\Gamma_{Q' F}$ is pure.
Let $I$ and $O$ denote the input and output registers of $D'$.
\[
\xymatrix{
I_1 \ar[d] & I_2 \ar[d] & I_3 \ar[d] \\
\framebox{ \framebox[0.5in]{$Q'_1$} }_{D'_1} \ar[d] &
\framebox{ \framebox[0.5in]{$Q'_2$} }_{D'_2} \ar[d] &
\framebox{ \framebox[0.5in]{$Q'_3$} }_{D'_3} \ar[d] &
\framebox[0.5in]{$F$}  \\
O_1 & O_2 & O_3
}
\]
Suppose that
\begin{eqnarray}
\mathbf{P} \left( \textnormal{$D'$ satisfies the GHZ
paradox} \right) = 1 - \epsilon.
\end{eqnarray}
Then, there exists a unit vector
\begin{eqnarray}
w \in \mathcal{A} \otimes \mathcal{B} \otimes \mathcal{C} \otimes \mathcal{F}
\end{eqnarray}
such that the inequality
\begin{eqnarray}
\label{vectordifference150}
\left| \left| v^{(post)}_{IOQ'F} - w \otimes v^{(ideal)}_{IOR} \right| \right|^2
\leq 150 \epsilon
\end{eqnarray}
holds.\footnote{Note that in expression~(\ref{vectordifference150}), we have treated
the vector spaces $\mathcal{R}_j$ from the statement of
Proposition~\ref{canonicalform} as the state spaces of a $3$-qubit system.
The expression $v^{(ideal)}_{IOR}$ represents the post-measurement
state of an ideal GHZ device on this system.}
\end{theorem}

\begin{proof}
Choose an orthonormal basis $\{ f_n \}$ for $F$.
Let us write the pre-measurement state $v^{(pre)}_{Q'F}$ as
\begin{eqnarray}
v^{(pre)}_{Q'F} = \sum_{klmn} \left( a_k \otimes b_l \otimes c_m \otimes
f_n \right) \otimes \alpha_{klmn}.
\end{eqnarray}
with $\alpha_{klmn} \in \mathcal{R}_1 \otimes \mathcal{R}_2 \otimes
\mathcal{R}_3$.

For every $4$-tuple $(k, l, m, n)$ which is such that
$\alpha_{klmn} \neq 0$, define $J_{klmn}$ to
be the $3$-part RE device on the system
$(R_1, R_2, R_3)$ whose pre-measurement state
is
\begin{eqnarray}
\frac{ \alpha_{klmn} }{\left| \alpha_{klmn} \right|}
\end{eqnarray}
and whose measurement operators are given by
\begin{eqnarray}
\begin{array}{ccc}
S_1^{(0)} =  \left[ \begin{array}{cc} 0 & 1 \\ 1 & 0 \end{array} \right]  & 
S_2^{(0)} = \left[ \begin{array}{cc} 0 & 1 \\ 1 & 0 \end{array} \right] & 
S_3^{(0)} = \left[ \begin{array}{cc} 0 & 1 \\ 1 & 0 \end{array} \right] \\ \\
S_1^{(1)} =  \left[ \begin{array}{cc} 0 & \lambda_k \\ \overline{\lambda_k} & 0 \end{array} \right]  & 
S_2^{(1)} = \left[ \begin{array}{cc} 0 & \gamma_l \\ \overline{\gamma_l} & 0 \end{array} \right] & 
S_3^{(1)} = \left[ \begin{array}{cc} 0 & \phi_m \\ \overline{\phi_m} & 0 \end{array} \right] \\ 
\end{array}
\end{eqnarray}
For each device $J_{klmn}$, choose a unit-length complex number
$\zeta_{klmn}$ so as to minimize the Euclidean distance between
$\zeta_{klmn} \cdot v^{(ideal)}_{IOR}$ and the post-measurement
state of $J_{klmn}$.  Let
\begin{eqnarray}
w = \sum_{klmn}  \left( a_k \otimes b_l \otimes c_m \otimes
f_n \right) \otimes \zeta_{klmn} \left| \alpha_{klmn} \right|.
\end{eqnarray}

Note that
\begin{eqnarray*}
\mathbf{P} \left( \textnormal{$D'$ passes the GHZ test} \right)
& = & \sum_{klmn} \left| \alpha_{klmn} \right|^2 \mathbf{P}
\left( \textnormal{$J_{klmn}$ passes the GHZ test} \right)
\end{eqnarray*}
Likewise, we can express the quantity
\begin{eqnarray}
\left| \left|   v^{(post)}_{IOQ'F} - w \otimes v^{(ideal)}_{IOR}
\right| \right|^2
\end{eqnarray}
as a weighted sum of the squares of the distances between the vectors
$\zeta_{klmn}~\cdot~v^{(ideal)}_{IOR}$ and the post-measurement
states of $J_{klmn}$.  Inequality (\ref{vectordifference150})
thus follows from Proposition~\ref{222approxstateprop} by linearity.
\end{proof}

Let us consider how Theorem~\ref{closenesstheorem} may be strengthened.
We wish to cover 
the case in which the stated device $D'$
is entangled with another quantum system $F_1$, but the
state $\Gamma^{(pre)}_{Q'F_1}$ is not pure.
In this case, we can simply introduce a second entangled quantum
system $F_2$ such that $\Gamma^{(pre)}_{Q'F_1F_2}$ is pure.  By the
theorem, we find that for some vector $w$ in 
the tensor product of $\mathcal{A}$, $\mathcal{B}$, $\mathcal{C}$,
$\mathcal{F}_1$, and $\mathcal{F}_2$,
\begin{eqnarray}
\left| \left| v^{(post)}_{IOQ'F_1F_2} - w \otimes v^{(ideal)}_{IOR}
 \right| \right|^2 \leq 150 \epsilon.
\end{eqnarray}
We would like a similar expression with the system $F_2$ omitted.
Since we cannot use vector-length to measure distance in this case,
we use the trace-norm instead.  
\begin{eqnarray*}
\left| \left| \Gamma^{(post)}_{IOQ' F_1} - 
\Tr_{\mathcal{F}_2} \left( w w^\dagger \right) \otimes
\Gamma^{(ideal)}_{IOR} \right| \right|^2_1 & \leq &
\left| \left| \Gamma^{(post)}_{IOQ' F_1 F_2} - 
 w w^\dagger \otimes
\Gamma^{(ideal)}_{IOR} \right| \right|^2_1 \\
& \leq & 
\left( 2 \left| \left| v^{(post)}_{IOQ'F_1F_2} - w \otimes v^{(ideal)}_{IOR}
 \right| \right| \right)^2 \\
& \leq & 600 \epsilon.
\end{eqnarray*}

We have the following alternate statement of the theorem.

\begin{theorem}
Let $D'$ be a device which is in the canonical form described in Proposition~\ref{canonicalform}.
Let $F_1$ be a quantum system which may be entangled (in a possibly mixed
state) with $D'$.
Suppose that $D'$ passes the GHZ test with probability $1 - \epsilon$.
Then, there exists a density operator
$\Phi_{ABCF_1}$ on $\mathcal{A} \otimes \mathcal{B}
\otimes \mathcal{C} \otimes \mathcal{F}_1$ such that
\begin{eqnarray}
\label{corollaryinequality}
\left| \left| \Gamma^{(post)}_{IOQ'F_1} - \Phi_{ABCF_1} \otimes 
\Gamma^{(ideal)}_{IOR} \right| \right|_1^2 \leq 600 \epsilon.  
\end{eqnarray}
\end{theorem}

Secondly, we would like to have a version of the theorem
which applies to devices that are not in canonical form. 
To simplify matters, let us drop the quantum system $Q'$
and all of its subsystems from (\ref{corollaryinequality}).
We have:
\begin{eqnarray}
\left| \left| \Gamma^{(post)}_{IOF_1} - \Phi_{F_1} \otimes 
\Gamma^{(ideal)}_{IO} \right| \right|_1^2 \leq 600 \epsilon
\end{eqnarray}
Since a unitary embedding of one entangled $3$-part RE device
into another obviously has no effect on the quantity in this inequality,
we have the following by Proposition~\ref{canonicalform}.

\begin{theorem}
Let $D$ be a $3$-part RE device (not necessarily in canonical form)
which passes the GHZ test with probability $1 - \epsilon$.
Let $E$ be a quantum system which may be entangled with $D$.
Then, there exists a density operator $\Phi_E$ on $\mathcal{E}$
such that
\begin{eqnarray}
\label{modifiedineq}
\left| \left| \Gamma^{(post)}_{IOE} - \Phi_E \otimes 
\Gamma^{(ideal)}_{IO} \right| \right|_1^2 \leq 600 \epsilon. 
\end{eqnarray}
\end{theorem}

Finally, note that inequality (\ref{modifiedineq}) above implies
\begin{eqnarray}
\left| \left| \Gamma_E - \Phi_E \right| \right|^2
\leq 600 \epsilon.
\end{eqnarray}
Therefore,
\begin{eqnarray}
\left| \left| \Gamma_E \otimes 
\Gamma^{(ideal)}_{IO} -
\Phi_E \otimes 
\Gamma^{(ideal)}_{IO} \right| \right|^2_1 \leq 600 \epsilon.
\end{eqnarray}
Applying the triangle inequality for $\left| \left| \cdot \right| \right|_1$, we find
that
\begin{eqnarray}
\left| \left| \Gamma^{(post)}_{IOE} - \Gamma_E \otimes 
\Gamma^{(ideal)}_{IO} \right| \right|_1^2 \leq 2400 \epsilon. 
\end{eqnarray}
We therefore have the following final version of our theorem.
\begin{theorem}
\label{finaltheorem}
Let $D$ be a $3$-part RE device
which passes the GHZ test with probability $1 - \epsilon$.
Let $E$ be a quantum system which may be entangled with $D$.
Then,
\begin{eqnarray}
\left| \left| \Gamma^{(post)}_{IOE} - \Gamma_E \otimes 
\Gamma^{(ideal)}_{IO} \right| \right|_1^2 \leq 2400 \epsilon. 
\end{eqnarray}
\end{theorem}


\begin{thebibliography}{19}
\expandafter\ifx\csname natexlab\endcsname\relax\def\natexlab#1{#1}\fi
\expandafter\ifx\csname bibnamefont\endcsname\relax
  \def\bibnamefont#1{#1}\fi
\expandafter\ifx\csname bibfnamefont\endcsname\relax
  \def\bibfnamefont#1{#1}\fi
\expandafter\ifx\csname citenamefont\endcsname\relax
  \def\citenamefont#1{#1}\fi
\expandafter\ifx\csname url\endcsname\relax
  \def\url#1{\texttt{#1}}\fi
\expandafter\ifx\csname urlprefix\endcsname\relax\def\urlprefix{URL }\fi
\providecommand{\bibinfo}[2]{#2}
\providecommand{\eprint}[2][]{\url{#2}}

\bibitem[{\citenamefont{Mayers and Yao}(1998)}]{my:1998}
\bibinfo{author}{\bibfnamefont{D.}~\bibnamefont{Mayers}} \bibnamefont{and}
  \bibinfo{author}{\bibfnamefont{A.}~\bibnamefont{Yao}},
  \bibinfo{journal}{Proceedings of the 39th Annual Symposium on Foundations of
  Computer Science} pp. \bibinfo{pages}{503--509} (\bibinfo{year}{1998}).

\bibitem[{\citenamefont{Ac{\'i}n et~al.}(2007)\citenamefont{Ac{\'i}n, Brunner,
  Gisin, Massar, Pironio, and Scarani}}]{AcinB:2007}
\bibinfo{author}{\bibfnamefont{A.}~\bibnamefont{Ac{\'i}n}},
  \bibinfo{author}{\bibfnamefont{N.}~\bibnamefont{Brunner}},
  \bibinfo{author}{\bibfnamefont{N.}~\bibnamefont{Gisin}},
  \bibinfo{author}{\bibfnamefont{S.}~\bibnamefont{Massar}},
  \bibinfo{author}{\bibfnamefont{S.}~\bibnamefont{Pironio}}, \bibnamefont{and}
  \bibinfo{author}{\bibfnamefont{V.}~\bibnamefont{Scarani}},
  \bibinfo{journal}{Phys. Rev. Lett.} \textbf{\bibinfo{volume}{98}},
  \bibinfo{pages}{230501} (\bibinfo{year}{2007}),
  \urlprefix\url{http://link.aps.org/doi/10.1103/PhysRevLett.98.230501}.

\bibitem[{\citenamefont{Pironio et~al.}(2009)\citenamefont{Pironio, Ac{\'i}n,
  Brunner, Gisin, Massar, and Scarani}}]{PironioA:2009}
\bibinfo{author}{\bibfnamefont{S.}~\bibnamefont{Pironio}},
  \bibinfo{author}{\bibfnamefont{A.}~\bibnamefont{Ac{\'i}n}},
  \bibinfo{author}{\bibfnamefont{N.}~\bibnamefont{Brunner}},
  \bibinfo{author}{\bibfnamefont{N.}~\bibnamefont{Gisin}},
  \bibinfo{author}{\bibfnamefont{S.}~\bibnamefont{Massar}}, \bibnamefont{and}
  \bibinfo{author}{\bibfnamefont{V.}~\bibnamefont{Scarani}},
  \bibinfo{journal}{New Journal of Physics} \textbf{\bibinfo{volume}{11}},
  \bibinfo{pages}{045021} (\bibinfo{year}{2009}),
  \urlprefix\url{http://stacks.iop.org/1367-2630/11/i=4/a=045021}.

\bibitem[{\citenamefont{McKague}(2009)}]{McKague:2009}
\bibinfo{author}{\bibfnamefont{M.}~\bibnamefont{McKague}},
  \bibinfo{journal}{New Journal of Physics} \textbf{\bibinfo{volume}{11}},
  \bibinfo{pages}{103037} (\bibinfo{year}{2009}),
  \urlprefix\url{http://stacks.iop.org/1367-2630/11/i=10/a=103037}.

\bibitem[{\citenamefont{Pironio et~al.}(2010)\citenamefont{Pironio, Ac{\'i}n,
  Massar, Boyer de~la Giroday, Matsukevich, Maunz, Olmschenk, Hayes, Luo,
  Manning et~al.}}]{pam:2010}
\bibinfo{author}{\bibfnamefont{S.}~\bibnamefont{Pironio}},
  \bibinfo{author}{\bibfnamefont{A.}~\bibnamefont{Ac{\'i}n}},
  \bibinfo{author}{\bibfnamefont{S.}~\bibnamefont{Massar}},
  \bibinfo{author}{\bibfnamefont{A.}~\bibnamefont{Boyer de~la Giroday}},
  \bibinfo{author}{\bibfnamefont{D.~N.} \bibnamefont{Matsukevich}},
  \bibinfo{author}{\bibfnamefont{P.}~\bibnamefont{Maunz}},
  \bibinfo{author}{\bibfnamefont{S.}~\bibnamefont{Olmschenk}},
  \bibinfo{author}{\bibfnamefont{D.}~\bibnamefont{Hayes}},
  \bibinfo{author}{\bibfnamefont{L.}~\bibnamefont{Luo}},
  \bibinfo{author}{\bibfnamefont{T.~A.} \bibnamefont{Manning}},
  \bibnamefont{et~al.}, \bibinfo{journal}{Nature}
  \textbf{\bibinfo{volume}{464}}, \bibinfo{pages}{1021} (\bibinfo{year}{2010}).

\bibitem[{\citenamefont{H{\"a}nggi}(2010)}]{Hanggi:2010}
\bibinfo{author}{\bibfnamefont{E.}~\bibnamefont{H{\"a}nggi}}, Ph.D. thesis,
  \bibinfo{school}{ETH Zurich} (\bibinfo{year}{2010}),
  \bibinfo{note}{arXiv.org:1012.3878}.

\bibitem[{\citenamefont{Colbeck and Kent}(2011)}]{ColbeckK:2011}
\bibinfo{author}{\bibfnamefont{R.}~\bibnamefont{Colbeck}} \bibnamefont{and}
  \bibinfo{author}{\bibfnamefont{A.}~\bibnamefont{Kent}},
  \bibinfo{journal}{Journal of Physics A: Mathematical and Theoretical}
  \textbf{\bibinfo{volume}{44}}, \bibinfo{pages}{095305}
  (\bibinfo{year}{2011}),
  \urlprefix\url{http://stacks.iop.org/1751-8121/44/i=9/a=095305}.

\bibitem[{\citenamefont{Vazirani and Vidick}(2012)}]{VaziraniV:2012}
\bibinfo{author}{\bibfnamefont{U.}~\bibnamefont{Vazirani}} \bibnamefont{and}
  \bibinfo{author}{\bibfnamefont{T.}~\bibnamefont{Vidick}},
  \bibinfo{journal}{Phil. Trans. R. Soc. A} \textbf{\bibinfo{volume}{370}},
  \bibinfo{pages}{3432} (\bibinfo{year}{2012}).

\bibitem[{\citenamefont{Popescu and Rohrlich}(1992)}]{pr:1992}
\bibinfo{author}{\bibfnamefont{S.}~\bibnamefont{Popescu}} \bibnamefont{and}
  \bibinfo{author}{\bibfnamefont{D.}~\bibnamefont{Rohrlich}},
  \bibinfo{journal}{Physical Letters A} \textbf{\bibinfo{volume}{169}},
  \bibinfo{pages}{411} (\bibinfo{year}{1992}).

\bibitem[{\citenamefont{Clauser et~al.}(1969)\citenamefont{Clauser, Horne,
  Shimony, and Holt}}]{CHSH}
\bibinfo{author}{\bibfnamefont{J.~F.} \bibnamefont{Clauser}},
  \bibinfo{author}{\bibfnamefont{M.~A.} \bibnamefont{Horne}},
  \bibinfo{author}{\bibfnamefont{A.}~\bibnamefont{Shimony}}, \bibnamefont{and}
  \bibinfo{author}{\bibfnamefont{R.~A.} \bibnamefont{Holt}},
  \bibinfo{journal}{Phys. Rev. Lett.} \textbf{\bibinfo{volume}{23}},
  \bibinfo{pages}{880} (\bibinfo{year}{1969}).

\bibitem[{\citenamefont{Colbeck}(2006)}]{col:2006}
\bibinfo{author}{\bibfnamefont{R.}~\bibnamefont{Colbeck}},
  \emph{\bibinfo{title}{Quantum and relativistic protocols for secure
  multi-party computation}} (\bibinfo{year}{2006}), \bibinfo{note}{{P}h.~D.
  thesis, University of Cambridge}.

\bibitem[{\citenamefont{Mayers and Yao}(2004)}]{my:2004}
\bibinfo{author}{\bibfnamefont{D.}~\bibnamefont{Mayers}} \bibnamefont{and}
  \bibinfo{author}{\bibfnamefont{A.}~\bibnamefont{Yao}},
  \bibinfo{journal}{Quantum Information \& Computation}
  \textbf{\bibinfo{volume}{4}}, \bibinfo{pages}{273} (\bibinfo{year}{2004}).

\bibitem[{\citenamefont{Magniez et~al.}(2006)\citenamefont{Magniez, Mayers,
  Mosca, and Ollivier}}]{MagniezM:2006}
\bibinfo{author}{\bibfnamefont{F.}~\bibnamefont{Magniez}},
  \bibinfo{author}{\bibfnamefont{D.}~\bibnamefont{Mayers}},
  \bibinfo{author}{\bibfnamefont{M.}~\bibnamefont{Mosca}}, \bibnamefont{and}
  \bibinfo{author}{\bibfnamefont{H.}~\bibnamefont{Ollivier}}, in
  \emph{\bibinfo{booktitle}{Automata, Languages and Programming, 33rd
  International Colloquium, {ICALP} 2006, Venice, Italy, July 10-14, 2006,
  Proceedings, Part {I}}}, edited by
  \bibinfo{editor}{\bibfnamefont{M.}~\bibnamefont{Bugliesi}},
  \bibinfo{editor}{\bibfnamefont{B.}~\bibnamefont{Preneel}},
  \bibinfo{editor}{\bibfnamefont{V.}~\bibnamefont{Sassone}}, \bibnamefont{and}
  \bibinfo{editor}{\bibfnamefont{I.}~\bibnamefont{Wegener}}
  (\bibinfo{publisher}{Springer}, \bibinfo{year}{2006}), vol.
  \bibinfo{volume}{4051} of \emph{\bibinfo{series}{Lecture Notes in Computer
  Science}}, pp. \bibinfo{pages}{72--83}, ISBN \bibinfo{isbn}{3-540-35904-4},
  \urlprefix\url{http://dx.doi.org/10.1007/11786986_8}.

\bibitem[{\citenamefont{McKague}(2010)}]{mck:2010}
\bibinfo{author}{\bibfnamefont{M.}~\bibnamefont{McKague}},
  \bibinfo{journal}{arXiv:1010.1989}  (\bibinfo{year}{2010}).

\bibitem[{\citenamefont{Miller and Shi}(2011)}]{millershi:2011}
\bibinfo{author}{\bibfnamefont{C.}~\bibnamefont{Miller}} \bibnamefont{and}
  \bibinfo{author}{\bibfnamefont{Y.}~\bibnamefont{Shi}},
  \emph{\bibinfo{title}{Randomness expansion from the
  {G}reenberger-{H}orne-{Z}eilinger paradox}} (\bibinfo{year}{2011}),
  \bibinfo{note}{attached to this manuscript}.

\bibitem[{\citenamefont{McKague et~al.}(2012)\citenamefont{McKague, Yang, and
  Scarani}}]{mys:2012}
\bibinfo{author}{\bibfnamefont{M.}~\bibnamefont{McKague}},
  \bibinfo{author}{\bibfnamefont{T.~Z.} \bibnamefont{Yang}}, \bibnamefont{and}
  \bibinfo{author}{\bibfnamefont{V.}~\bibnamefont{Scarani}},
  \bibinfo{journal}{arXiv:1203.2976v1}  (\bibinfo{year}{2012}).

\bibitem[{\citenamefont{Reichardt et~al.}(2012)\citenamefont{Reichardt, Unger,
  and Vazirani}}]{ruv:2012}
\bibinfo{author}{\bibfnamefont{B.}~\bibnamefont{Reichardt}},
  \bibinfo{author}{\bibfnamefont{F.}~\bibnamefont{Unger}}, \bibnamefont{and}
  \bibinfo{author}{\bibfnamefont{U.}~\bibnamefont{Vazirani}},
  \bibinfo{journal}{arXiv:1209.0448}  (\bibinfo{year}{2012}).

\bibitem[{\citenamefont{Ac{\'i}n et~al.}(2012)\citenamefont{Ac{\'i}n, Massar,
  and Pironio}}]{amp:2012}
\bibinfo{author}{\bibfnamefont{A.}~\bibnamefont{Ac{\'i}n}},
  \bibinfo{author}{\bibfnamefont{S.}~\bibnamefont{Massar}}, \bibnamefont{and}
  \bibinfo{author}{\bibfnamefont{S.}~\bibnamefont{Pironio}},
  \bibinfo{journal}{Physical Review Letters} \textbf{\bibinfo{volume}{108}},
  \bibinfo{pages}{100402} (\bibinfo{year}{2012}).

\bibitem[{\citenamefont{Werner and Wolf}(2001)}]{ww:2001}
\bibinfo{author}{\bibfnamefont{R.}~\bibnamefont{Werner}} \bibnamefont{and}
  \bibinfo{author}{\bibfnamefont{M.}~\bibnamefont{Wolf}},
  \bibinfo{journal}{Physical Review A} \textbf{\bibinfo{volume}{64}}
  (\bibinfo{year}{2001}).

\end{thebibliography}
\end{document}


\title{Supplementary Information}

\maketitle

\section{Lemmas for multivariable functions}

In this section we prove some lemmas that will be used in later sections.
Each of these lemmas is concerned with the relationship between maxima 
and near-maxima of real-valued functions.

For any twice-differentiable function $F \colon \mathbb{R}^n \to \mathbb{R}$
and any $\mathbf{c} = (c_1 , \ldots, c_n ) \in \mathbb{R}^n$, let $\Hess_\mathbf{c}
F$ denote the Hessian matrix,
\begin{eqnarray*}
\Hess_\mathbf{c}
F & := &
\left[ \begin{array}{ccccc}
\frac{\partial^2 }{\partial z_1^2} F &
\frac{\partial}{\partial z_1} \frac{\partial}{ \partial z_2} F & 
\ldots & 
\frac{\partial}{\partial z_1} \frac{\partial}{ \partial z_n} F  \\
\frac{\partial}{\partial z_2} \frac{\partial}{ \partial z_1} F & 
\frac{\partial^2}{\partial z_2^2} F  &
\ldots & 
\frac{\partial}{\partial z_2} \frac{\partial}{ \partial z_n} F  \\
 \vdots & \vdots & \ddots & \vdots \\
\frac{\partial}{\partial z_n} \frac{\partial}{\partial z_1} F &
\frac{\partial}{\partial z_n} \frac{\partial}{\partial z_2} F &
\ldots & 
\frac{\partial^2 }{\partial z_n^2}  F
\end{array} \right] ( c_1, \ldots, c_n).
\end{eqnarray*}

\begin{lemma}
\label{2normlemma}
Let $G \colon \mathbb{R}^n \to \mathbb{R}$ be a twice-differentiable
function.  Let $\mathbf{x} = (x_1, \ldots, x_n ) \in \mathbb{R}^n$
be a local maximum of $G$ which is such that the Hessian matrix
$\Hess_{\mathbf{x}} ( G )$
is negative definite.  Then, there exist constants
$\delta_1, C_1 > 0$ such that for all $\mathbf{y} \in \mathbb{R}^n$
with $\left\| \mathbf{y} - \mathbf{x} \right\|
< \delta_1$,
\begin{eqnarray}
\left\| \mathbf{y} - \mathbf{x} \right\| \leq
C_1 \sqrt{ G ( \mathbf{x} ) -
G ( \mathbf{y} )}
\end{eqnarray}
\end{lemma}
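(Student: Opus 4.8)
The plan is to reduce the statement to a second-order Taylor estimate at the critical point $\mathbf{x}$. Since $\mathbf{x}$ is a local maximum of the differentiable function $G$, we have $\nabla G(\mathbf{x}) = \mathbf{0}$, so the Peano form of Taylor's theorem — valid because $G$ is twice differentiable at $\mathbf{x}$ — gives
\[
G(\mathbf{y}) = G(\mathbf{x}) + \tfrac12 (\mathbf{y}-\mathbf{x})^{T} \Hess_{\mathbf{x}}(G)\, (\mathbf{y}-\mathbf{x}) + R(\mathbf{y}),
\]
where $R(\mathbf{y}) / \left\| \mathbf{y}-\mathbf{x} \right\|^2 \to 0$ as $\mathbf{y} \to \mathbf{x}$.

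First I would extract from the negative definiteness of $\Hess_{\mathbf{x}}(G)$ a constant $\lambda > 0$ — for instance minus the largest eigenvalue of this symmetric matrix — such that $\mathbf{v}^{T} \Hess_{\mathbf{x}}(G)\, \mathbf{v} \le -\lambda \left\| \mathbf{v} \right\|^2$ for every $\mathbf{v} \in \mathbb{R}^n$. Combining this with the Taylor formula,
\[
G(\mathbf{x}) - G(\mathbf{y}) \ge \tfrac{\lambda}{2}\left\| \mathbf{y}-\mathbf{x} \right\|^2 - |R(\mathbf{y})|.
\]
Next I would use the definition of the $o(\left\| \mathbf{y}-\mathbf{x} \right\|^2)$ term to choose $\delta_1 > 0$ small enough that $|R(\mathbf{y})| \le \tfrac{\lambda}{4}\left\| \mathbf{y}-\mathbf{x} \right\|^2$ whenever $\left\| \mathbf{y}-\mathbf{x} \right\| < \delta_1$; for such $\mathbf{y}$ this yields $G(\mathbf{x}) - G(\mathbf{y}) \ge \tfrac{\lambda}{4}\left\| \mathbf{y}-\mathbf{x} \right\|^2 \ge 0$, so in particular the square root on the right-hand side of the claimed inequality is well defined. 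Taking square roots and setting $C_1 = 2/\sqrt{\lambda}$ gives the bound, the case $\mathbf{y} = \mathbf{x}$ being trivial.

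The only genuine subtlety — and the step I would be most careful about — is the appeal to Taylor's theorem under the weak hypothesis of mere twice-differentiability at the single point $\mathbf{x}$, rather than $C^2$ on a neighbourhood; the Peano-remainder version is exactly what the hypothesis supplies, and it suffices here because only the estimate on a suitably small ball is needed, not a uniform second-order bound. Everything else is elementary: the eigenvalue inequality is standard linear algebra, and the passage from the quadratic lower bound on $G(\mathbf{x}) - G(\mathbf{y})$ to the stated inequality is a one-line manipulation.
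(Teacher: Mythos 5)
Your proof is correct, and at bottom it runs on the same engine as the paper's: a second-order estimate at the critical point $\mathbf{x}$ exploiting the negative-definite Hessian. The packaging differs, though. The paper perturbs $G$, setting $F(\mathbf{y}) = G(\mathbf{y}) + \left|\lambda/2\right| \left\|\mathbf{y}-\mathbf{x}\right\|^2$ with $\lambda$ the eigenvalue of $\Hess_{\mathbf{x}}(G)$ closest to zero, argues that $\mathbf{x}$ is still a local maximum of $F$ (i.e.\ invokes the second-derivative test for $F$, which is exactly the Taylor--Peano argument you wrote out), and reads the desired inequality off from $F(\mathbf{y}) \leq F(\mathbf{x})$ near $\mathbf{x}$. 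You instead expand $G$ directly, bound the quadratic form by $-\lambda\left\|\mathbf{v}\right\|^2$ with $-\lambda$ the largest eigenvalue, and absorb the $o(\left\|\mathbf{y}-\mathbf{x}\right\|^2)$ remainder by shrinking $\delta_1$. Your route is a little longer on the page but more careful with constants: as literally written, the paper's Hessian computation drops a factor of $2$ (the Hessian of $\left|\lambda/2\right|\left\|\mathbf{y}-\mathbf{x}\right\|^2$ is $\left|\lambda\right|\mathbb{I}_n$, so with $\lambda$ the eigenvalue closest to zero the perturbed Hessian is only negative semidefinite, and the final constant should be $\left|\lambda/2\right|^{-1/2}$ rather than $\left|\lambda/2\right|^{-1}$), small defects that your explicit choice of $\lambda$, the $\lambda/4$ remainder threshold, and $C_1 = 2/\sqrt{\lambda}$ avoid. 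Your closing remark is also apt: the Peano-remainder form of Taylor's theorem is precisely what twice-differentiability at $\mathbf{x}$ supplies, and the paper needs the same fact implicitly when it passes from a negative-definite Hessian at a critical point to a local maximum.
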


\begin{proof}
Let $\lambda$ be the eigenvalue of $\Hess_{\mathbf{x}} ( G )$
which is closest to zero.
Define a function $F \colon \mathbb{R}^n \to \mathbb{R}$ by
\begin{eqnarray}
F ( \mathbf{y} ) & = & G ( \mathbf{y} ) + \left| \lambda/2
\right| \cdot \left\| \mathbf{y}
 - \mathbf{x} \right\|^2.
\end{eqnarray}
Then, the Hessian of $F$ at $\mathbf{x}$ is
\begin{eqnarray}
\Hess_{\mathbf{x}} ( G ) + \left| \lambda/2 \right| \mathbb{I}_n,
\end{eqnarray}
which is negative definite.  Therefore, $\mathbf{x}$ is a local maximum
of $F$.  So, there exists a constant $\delta_1$ such that
for all $\mathbf{y} \in \mathbb{R}^n$ with $\left\| \mathbf{y}
- \mathbf{x} \right\| \leq \delta_0$,
\begin{eqnarray}
G ( \mathbf{y} ) + \left| \lambda/2 \right| \left\| \mathbf{y}
- \mathbf{x} \right\|^2 \leq G ( \mathbf{x} ).
\end{eqnarray}
By an algebraic manipulation, the above inequality is equivalent
to
\begin{eqnarray}
\left\| \mathbf{y} - \mathbf{x} \right\| \leq
\left| \lambda/2 \right|^{-1} \sqrt{ G ( \mathbf{x} )
- G ( \mathbf{y} ) }.
\end{eqnarray}
This completes the proof.
\end{proof}

For any vector $\mathbf{y} = (y_1, \ldots, y_n ) \in \mathbb{R}^n$, let
$\left\| \mathbf{y} \right\|_\infty = \max_i \left| y_i \right|$.  Note that
\begin{eqnarray}
\left\| \mathbf{y} \right\|_\infty \leq \left\| \mathbf{y} \right\| \leq
\sqrt{n} \left\| \mathbf{y} \right\|_\infty.
\end{eqnarray}
The following modification of Lemma~\ref{2normlemma} follows immediately.
\begin{lemma}
\label{infnormlemma}
Let $G \colon \mathbb{R}^n \to \mathbb{R}$ be a twice-differentiable
function.  Let $\mathbf{x} = (x_1, \ldots, x_n ) \in \mathbb{R}^n$
be a local maximum of $G$ which is such that the Hessian matrix
$\Hess_{\mathbf{x}} ( G )$
is negative definite.  Then, there exist constants
$\delta_2, C_2 > 0$ such that for all $\mathbf{y} \in \mathbb{R}^n$
with $\left\| \mathbf{y} - \mathbf{x} \right\|_\infty
< \delta_2$,
\begin{eqnarray}
\left\| \mathbf{y} - \mathbf{x} \right\|_\infty \leq
C_2 \sqrt{ G ( \mathbf{x} ) -
G ( \mathbf{y} )}.  \qed
\end{eqnarray}
\end{lemma}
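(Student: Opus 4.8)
The plan is to deduce Lemma~\ref{infnormlemma} directly from Lemma~\ref{2normlemma} by exploiting the equivalence of the $\ell^2$ and $\ell^\infty$ norms recorded just above it, namely $\left\| \mathbf{y} \right\|_\infty \leq \left\| \mathbf{y} \right\| \leq \sqrt{n}\,\left\| \mathbf{y} \right\|_\infty$. First I would invoke Lemma~\ref{2normlemma}, which applies since $G$ and $\mathbf{x}$ satisfy its hypotheses verbatim, to obtain constants $\delta_1, C_1 > 0$ such that $\left\| \mathbf{y} - \mathbf{x} \right\| \leq C_1 \sqrt{ G(\mathbf{x}) - G(\mathbf{y}) }$ whenever $\left\| \mathbf{y} - \mathbf{x} \right\| < \delta_1$.

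Next I would set $\delta_2 := \delta_1 / \sqrt{n}$ and $C_2 := C_1$. Suppose $\mathbf{y} \in \mathbb{R}^n$ satisfies $\left\| \mathbf{y} - \mathbf{x} \right\|_\infty < \delta_2$. Then the right-hand inequality of the norm equivalence gives $\left\| \mathbf{y} - \mathbf{x} \right\| \leq \sqrt{n}\,\left\| \mathbf{y} - \mathbf{x} \right\|_\infty < \sqrt{n}\,\delta_2 = \delta_1$, so $\mathbf{y}$ lies in the region where the conclusion of Lemma~\ref{2normlemma} is valid. Combining this with the left-hand inequality of the norm equivalence yields
\begin{eqnarray}
\left\| \mathbf{y} - \mathbf{x} \right\|_\infty & \leq & \left\| \mathbf{y} - \mathbf{x} \right\| \;\leq\; C_1 \sqrt{ G(\mathbf{x}) - G(\mathbf{y}) } \;=\; C_2 \sqrt{ G(\mathbf{x}) - G(\mathbf{y}) },
\end{eqnarray}
which is exactly the asserted inequality.

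There is essentially no obstacle here: the only point requiring care is to choose $\delta_2$ small enough (by the factor $1/\sqrt{n}$) that the $\ell^\infty$-ball of radius $\delta_2$ about $\mathbf{x}$ is contained in the $\ell^2$-ball of radius $\delta_1$ on which Lemma~\ref{2normlemma} has already been established. No new analytic input — no further appeal to the Hessian or to twice-differentiability — is needed beyond what Lemma~\ref{2normlemma} already supplies, which is why the statement follows immediately.
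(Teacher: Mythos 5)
Your proof is correct and is exactly the argument the paper intends: the paper states that Lemma~\ref{infnormlemma} ``follows immediately'' from Lemma~\ref{2normlemma} via the norm equivalence $\left\| \cdot \right\|_\infty \leq \left\| \cdot \right\| \leq \sqrt{n} \left\| \cdot \right\|_\infty$, and your choice $\delta_2 = \delta_1/\sqrt{n}$, $C_2 = C_1$ fills in the routine details correctly.
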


\begin{lemma}
\label{nearmaximalemma}
Let $G \colon \mathbb{R}^n \to \mathbb{R}$ be a twice-differentiable
function.  Suppose that there is a region in $U \in \mathbb{R}^n$ of
the form
\begin{eqnarray}
U = [a_1, b_1 ] \times [a_2 , b_2 ] \times \ldots \times
[a_n , b_n ]
\end{eqnarray}
such that $G$ has a single global maximum $\mathbf{z}$
in $U$.  Suppose
also that $\Hess_{\mathbf{z}} ( G )$ is negative definite.  Then,
there exists a constant $C_3 > 0$ such that the following holds:
for any $\mathbf{y} \in U$,
\begin{eqnarray}
\label{keyinequality}
\left\| \mathbf{y} - \mathbf{z} \right\|_\infty
\leq C_3 \sqrt{ G ( \mathbf{y} ) - G ( \mathbf{z} )
}.
\end{eqnarray}
\end{lemma}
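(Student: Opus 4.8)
The plan is to patch together two estimates: the \emph{local} bound already supplied by Lemma~\ref{infnormlemma} in a neighbourhood of $\mathbf{z}$, and a crude \emph{compactness} bound on the rest of $U$. A preliminary remark on the statement: since $\mathbf{z}$ is the global maximum of $G$ on $U$, the quantity $G(\mathbf{y})-G(\mathbf{z})$ under the root in~(\ref{keyinequality}) is $\le 0$, so I read the right-hand side as $C_3\sqrt{G(\mathbf{z})-G(\mathbf{y})}$, the nonnegative root, which is what makes the inequality meaningful.

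First I would treat points near $\mathbf{z}$. Because $\mathbf{z}$ is the global maximum of $G$ on $U$ and, being interior to $U$, is therefore a local maximum of $G$ on $\mathbb{R}^n$, with $\Hess_{\mathbf{z}}(G)$ negative definite, Lemma~\ref{infnormlemma} furnishes constants $\delta_2,C_2>0$ such that $\|\mathbf{y}-\mathbf{z}\|_\infty\le C_2\sqrt{G(\mathbf{z})-G(\mathbf{y})}$ for every $\mathbf{y}$ with $\|\mathbf{y}-\mathbf{z}\|_\infty<\delta_2$. Next I would treat the complement: set $K=\{\mathbf{y}\in U:\|\mathbf{y}-\mathbf{z}\|_\infty\ge\delta_2\}$. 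If $K=\varnothing$ we are done with $C_3=C_2$; otherwise $K$ is a closed subset of the compact box $U$, hence compact, and $\mathbf{z}\notin K$. By continuity $G$ attains a maximum $M$ on $K$, and since $\mathbf{z}$ is the \emph{unique} global maximum of $G$ on $U$ we get $M<G(\mathbf{z})$, so that $\varepsilon:=G(\mathbf{z})-M>0$ and $G(\mathbf{z})-G(\mathbf{y})\ge\varepsilon$ throughout $K$. Writing $D:=\sup_{\mathbf{y}\in U}\|\mathbf{y}-\mathbf{z}\|_\infty<\infty$ (finite since $U$ is bounded), for $\mathbf{y}\in K$ we have
\[
\|\mathbf{y}-\mathbf{z}\|_\infty\;\le\;D\;=\;\frac{D}{\sqrt{\varepsilon}}\cdot\sqrt{\varepsilon}\;\le\;\frac{D}{\sqrt{\varepsilon}}\sqrt{G(\mathbf{z})-G(\mathbf{y})}.
\]
Taking $C_3:=\max\{C_2,\,D/\sqrt{\varepsilon}\}$ and splitting an arbitrary $\mathbf{y}\in U$ according to whether $\|\mathbf{y}-\mathbf{z}\|_\infty<\delta_2$ or not then yields~(\ref{keyinequality}).

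The only genuinely delicate point is the hypothesis needed to invoke Lemma~\ref{infnormlemma}: that lemma requires $\mathbf{z}$ to be a local maximum of $G$ on all of $\mathbb{R}^n$, which is automatic only when $\mathbf{z}$ lies in the interior of $U$. If one must also allow $\mathbf{z}\in\partial U$, I would replace the appeal to Lemma~\ref{infnormlemma} by a direct local estimate: the first-order optimality condition for a maximum over the box gives $\nabla G(\mathbf{z})\cdot(\mathbf{y}-\mathbf{z})\le 0$ for $\mathbf{y}\in U$ near $\mathbf{z}$, and Taylor's theorem together with negative definiteness of $\Hess_{\mathbf{z}}(G)$ then gives $G(\mathbf{z})-G(\mathbf{y})\ge c\|\mathbf{y}-\mathbf{z}\|^2+o(\|\mathbf{y}-\mathbf{z}\|^2)$ for some $c>0$, hence $G(\mathbf{z})-G(\mathbf{y})\ge\tfrac{c}{2}\|\mathbf{y}-\mathbf{z}\|^2$ for $\mathbf{y}\in U$ sufficiently close to $\mathbf{z}$; converting between $\|\cdot\|$ and $\|\cdot\|_\infty$ finishes the local step, after which the rest of the argument is unchanged. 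Everything else — compactness of $K$, the strict inequality $M<G(\mathbf{z})$ coming from uniqueness of the maximum, and finiteness of $D$ — is routine.
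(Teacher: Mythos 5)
Your proof is correct and follows essentially the same route as the paper's: the local bound from Lemma~\ref{infnormlemma} near $\mathbf{z}$, combined with a compactness bound on $\{\mathbf{y}\in U : \|\mathbf{y}-\mathbf{z}\|_\infty \ge \delta_2\}$ using that the maximum of $G$ there is strictly below $G(\mathbf{z})$, with $C_3$ taken as the maximum of the two resulting constants. Your extra remarks --- reading the right-hand side as $\sqrt{G(\mathbf{z})-G(\mathbf{y})}$ (a sign typo in the statement) and giving a direct Taylor-expansion substitute for Lemma~\ref{infnormlemma} when $\mathbf{z}$ lies on $\partial U$ --- go beyond what the paper writes down but do not change the approach.
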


\begin{proof}
Choose $C_2, \delta_2 > 0$ according to Lemma~\ref{infnormlemma}.
Let $r$ be the maximum value achieved by $G$ on the compact
set
\begin{eqnarray}
\left\{ \mathbf{y} \mid \mathbf{y} \in U, 
\left\| \mathbf{y} - \mathbf{z} \right\|_\infty \geq \delta_2 \right\}.
\end{eqnarray}
Note that this quantity is strictly less than $G ( \mathbf{z} )$.
Now, simply let
\begin{eqnarray}
C_3 = \max \left\{ C_2 , \frac{ \max \{ b_i - a_i \}_i } {\sqrt{G ( 
\mathbf{z})  - r} } \right\}.
\end{eqnarray}
Inequality (\ref{keyinequality}) is trivially satisfied when
$\left\| \mathbf{y} - \mathbf{x} \right\|_\infty \geq
\delta_2$, and by Lemma~\ref{infnormlemma} it is also satisfied
when $\left\| \mathbf{y} - \mathbf{x} \right\|_\infty <
\delta_2$.
\end{proof}

\begin{lemma}
\label{quadformlemma}
Let $H$ be an $m \times m$ real symmetric matrix, and let
\begin{eqnarray}
Q ( \mathbf{x} ) & = & \mathbf{x}^\top H \mathbf{x}
\end{eqnarray}
denote the corresponding real quadratic form.  Suppose
that $H$ has multiple eigenvalues, and let
$h_1 > h_2 > \ldots > h_r$ denote the eigenvalues
of the matrix $H$ taken in decreasing order.  
Then, for any unit vector $\mathbf{y} \in \mathbb{R}^m$,
there exists a unit vector $\mathbf{z} \in \mathbb{R}^m$
such that $Q ( \mathbf{z} ) = h_1$ and
\begin{eqnarray}
\left\| \mathbf{z} - \mathbf{y} \right\|
\leq \sqrt{\frac{2 \left( Q( \mathbf{z} )
- Q( \mathbf{y}) \right)}{h_1 - h_2}}.
\end{eqnarray}
\end{lemma}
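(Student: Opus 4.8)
The plan is to pass to an orthonormal eigenbasis of $H$ and reduce the whole statement to a one‑parameter estimate. Let $E_1, \ldots, E_r \subseteq \mathbb{R}^m$ be the eigenspaces of $H$ for the eigenvalues $h_1, \ldots, h_r$; since $H$ is real symmetric these are mutually orthogonal and span $\mathbb{R}^m$. Given the unit vector $\mathbf{y}$, write $\mathbf{y} = \mathbf{y}_1 + \cdots + \mathbf{y}_r$ with $\mathbf{y}_i \in E_i$, so that $\sum_i \| \mathbf{y}_i \|^2 = 1$ and $Q(\mathbf{y}) = \sum_i h_i \| \mathbf{y}_i \|^2$. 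Put $\alpha = \| \mathbf{y}_1 \| \in [0,1]$. If $\alpha > 0$, define $\mathbf{z} = \mathbf{y}_1 / \alpha$, a unit vector lying in $E_1$ and therefore satisfying $Q(\mathbf{z}) = h_1$; if $\alpha = 0$, let $\mathbf{z}$ be any unit vector in $E_1$, which exists because $r \geq 2$ forces $E_1 \neq \{0\}$, and note that the estimates below hold verbatim with $\alpha = 0$.

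Next I would compute both sides of the claimed inequality in this decomposition. For the left-hand side, observe that $\mathbf{z} - \mathbf{y} = \left( \tfrac1\alpha - 1 \right) \mathbf{y}_1 - \sum_{i \geq 2} \mathbf{y}_i$ is an orthogonal sum, so
\begin{eqnarray*}
\left\| \mathbf{z} - \mathbf{y} \right\|^2 = (1 - \alpha)^2 + \sum_{i \geq 2} \| \mathbf{y}_i \|^2 = (1 - \alpha)^2 + (1 - \alpha^2) = 2(1 - \alpha),
\end{eqnarray*}
where I used $\sum_{i \geq 2} \| \mathbf{y}_i \|^2 = 1 - \alpha^2$ together with the identity $(1-\alpha)^2 + (1-\alpha)(1+\alpha) = 2(1-\alpha)$. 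For the right-hand side, since $h_i \leq h_2$ for every $i \geq 2$,
\begin{eqnarray*}
Q(\mathbf{z}) - Q(\mathbf{y}) = h_1 (1 - \alpha^2) - \sum_{i \geq 2} h_i \| \mathbf{y}_i \|^2 \geq (h_1 - h_2)(1 - \alpha^2) \geq (h_1 - h_2)(1 - \alpha),
\end{eqnarray*}
the last step because $1 + \alpha \geq 1$; here $h_1 - h_2 > 0$ exactly because $H$ has more than one distinct eigenvalue.

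Combining the two displays gives $\dfrac{2 \left( Q(\mathbf{z}) - Q(\mathbf{y}) \right)}{h_1 - h_2} \geq 2(1 - \alpha) = \left\| \mathbf{z} - \mathbf{y} \right\|^2$, and taking square roots yields the lemma.

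I do not expect a real obstacle: once one works in the eigenbasis the argument is essentially bookkeeping. The only points that need a little care are the degenerate case $\mathbf{y}_1 = 0$, where $\mathbf{z}$ must be chosen by hand, and spotting the collapse $(1-\alpha)^2 + (1-\alpha^2) = 2(1-\alpha)$ — it is this cancellation that makes the constant come out as exactly $\sqrt{2/(h_1 - h_2)}$ rather than with slack, so it is worth isolating rather than bounding crudely.
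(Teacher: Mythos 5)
Your proof is correct and takes essentially the same route as the paper: both choose $\mathbf{z}$ to be the normalized projection of $\mathbf{y}$ onto the $h_1$-eigenspace, compute $\left\| \mathbf{z} - \mathbf{y} \right\|^2 = 2 - 2 \left\| \mathbf{y}_1 \right\|$, and conclude via $1 - \left\| \mathbf{y}_1 \right\| \leq 1 - \left\| \mathbf{y}_1 \right\|^2 \leq \left( h_1 - Q(\mathbf{y}) \right) / (h_1 - h_2)$. Your explicit handling of the degenerate case $\mathbf{y}_1 = 0$ is a minor tidy addition that the paper's formula for $\mathbf{z}$ glosses over, not a different argument.
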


\begin{proof}
Adjusting $H$ and $Q$ by an orthogonal linear transformation if necessary,
we may assume that
\begin{eqnarray}
Q ( \mathbf{x} ) = \sum_{\ell = 1}^m c_\ell x_i^2,
\end{eqnarray}
and that $h_1 = c_1 = c_2 = \ldots = c_b$ and $h_2 = c_{b+1}
\geq c_{b+2} \geq \ldots \geq c_m$,
for some $b \in \{ 2, 3, \ldots, m \}$.

Suppose that $\mathbf{y} \in \mathbb{R}^m$ is a unit vector
and that $Q ( \mathbf{y} ) = h_1 - \delta$. Then,
\begin{eqnarray}
h_1 - \delta & \leq & \sum_{\ell = 1}^m c_\ell y_\ell^2 \\
& \leq & h_1 \left( \sum_{\ell=1}^b y_\ell^2 \right) +
h_2 \left( \sum_{\ell=b+1}^m y_\ell^2 \right),
\end{eqnarray} 
from which it follows by an algebraic manipulation that
\begin{eqnarray}
y_{1}^2 + y_{2}^2 + \ldots + y_b^2 \geq 1 -\frac{\delta}{h_1 - h_2}.
\end{eqnarray}
Therefore, if we let
\begin{eqnarray}
\mathbf{z} = \left( y_1^2 + \ldots + y_b^2 \right)^{-1/2}
\left( y_1 , y_2, \ldots, y_b, 0, 0, \ldots, 0 \right)
\end{eqnarray}
we have
\begin{eqnarray}
\left\| \mathbf{z} - \mathbf{y} \right\|^2 & = & 2 - 2 \left< \mathbf{z} ,
\mathbf{y} \right> \\
& = & 2 - 2 \sqrt{ y_1^2 + \ldots + y_b^2 } \\
& \leq & 2 - 2 \left( y_1^2 + \ldots + y_b^2 \right) \\
& \leq & \frac{2 \delta}{h_1 - h_2}.
\end{eqnarray}
The desired result follows.
\end{proof}

\section{A robust self-testing result for qubit strategies}

\label{qubitstproofsection}

\begin{proposition}
\label{mainprop}
Let $f \colon \{ 0, 1 \}^n \to \mathbb{R}$
be a binary nonlocal XOR game which satisfies the following
conditions:
\begin{itemize}
\item[(1)] The function $Z_f$ has exactly two
maxima in the set $[-\pi , \pi]^{n+1}$,
and these maxima have the form
$(\alpha_0 , \ldots , \alpha_n)$
and $(- \alpha_0 , \ldots , - \alpha_n )$
with $0 < \alpha_i < \pi$
for all $i \in \{ 1, 2, \ldots, n \}$ and
$0 \leq \alpha_0 < \pi$.
\item[(2)] The Hessian matrices at each of
these maxima are nonsingular.
\end{itemize}
Then, $f$ is a second-order robust self-test.
\end{proposition}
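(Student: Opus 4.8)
I would prove the proposition by reducing the operational self-testing statement to the geometry of near-maximisers of $Z_f$ and then invoking Lemmas~\ref{nearmaximalemma} and~\ref{quadformlemma}. First I would recall the structure theory of binary XOR games. By Jordan's lemma the observables of any strategy can be simultaneously block-diagonalised into $1$- and $2$-dimensional blocks, so that, after a local isometry, the strategy becomes a weighted sum (indexed by pairs of blocks, with weights coming from the shared state) of qubit strategies, each determined by an angle tuple $\theta \in [-\pi,\pi]^{n+1}$ and a two-qubit state; the bias contributed by such a block is at most $Z_f(\theta)$, with equality exactly when the state is a top eigenvector of the corresponding real symmetric block game operator $M_\theta$ on $\mathbb{R}^4$. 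Consequently the quantum value of $f$ equals $\beta^\ast := \max_{\theta \in [-\pi,\pi]^{n+1}} Z_f(\theta)$, which by hypothesis~(1) is attained precisely at $\alpha := (\alpha_0,\dots,\alpha_n)$ and at $-\alpha := (-\alpha_0, \dots, -\alpha_n)$. These two maximisers are carried to one another by entrywise complex conjugation and correspond to the canonical ideal strategy and its conjugate; since an XOR game cannot tell a strategy from its conjugate, ``self-test'' is to be read up to this symmetry. Finally I would observe that both maximisers lie in the interior of $[-\pi,\pi]^{n+1}$ (each $\alpha_i$ with $i \ge 1$ is strictly between $0$ and $\pi$, and $0 \le \alpha_0 < \pi$), so negative semidefiniteness of their Hessians is automatic and hypothesis~(2) promotes it to negative definiteness --- exactly the hypothesis of Lemmas~\ref{infnormlemma} and~\ref{nearmaximalemma}.

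Now fix a strategy with bias at least $\beta^\ast - \varepsilon$ and perform the block decomposition, writing $p_k$ for the weight on the $k$-th block-pair, $\theta^{(k)}$ for its angle tuple, and $|\psi_k\rangle$ for its two-qubit state. Near-optimality gives $\sum_k p_k\,(\text{bias of block }k) \ge \beta^\ast - \varepsilon$, and since each block bias is at most $Z_f(\theta^{(k)}) \le \beta^\ast$ and at most $\langle\psi_k| M_{\theta^{(k)}}|\psi_k\rangle$, the total value deficit splits into a nonnegative ``angle part'' $\sum_k p_k\,(\beta^\ast - Z_f(\theta^{(k)}))$ and a nonnegative ``state part'' $\sum_k p_k\,(\lambda_{\max}(M_{\theta^{(k)}}) - \langle\psi_k|M_{\theta^{(k)}}|\psi_k\rangle)$, each of which is therefore $O(\varepsilon)$.

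Next I would estimate each block. Partition $[-\pi,\pi]^{n+1}$ into the two boxes $U_{\pm}$ obtained by restricting the $\theta_1$-coordinate to $[0,\pi]$ and to $[-\pi,0]$ respectively; because $\alpha_1 \in (0,\pi)$, the box $U_+$ contains $\alpha$ but not $-\alpha$ and $U_-$ the reverse, so $Z_f$ has a single global maximum (namely $\alpha$, resp.\ $-\alpha$) in each, with negative-definite Hessian there by hypothesis~(2). Applying Lemma~\ref{nearmaximalemma} to $G = Z_f$ on whichever box contains $\theta^{(k)}$ yields $\|\theta^{(k)} - \sigma_k\alpha\|_\infty \le C_3\sqrt{\beta^\ast - Z_f(\theta^{(k)})}$ for a sign $\sigma_k \in \{+,-\}$. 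For the state: for $\theta$ near a maximiser the largest eigenvalue of $M_\theta$ is $Z_f(\theta)$ with multiplicity two and spectral gap bounded below by a constant (this is the multiple-eigenvalue situation of Lemma~\ref{quadformlemma}, familiar already from CHSH, where the game operator has eigenvalues $\pm 2\sqrt 2$ each of multiplicity two), so Lemma~\ref{quadformlemma} produces a unit vector in the top eigenspace of $M_{\theta^{(k)}}$ within $O\!\left(\sqrt{\text{state deficit of block }k}\right)$ of $|\psi_k\rangle$; that eigenspace is spanned by the canonical maximally entangled state and its conjugate, so, up to the same conjugation symmetry, $|\psi_k\rangle$ is this close to an ideal two-qubit state.

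It remains to assemble the per-block estimates. Summing the squared bounds against the weights $p_k$ gives $\sum_k p_k \|\theta^{(k)} - \sigma_k\alpha\|_\infty^2 = O(\varepsilon)$ and $\sum_k p_k\,(\text{state distance of block }k)^2 = O(\varepsilon)$, so --- after aligning or retaining the sign ambiguity, which is precisely why the target is the canonical strategy together with its conjugate --- the block strategies are, on average in squared distance, $O(\varepsilon)$-close to the ideal family. Feeding this into the standard SWAP-type local isometry for XOR games, built from the observables $A_s,B_t$ and depending Lipschitz-continuously on them and on the state, carries the original strategy to within $O(\sqrt{\varepsilon})$ of the direct sum of the canonical strategy and its conjugate; this $\sqrt{\varepsilon}$ rate is exactly the asserted second-order robustness. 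I expect the main obstacle to be the bookkeeping in the second and third steps: one must check that the single value deficit genuinely controls the angle deficit and the state deficit separately up to constants, and that the abstract per-block distances output by the two lemmas translate into the operational distance without degrading the exponent; a secondary subtlety is ensuring that the sign $\sigma_k$ read off from the angles via the box dichotomy is consistent, block by block, with the choice of ideal state inside the two-dimensional top eigenspace of $M_{\theta^{(k)}}$.
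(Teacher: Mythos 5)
The genuine gap is in your ``state part.'' For this proposition the whole difficulty sits inside a single qubit block: after conjugating into the class $\mathbb{S}$ the operator angles $\theta_1,\dots,\theta_n\in[0,\pi]$ are fixed, but the state is a completely general $n$-qubit vector $\gamma=\sum_{\mathbf{i}}\gamma_{\mathbf{i}}\left|\mathbf{i}\right>$, and one must show that near-optimality forces its weight onto the pair $\left|0\cdots0\right>,\left|1\cdots1\right>$ with the correct relative phase. The paper does this by expanding the score as
\begin{eqnarray*}
q_f-\epsilon \;=\; \sum_{\substack{\mathbf{i}\in\{0,1\}^n \\ i_1=0}} 2 r_{\mathbf{i}} r_{\mathbf{1}-\mathbf{i}}\, Z_f\bigl( t_{\mathbf{i}}-t_{\mathbf{1}-\mathbf{i}},\, (-1)^{i_1}\theta_1,\dots,(-1)^{i_n}\theta_n \bigr),
\end{eqnarray*}
noting that every term with $\mathbf{i}\neq\mathbf{0}$ has a mixed sign pattern in its arguments and is therefore bounded by the constant $q'_f<q_f$ that hypothesis (1) provides, and only then applying Lemma~\ref{quadformlemma} --- not to the game operator, but to the auxiliary quadratic form in the moduli $(r_{\mathbf{i}})$ with coefficients $q_f$ and $q'_f$. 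Your proposal replaces this step by the assertion that the block game operator $M_{\theta}$ has largest eigenvalue $Z_f(\theta)$ of multiplicity two, with a uniform spectral gap, and top eigenspace spanned by the ideal state and its conjugate. That assertion is both unproven and false as stated: the top eigenvalue of the block operator depends only on $(\theta_1,\dots,\theta_n)$ (it is $Z_f$ maximized over the phase coordinate $\theta_0$, whereas $Z_f(\theta)$ at your $\theta_0$ is generally smaller); near the optimum it is simple, not doubly degenerate (already for CHSH the optimal game operator has a nondegenerate top eigenvalue, with spectrum of the form $\{+\lambda,0,0,-\lambda\}$); and the conjugate ideal state is a top eigenvector of the conjugate-angle operator, not of $M_{\theta}$ itself. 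Determining which coefficient pair carries the large term, and bounding all the others, is precisely where hypothesis (1) has to be used for the state, and your outline never invokes it there; without that concentration argument the claimed $O(\sqrt{\text{state deficit}})$ bound does not follow. Relatedly, your clean split of the deficit into an ``angle part'' and a ``state part,'' each $O(\epsilon)$, presupposes the bound $\text{block bias}\le Z_f(\theta^{(k)})$ with $\theta_0^{(k)}$ read off the state phases, which the paper only obtains after showing all the mixed-sign $Z_f$ terms lie below $q_f-\epsilon$.

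Two secondary points. First, your angle estimate (splitting the cube into sign boxes and applying Lemma~\ref{nearmaximalemma}, with nonsingularity upgrading the semidefinite Hessian to negative definite) is essentially the paper's Lemma~\ref{tlemma} and is fine. Second, the proposition concerns qubit strategies, so Jordan's lemma and a SWAP-type extraction isometry are not needed here: a single local unitary rotation per site brings the strategy into $\mathbb{S}$, and the paper concludes directly from Lemma~\ref{slemma2}; the block-decomposition machinery you describe belongs to the extension to arbitrary dimension (Theorem~\ref{mainthm}), where it is again the $\mathbb{S}$-analysis above, not a spectral claim about $M_\theta$, that does the work.
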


We will prove Proposition~\ref{mainprop} by first addressing
some special classes of qubit strategies.
For any sequence of real numbers $(\theta_0 , \theta_1, \ldots ,
\theta_n )$, let $T ( \theta_0 , \ldots, \theta_n )$ denote
the $n$-qubit strategy given by the operators
\begin{eqnarray}
\left\{ \left(
\left[ \begin{array}{cc} 0 & 1 \\ 1 & 0
\end{array} \right] ,
\left[ \begin{array}{cc} 0
& e^{i \theta_j } \\ e^{- i \theta_j } & 0 \end{array}
\right]
\right) \mid j \in \{  1, 2, \ldots, n \} \right\}
\end{eqnarray}
and the unit vector
\begin{eqnarray}
\left( \frac{1}{\sqrt{2}} \right)
\left| 00 \ldots 0 \right> + \left( \frac{ e^{i \theta_0}
}{\sqrt{2}} \right) \left| 11 \ldots 1 \right>.
\end{eqnarray}

\begin{lemma}
\label{tlemma}
Let $f$ be a binary nonlocal XOR game which
satisfies conditions (1)--(2) from Proposition~\ref{mainprop}.
Let
\begin{eqnarray}
\mathbb{T} & = & \left\{ T \left( \theta_0 , \theta_1 , \ldots ,
\theta_n \right) \mid \theta_i \in [0, \pi] \textnormal{ for all }
i \in \{ 1, \ldots n \}, \textnormal{ and} \right.  \\
\nonumber & & \left. \theta_0 \in [ -\pi, \pi ] \right\}.
\end{eqnarray}
Then, there exists a constant $K_1 > 0$ such that
the following holds: any qubit strategy in $\mathbb{T}$
which achieves a score of $q_f - \epsilon$ must be
$(K_1 \sqrt{\epsilon})$-close to 
$T ( \alpha_0, \alpha_1 , \ldots , \alpha_n )$.
\end{lemma}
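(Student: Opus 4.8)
The plan is to reduce the claim to Lemma~\ref{nearmaximalemma}. Recall from the earlier development that the score achieved by the strategy $T(\theta_0, \ldots, \theta_n)$ equals $Z_f(\theta_0, \ldots, \theta_n)$ (any fixed positive scaling here is harmless, being absorbed into the constants below), and that $q_f$ coincides with the maximum of $Z_f$ on $[-\pi,\pi]^{n+1}$. Identify each member of $\mathbb{T}$ with its parameter point $\mathbf{y} = (\theta_0, \ldots, \theta_n)$ inside the box $U := [-\pi,\pi] \times [0,\pi]^n$, and write $\mathbf{z} = (\alpha_0, \ldots, \alpha_n)$; then a strategy in $\mathbb{T}$ scoring $q_f - \epsilon$ has $Z_f(\mathbf{y}) \geq q_f - \epsilon$, and the quantity to control is $\left\| \mathbf{y} - \mathbf{z} \right\|_\infty$.

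First I would verify that $Z_f$ and $U$ satisfy the hypotheses of Lemma~\ref{nearmaximalemma}. By condition~(1), $\alpha_i \in (0,\pi)$ for $i \geq 1$ and $\alpha_0 \in [0,\pi) \subset (-\pi,\pi)$, so $\mathbf{z}$ lies in the interior of $U$; moreover the second global maximum $(-\alpha_0, \ldots, -\alpha_n)$ of $Z_f$ does \emph{not} lie in $U$, since $-\alpha_i < 0$ for $i \geq 1$. Hence $\mathbf{z}$ is the unique global maximum of $Z_f$ on $U$, and $Z_f(\mathbf{z}) = q_f$. Since $\mathbf{z}$ is an interior local maximum, $\Hess_{\mathbf{z}}(Z_f)$ is negative semidefinite, and by condition~(2) it is nonsingular, hence negative definite. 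Lemma~\ref{nearmaximalemma} therefore yields a constant $C_3 > 0$ such that $\left\| \mathbf{y} - \mathbf{z} \right\|_\infty \leq C_3 \sqrt{ Z_f(\mathbf{z}) - Z_f(\mathbf{y}) }$ for every $\mathbf{y} \in U$. Combined with $0 \leq Z_f(\mathbf{z}) - Z_f(\mathbf{y}) = q_f - Z_f(\mathbf{y}) \leq \epsilon$, this gives $\left\| \mathbf{y} - \mathbf{z} \right\|_\infty \leq C_3 \sqrt{\epsilon}$.

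Finally I would pass from parameters back to strategies. From the explicit formulas, the observables of $T(\theta_0, \ldots, \theta_n)$ and $T(\theta_0', \ldots, \theta_n')$ differ only in the off-diagonal entries $e^{\pm i \theta_j}$ versus $e^{\pm i \theta_j'}$, whose moduli of difference are at most $\left| \theta_j - \theta_j' \right|$, and their state vectors differ only in the coefficient of $\left| 11 \ldots 1 \right>$, with difference of norm at most $\frac{1}{\sqrt{2}} \left| \theta_0 - \theta_0' \right|$. Hence the assignment $(\theta_0, \ldots, \theta_n) \mapsto T(\theta_0, \ldots, \theta_n)$ is Lipschitz from $(U, \left\| \cdot \right\|_\infty)$ to the space of $n$-qubit strategies with the distance used in the paper, with some constant $L$ depending only on $n$. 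Taking $K_1 = L \, C_3$ then shows that any strategy in $\mathbb{T}$ scoring $q_f - \epsilon$ is $(K_1 \sqrt{\epsilon})$-close to $T(\alpha_0, \ldots, \alpha_n)$.

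I expect the only mildly delicate point to be the passage from the two-maximum description of $Z_f$ on the full cube $[-\pi,\pi]^{n+1}$ to a single-maximum description on $U$: this is exactly where the strict positivity $\alpha_i > 0$ (for $i \geq 1$) in condition~(1) is used, and it is what makes Lemma~\ref{nearmaximalemma} applicable here. Everything else is either quoted from Section~1 or is a routine Lipschitz estimate on the explicit matrices and vectors defining $T$.
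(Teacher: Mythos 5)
Your proposal is correct and follows essentially the same route as the paper, which likewise identifies the score of $T(\theta_0,\ldots,\theta_n)$ with $Z_f(\theta_0,\ldots,\theta_n)$ and then invokes Lemma~\ref{nearmaximalemma}. You simply make explicit the details the paper leaves implicit: that the box $[-\pi,\pi]\times[0,\pi]^n$ contains only the maximum $(\alpha_0,\ldots,\alpha_n)$ (the mirror maximum being excluded because $-\alpha_i<0$ for $i\geq 1$), that nonsingularity plus the interior-maximum condition gives negative definiteness of the Hessian, and that the parameter-to-strategy map is Lipschitz.
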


\begin{proof}
The score achieved by the strategy $T ( \theta_0 , \theta_1 ,
\ldots , \theta_n )$ is simply equal to
the quantity $Z_f ( \theta_0 , \theta_1 , \ldots, 
\theta_n )$.  The desired result follows easily by applying
Lemma~\ref{nearmaximalemma} to the function $Z_f$.
\end{proof}

Let $\mathbb{S}$ denote the set of all $n$-qubit strategies
$( \gamma , \{ \{ M_j^{(i)} \} \}_j )$ which are such that
the operators have the form
\begin{eqnarray}
M_j^{(0)} = \left[ \begin{array}{cc} 0 & 1 \\ 1 & 0 \end{array}
\right] \hskip0.5in M_j^{(1)} = \left[ \begin{array}{cc}
0 & e^{i \theta_j} \\ e^{-i \theta_j} & 0 \end{array} \right]
\end{eqnarray}
with $\theta_j \in [ 0 , \pi ]$, and the $n$-qubit state $\gamma$ has the form \begin{eqnarray}
\gamma = \sum_{i_1, \ldots, i_n \in \{ 0, 1 \}}
\gamma_{i_1 \cdots i_n} \left| i_1 \ldots i_n \right>
\end{eqnarray}
with $\gamma_{00\ldots 0} \geq 0$.

\begin{lemma}
\label{slemma}
Let $f$ be a binary nonlocal XOR game which 
satisfies conditions (1)--(2) from Proposition~\ref{mainprop}.
Then, there exists a constant $K_2$ such that
the following holds: any $n$-qubit strategy in $\mathbb{S}$
which achieves a score of $q_f - \epsilon$ must
be $( K_2 \sqrt{\epsilon } )$-close to some $n$-qubit strategy
in $\mathbb{T}$ which achieves an equal or higher score.
\end{lemma}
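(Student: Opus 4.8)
The plan is to exploit the antidiagonal form of every operator $M_j^{(i)}$. For $\sigma = (\gamma,\{\theta_j\}) \in \mathbb{S}$ the score equals $\langle\gamma|W|\gamma\rangle$, where $W = W(\theta_1,\ldots,\theta_n) := \sum_{x\in\{0,1\}^n} f(x)\, M_1^{(x_1)}\otimes\cdots\otimes M_n^{(x_n)}$, and since each summand carries $|i_1\cdots i_n\rangle$ to a scalar multiple of the bitwise-complementary basis vector $|\bar i_1\cdots\bar i_n\rangle$, the operator $W$ is block diagonal with respect to the partition of the computational basis into complementary pairs $P_s = \{|s\rangle,|\bar s\rangle\}$; each block is Hermitian with vanishing diagonal, hence has eigenvalues $\pm|w_s(\theta)|$ for a suitable $w_s(\theta)\in\mathbb{C}$. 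Write $P_0 = \{|0\cdots0\rangle,|1\cdots1\rangle\}$ for the distinguished pair. The first step is to split $\gamma = \sqrt{p}\,\gamma' + \sqrt{1-p}\,\gamma''$ with $\gamma'$ a unit vector in the span of $P_0$, $\gamma''$ a unit vector in the span of the remaining pairs, and $p$ the weight of $\gamma$ on $P_0$; block diagonality annihilates the cross term, so $\mathrm{score}(\sigma) = p\langle\gamma'|W|\gamma'\rangle + (1-p)\langle\gamma''|W|\gamma''\rangle$, where $\langle\gamma'|W|\gamma'\rangle \le |w_0(\theta)| = \max_{\theta_0} Z_f(\theta_0,\theta_1,\ldots,\theta_n) =: g(\theta_1,\ldots,\theta_n)$, the best score attainable on $P_0$.

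Next I would show that near the optimum essentially all of $\gamma$ sits on $P_0$. A short computation gives $|w_s(\theta)| = g(\epsilon_1\theta_1,\ldots,\epsilon_n\theta_n)$ for a sign pattern $(\epsilon_1,\ldots,\epsilon_n)\in\{\pm1\}^n$ determined by $s$, and this pattern is neither all $+1$ nor all $-1$ when $P_s\ne P_0$. By condition~(1) the only maxima of $g$ on $[-\pi,\pi]^n$ are $\pm(\alpha_1,\ldots,\alpha_n)$, and since $0<\alpha_i<\pi$ while $\theta_i\in[0,\pi]$, a mixed sign pattern keeps $(\epsilon_1\theta_1,\ldots,\epsilon_n\theta_n)$ at $\|\cdot\|_\infty$-distance at least $\min_i\alpha_i$ from both of them; compactness then yields a constant $r<q_f$ with $\langle\gamma''|W|\gamma''\rangle\le r$ for every $\sigma\in\mathbb{S}$. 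Since $g(\theta)\le q_f$ and $\mathrm{score}(\sigma) = q_f-\epsilon$, this forces $1-p \le \epsilon/(q_f-r)$ and $g(\theta) \ge \langle\gamma'|W|\gamma'\rangle \ge q_f - C\epsilon$ for a constant $C$. Take as the competitor $\tau := T(\theta_0^\ast,\theta_1,\ldots,\theta_n)\in\mathbb{T}$, where $\theta_0^\ast$ maximizes $Z_f(\cdot,\theta_1,\ldots,\theta_n)$, so $\mathrm{score}(\tau) = g(\theta)$; once $\epsilon$ is small enough that $g(\theta) > r \ge \langle\gamma''|W|\gamma''\rangle$ we obtain $\mathrm{score}(\sigma) \le p\,g(\theta) + (1-p)g(\theta) = \mathrm{score}(\tau)$, which is the ``equal or higher score'' requirement.

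The remaining task, and the step I expect to require the most care, is the closeness bound. As $\tau$ has the same measurement operators as $\sigma$, the distance between them is governed by $\|\gamma-\gamma_\tau\|$, where $\gamma_\tau = \tfrac1{\sqrt2}|0\cdots0\rangle + \tfrac{e^{i\theta_0^\ast}}{\sqrt2}|1\cdots1\rangle$, and the triangle inequality splits this into $\|\gamma-\gamma'\| \le \sqrt{2(1-p)} = O(\sqrt\epsilon)$ and $\|\gamma'-\gamma_\tau\|$. For the latter, regard the Hermitian $2\times2$ block $W|_{P_0}$ as a real symmetric matrix on $\mathbb{R}^4$: its eigenvalues are $\pm g(\theta)$, each of multiplicity two, so Lemma~\ref{quadformlemma} applies, and because the top eigenspace is exactly $\{e^{i\phi}\gamma_\tau : \phi\in\mathbb{R}\}$ it produces a vector $e^{i\phi}\gamma_\tau$ with $\|\gamma'-e^{i\phi}\gamma_\tau\| \le \sqrt{(g(\theta)-\langle\gamma'|W|\gamma'\rangle)/g(\theta)} = O(\sqrt\epsilon)$, using $g(\theta) \ge q_f/2 > 0$. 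This pins down $\gamma'$ only up to the global phase $\phi$, and the delicate point is to eliminate it using the normalization $\gamma_{00\cdots0}\ge 0$ built into the definition of $\mathbb{S}$: since the $|0\cdots0\rangle$-component of $\gamma'$ is nonnegative while that of $e^{i\phi}\gamma_\tau$ equals $e^{i\phi}/\sqrt2$, an elementary estimate shows $\|\gamma'-e^{i\phi}\gamma_\tau\| \ge 1/\sqrt2$ unless $\cos\phi>0$, in which case $|\sin\phi| \le \sqrt2\,\|\gamma'-e^{i\phi}\gamma_\tau\|$ and hence $|1-e^{i\phi}| \le 2\,\|\gamma'-e^{i\phi}\gamma_\tau\| = O(\sqrt\epsilon)$; for small $\epsilon$ the first alternative is impossible, so $\|\gamma'-\gamma_\tau\| \le \|\gamma'-e^{i\phi}\gamma_\tau\| + |1-e^{i\phi}| = O(\sqrt\epsilon)$. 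Adding the two contributions and choosing $K_2$ to absorb the finitely many constants proves the bound for small $\epsilon$; for large $\epsilon$ it is automatic, since $\mathbb{S}$ and $\mathbb{T}$ have bounded diameter and $T(\alpha_0,\ldots,\alpha_n)$ already scores $q_f \ge q_f-\epsilon$, so enlarging $K_2$ if necessary places it within $K_2\sqrt\epsilon$ of every $\sigma\in\mathbb{S}$.
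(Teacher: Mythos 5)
Your argument is correct, and its skeleton coincides with the paper's: both exploit that the score operator only couples each basis state to its bitwise complement, so the score decomposes over antipodal pairs into $Z_f$-values evaluated at sign-flipped angles; both introduce a compactness constant strictly below $q_f$ to kill the mixed-sign contributions (your $r$ plays exactly the role of the paper's $q'_f$, the maximum of $Z_f$ off the two "all-positive/all-negative" boxes); and both extract the $\sqrt{\epsilon}$ rate from Lemma~\ref{quadformlemma}. Where you genuinely diverge is the closeness step. The paper takes as competitor $T(t_{\mathbf 1},\theta_1,\ldots,\theta_n)$, where $e^{it_{\mathbf 1}}$ is the actual relative phase of $\gamma$ on the pair $\{\left|0\cdots0\right>,\left|1\cdots1\right>\}$; with that choice the target state's phases match those of $\gamma$ exactly, so a single application of Lemma~\ref{quadformlemma} to the quadratic form $2r_{\mathbf 0}r_{\mathbf 1}q_f+\sum 2r_{\mathbf i}r_{\mathbf 1-\mathbf i}q'_f$ in the moduli $(r_{\mathbf i})$ controls the whole state distance at once. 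You instead take the phase-optimal competitor $T(\theta_0^\ast,\theta)$, which makes the "equal or higher score" clause essentially immediate, but then you must also control the relative phase of $\gamma$ on the distinguished pair; you do this by applying Lemma~\ref{quadformlemma} to the $2\times 2$ block viewed as a real form on $\mathbb{R}^4$ (top eigenvalue $g(\theta)$ of multiplicity two, gap $2g(\theta)$), together with the separate weight bound $1-p\le\epsilon/(q_f-r)$ and the global-phase fix via $\gamma_{00\cdots0}\ge 0$ — all correct, at the cost of somewhat heavier bookkeeping and an explicit small-$\epsilon$/large-$\epsilon$ split (the paper's split at $\epsilon\ge q_f-q'_f$ is the same device). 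One cosmetic caveat: your remark that mixed sign patterns stay at $\|\cdot\|_\infty$-distance at least $\min_i\alpha_i$ from $\pm(\alpha_1,\ldots,\alpha_n)$ ignores the $2\pi$-periodic identification at $\pm\pi$, but nothing is lost, since the compactness argument you then invoke (the mixed-sign boxes are compact and contain neither maximizer of $Z_f$ in $[-\pi,\pi]^{n+1}$) already yields $r<q_f$ on its own.
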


\begin{proof}
First we specify a value for the constant $K_2$.
Let $A$ denote the set of all $(n+1)$-tuples
$( \beta_0 , \beta_1 , \ldots , \beta_n )$ in
$[-\pi, \pi]^{n+1}$ which
are such that $0 < \beta_j < \pi$ for all 
$j \in \{ 1, 2, \ldots, n \}$.  Let
$A'$ denote the set of all $(n+1)$-tuples
$(\beta_0, \beta_1 , \ldots , \beta_n )$
in $[-\pi, \pi]^{n+1}$ which are such that
$- \pi < \beta_j < 0$ for all
$j \in \{ 1, 2, \ldots, n \}$.
Let $q'_f$ denote the maximum value
achieved by the function $Z_f$ on the
compact set $[- \pi , \pi ]^{n+1} \smallsetminus
(A \cup A' )$.  Note that this quantity is strictly
smaller than $q_f$.   Let
\begin{eqnarray}
K_2 = \frac{2}{\sqrt{ q_f - q'_f }}.
\end{eqnarray}

Let $( \gamma , \{ \{ M_j^{(i)} \} \}_j )$ be an $n$-qubit
strategy whose operators have the form
\begin{eqnarray}
M_j^{(0)} = \left[ \begin{array}{cc} 0 & 1 \\ 1 & 0 \end{array}
\right] \hskip0.5in M_j^{(1)} = \left[ \begin{array}{cc}
0 & e^{i \theta_j} \\ e^{-i \theta_j} & 0 \end{array} \right]
\end{eqnarray}
with $\theta_j \in [ 0 , \pi ]$, and whose $n$-qubit state $\gamma$ has the form \begin{eqnarray}
\gamma = \sum_{i_1, \ldots, i_n \in \{ 0, 1 \}}
\gamma_{i_1 \cdots i_n} \left| i_1 \ldots i_n \right>
\end{eqnarray}
with $\gamma_{00\ldots 0} \geq 0$.
Suppose that this strategy achieves a score of $q_f - \epsilon$.
We wish to show that $( \gamma , \{ \{ M_j^{i)} \} \}_j )$ is 
$K_2 \sqrt{\epsilon}$ close to a strategy in $\mathbb{T}$ which achieves
an equal or higher score.  If $\epsilon \geq q_f - q'_f$, this is trivial
to prove.  So, let us assume that $\epsilon < q_f - q'_f$.

Let us construct an expression for the score
for the strategy in terms of the function $Z_f$.
Let
\begin{eqnarray}
\mathbf{M} = \sum_{i_1, \ldots, i_n \in \{ 0, 1 \}}
f ( i_1, \ldots, i_n ) M_1^{(i_1)} \otimes 
\ldots M_n^{(i_n)}.
\end{eqnarray}
Let us write $\mathbf{0}$ and $\mathbf{1}$ for the sequences
$(0, 0, \ldots, 0)$ and $(1, 1, \ldots, 1 ) \in \{ 0, 1 \}^n$, and
for any binary sequence $\mathbf{i} \in \{ 0, 1 \}^n$,
let us write $\mathbf{1} - \mathbf{i}$ for the 
sequence $( 1 - i_1 , \ldots, 1 - i_n)$. 
We have
\begin{eqnarray}
q_f - \epsilon & = & \left< \gamma \right| \mathbf{M} \left| \gamma \right> \\
& = & \sum_{i_1, \ldots, i_n \in \{ 0, 1 \}} \gamma_\mathbf{i}
\overline{\gamma_{\mathbf{1} - \mathbf{i}}}
P_f ( e^{i (-1)^{i_1} \theta_1} , \ldots , e^{i (-1)^{i_n} \theta_n} )
\end{eqnarray}
Combining conjugate terms, we have
\begin{eqnarray}
q_f - \epsilon
& = & \sum_{\substack{\mathbf{i} \in \{ 0, 1 \}^n \\
i_1 = 0}} 2 \textnormal{Re} \left[ \gamma_\mathbf{i}
\overline{\gamma_{\mathbf{1} - \mathbf{i}} } P_f
( e^{i (-1)^{i_1}  \theta_1  } , \ldots, e^{i (-1)^{i_n} \theta_n} ) \right]
\end{eqnarray}
Let $\gamma_\mathbf{i} = r_{\mathbf{i}} e^{it_\mathbf{i}}$,
with $r_\mathbf{i} \geq 0$ and $t_{\mathbf{i}} \in [ -\pi ,
\pi ]$.  We may assume that $t_\mathbf{0} = 0$. Then,
\begin{eqnarray}
q_f - \epsilon & = & \sum_{\substack{\mathbf{i} \in \{ 0, 1 \}^n \\
i_1 = 0}} 2 r_\mathbf{i} r_{\mathbf{1} - \mathbf{i}} \textnormal{Re} \left[ e^{i t_{\mathbf{i}}} e^{-i t_\mathbf{1 - i}} P_f
( e^{i (-1)^{i_1} \theta_1} , \ldots, e^{i (-1)^{i_n} \theta_n} ) \right]
\end{eqnarray}
and therefore
\begin{eqnarray}
\label{scoreformula}
q_f - \epsilon & = & \sum_{\substack{\mathbf{i} \in \{ 0, 1 \}^n \\
i_1 = 0}} 2 r_\mathbf{i} r_{\mathbf{1} - \mathbf{i}} Z_f
( t_\mathbf{i} - t_{\mathbf{1} - \mathbf{i}} , (-1)^{i_1} \theta_1
, \ldots, (-1)^{i_n} \theta_n )
\end{eqnarray}

Note that
\begin{eqnarray}
\sum_{\substack{\mathbf{i} \in \{ 0, 1 \}^n \\ i_1 = 0}} 2 r_\mathbf{i}
r_{\mathbf{1} - \mathbf{i}} \leq \sum_{\substack{\mathbf{i} \in \{ 0, 1 \}^n
\\ i_1 = 0}} \left( r_\mathbf{i}^2 + r_{\mathbf{1} - \mathbf{i}}^2 \right) = 1.
\end{eqnarray}
Therefore the positive quantity $( q_f - \epsilon )$ satisfies
\begin{eqnarray}
q_f - \epsilon \leq  \max_{\substack{\mathbf{i} \in \{ 0, 1 \}^n \\ i_1 = 1}}
Z_f ( t_{\mathbf{i}} - t_{\mathbf{1} - \mathbf{i}} , (-1)^{i_1} \theta_1 , \ldots ,
(-1)^{i_n} \theta_n ).
\end{eqnarray}
All elements of the set
\begin{eqnarray}
\{ Z_f \left( t_\mathbf{i} - t_{\mathbf{1} - \mathbf{i}} ,
(-1)^{i_1} \theta_1 , \ldots , (-1)^{i_n} \theta_n \right) \mid
\mathbf{i} \in \{ 0, 1 \}^n , i_0 = 0 \},
\end{eqnarray}
aside from $Z_f \left( t_\mathbf{0} - t_\mathbf{1} ,
\theta_1 , \ldots, \theta_n \right)$, are bounded above by $q'_f < q_f - \epsilon$.
Therefore, the quantity $Z_f \left( t_\mathbf{0} - t_\mathbf{1} ,
\theta_1 , \ldots, \theta_n \right) ( = 
Z_f \left(  t_\mathbf{1} ,
\theta_1 , \ldots, \theta_n \right) )$ must be at least
$q_f - \epsilon$.
We conclude that the strategy
\begin{eqnarray}
\label{tstrategy}
T ( t_\mathbf{1} , \theta_1 , \ldots, \theta_n ) \in \mathbb{T}
\end{eqnarray}
achieves a score at least as high as that of the of the original
strategy $( \gamma , \{ \{ M_j^{(i)} \} \}_j )$.

To see that strategy (\ref{tstrategy}) is $(K_2 \sqrt{\epsilon})$-close to
$( \gamma , \{ \{ M_j^{(i)} \} \}_j )$, note that equality
(\ref{scoreformula}) implies
\begin{eqnarray}
q_f - \epsilon & \leq & 2 r_\mathbf{0} r_{\mathbf{1} - \mathbf{1}} (
q_f ) + \sum_{\substack{ \mathbf{i} \in \{ 0, 1 \}^n \\ i_1 = 0 \\ \mathbf{i} \neq \mathbf{0}}}
2 r_\mathbf{i} r_{\mathbf{1} - \mathbf{i}} ( q'_f ).
\end{eqnarray}
The expression on the right above is a quadratic form on the
unit vector $(r_\mathbf{i})$.  By Lemma~\ref{quadformlemma},
\begin{eqnarray}
\sqrt{\left( r_\mathbf{0} - \frac{1}{\sqrt{2}} \right)^2
+ \left( r_\mathbf{1} - \frac{1}{\sqrt{2}} \right)^2 
+ \sum_{\mathbf{i} \neq \mathbf{0}, \mathbf{1}} r_\mathbf{i}^2}
\leq \sqrt{ \frac{2 \epsilon}{q_f - q'_f}}.
\end{eqnarray}
The expression on the left side of this inequality
is equal to the distance between the vector $\gamma$
and the vector
\begin{eqnarray}
\frac{1}{\sqrt{2}} \left| 00 \ldots 0 \right> +
\frac{e^{it_1}}{\sqrt{2}} \left| 11 \ldots 1 \right>.
\end{eqnarray}
Therefore,
strategy (\ref{tstrategy}) is $\sqrt{2 \epsilon / (q_f - q'_f )}$-close
to $( \gamma , \{ \{ M_j^{(i)} \} \}_j )$.  Since
$K_2 > \sqrt{2 / (q_f - q'_f) }$, this completes the proof.
\end{proof}

\begin{lemma}
\label{slemma2}
Let $f$ be a binary nonlocal XOR game which satisfies conditions (1)--(2) from
Proposition~\ref{mainprop}.  Then, there exists a constant $K_3$ such that the following
holds:
any qubit strategy in $\mathbb{S}$ which achieves a score of $q_f - \epsilon$ must
be $(K_3 \sqrt{\epsilon})$-close to the (optimal) strategy
$T ( \alpha_0 , \alpha_1 , \ldots, \alpha_n )$.
\end{lemma}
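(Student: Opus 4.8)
The plan is to obtain this lemma as the composition of Lemmas~\ref{slemma} and~\ref{tlemma}, glued together by the triangle inequality. Let $S = (\gamma, \{\{M_j^{(i)}\}\}_j)$ be a strategy in $\mathbb{S}$ achieving a score of $q_f - \epsilon$. We may assume $\epsilon \geq 0$, since $q_f$ is the maximum value of $Z_f$ and therefore an upper bound for the score of any qubit strategy. First I would apply Lemma~\ref{slemma}: it produces a strategy $S' \in \mathbb{T}$ that is $(K_2 \sqrt{\epsilon})$-close to $S$ and achieves a score $q_f - \epsilon'$ with $0 \leq \epsilon' \leq \epsilon$. Here $\epsilon' \geq 0$ again uses that $q_f$ is the global maximum of $Z_f$, while $\epsilon' \leq \epsilon$ is the ``equal or higher score'' clause of Lemma~\ref{slemma}.

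Next, since $S' \in \mathbb{T}$ and $S'$ achieves the score $q_f - \epsilon'$, Lemma~\ref{tlemma} shows that $S'$ is $(K_1 \sqrt{\epsilon'})$-close to the optimal strategy $T(\alpha_0, \alpha_1, \ldots, \alpha_n)$; and since $\sqrt{\epsilon'} \leq \sqrt{\epsilon}$, this distance is at most $K_1 \sqrt{\epsilon}$. Finally, applying the triangle inequality for the distance between qubit strategies,
\begin{eqnarray*}
\textnormal{dist}\left( S,\ T(\alpha_0, \ldots, \alpha_n) \right)
& \leq & \textnormal{dist}\left( S, S' \right) + \textnormal{dist}\left( S',\ T(\alpha_0, \ldots, \alpha_n) \right) \\
& \leq & K_2 \sqrt{\epsilon} + K_1 \sqrt{\epsilon} \ = \ (K_1 + K_2) \sqrt{\epsilon},
\end{eqnarray*}
so it suffices to set $K_3 = K_1 + K_2$.

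The only things that need checking are bookkeeping rather than substance: that the distance measure on qubit strategies is genuinely a metric, so that the triangle inequality is available --- this holds because it is assembled from the operator norm on the measurement operators and the Euclidean norm on the state vector --- and that the constants $K_1, K_2$ supplied by the earlier lemmas are independent of $\epsilon$, which is exactly how they were produced. I do not anticipate any real obstacle here: all of the analytic content already lives in Lemmas~\ref{slemma} and~\ref{tlemma}, and the present statement is simply their concatenation.
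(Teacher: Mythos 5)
Your argument is correct and matches the paper's proof, which likewise obtains the result by chaining Lemma~\ref{slemma} with Lemma~\ref{tlemma} and the triangle inequality, taking $K_3 = K_1 + K_2$. Your write-up simply makes explicit the bookkeeping (the score comparison $\epsilon' \leq \epsilon$ and the metric property) that the paper leaves implicit.
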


\begin{proof}
This follows easily from Lemmas~\ref{tlemma} and \ref{slemma}, with
$K_3 = K_1 + K_2$.
\end{proof}

\begin{proof}[Proof of Proposition~\ref{mainprop}]
For any $n$-qubit strategy $( \phi , \{ \{ N_j^{(0)} , N_j^{(1)} \} \}_j )$,
there are unitary transformations $U_j \colon \mathbb{C}^2 \to \mathbb{C}^2$
for $j = 1, 2, \ldots, n$ such that the strategy given by the vector 
\begin{eqnarray}
\left( U_1 \otimes U_2 \otimes \ldots \otimes U_n \right)
\phi
\end{eqnarray}
and the operators
\begin{eqnarray}
U_j N_j^{(i)} U_j^{-1}
\end{eqnarray}
is in the class $\mathbb{S}$.  Proposition~\ref{mainprop} therefore 
follows from Lemma~\ref{slemma2}.
\end{proof}

\section{Jordan's lemma}

\begin{lemma}
\label{dim2decomplemma}
Let $W = \mathbb{C}^n$, with $n \geq 1$, and let $M_1$ and $M_2$ be projection operators
on $W$.  Then, there exists a nonzero subspace $Y \subseteq W$ with $\dim Y \leq 2$
such that $Y$ is stabilized by both $M_1$ and $M_2$.
\end{lemma}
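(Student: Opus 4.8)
The plan is to give the standard proof of Jordan's lemma: exhibit $Y$ as the span of the images of a single eigenvector under the two projections. First I would note that, reading ``projection operator'' in the usual sense of a self-adjoint idempotent on $W=\mathbb{C}^n$ with its standard inner product, the sum $A := M_1 + M_2$ is self-adjoint, and hence (since $n \ge 1$) has an eigenvector: fix $v \neq 0$ and $\lambda \in \mathbb{R}$ with $M_1 v + M_2 v = \lambda v$. The subspace I would propose is $Y := \mathrm{span}\{M_1 v,\, M_2 v\}$, which has dimension at most $2$ by construction.

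The core computation is to check $M_1 Y \subseteq Y$ and $M_2 Y \subseteq Y$, and this comes down to four short identities obtained from $M_i^2 = M_i$ together with the relations $M_2 v = \lambda v - M_1 v$ and $M_1 v = \lambda v - M_2 v$: namely $M_1(M_1 v) = M_1 v$, $M_1(M_2 v) = (\lambda - 1) M_1 v$, $M_2(M_1 v) = (\lambda - 1) M_2 v$, and $M_2(M_2 v) = M_2 v$, each of which lies in $Y$. So no real work is required here beyond writing these down.

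The one place that needs care — and essentially the only obstacle — is ensuring $Y \neq 0$, i.e.\ dealing with the degenerate eigenvalue $\lambda = 0$. When $\lambda \neq 0$ this is automatic, since $v = \lambda^{-1}(M_1 v + M_2 v) \in Y$, so $Y$ is nonzero (indeed it contains $v$). When $\lambda = 0$, positivity of the projections gives $\|M_1 v\|^2 + \|M_2 v\|^2 = \langle M_1 v, v\rangle + \langle M_2 v, v\rangle = \langle Av, v\rangle = 0$, forcing $M_1 v = M_2 v = 0$; in that case I would instead take $Y := \mathbb{C}v$, a line annihilated — hence stabilized — by both $M_1$ and $M_2$. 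Combining the two cases yields the required nonzero $Y$ with $\dim Y \le 2$ stabilized by both projections. (If the full block-diagonal decomposition of $W$ were needed later, one would simply iterate this construction on the orthogonal complement $Y^{\perp}$, which is again invariant under both $M_1$ and $M_2$; but the statement as given asks only for a single such subspace.)
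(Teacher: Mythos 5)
Your proof is correct, and it takes a genuinely different route from the paper. You apply the spectral theorem to the self-adjoint operator $M_1+M_2$, pick an eigenvector $v$ with $(M_1+M_2)v=\lambda v$, and take $Y=\mathrm{span}\{M_1v,\,M_2v\}$; the four invariance identities you list are exactly right, and your treatment of the degenerate case $\lambda=0$ (positivity forces $M_1v=M_2v=0$, so $Y=\mathbb{C}v$ works) closes the only gap in that construction. The paper instead argues geometrically: it maximizes $\left|\left<v,v'\right>\right|$ over unit vectors $v$ in $\mathrm{im}\,M_1$ and $v'$ in $\mathrm{im}\,M_2$ (a principal-angle/compactness argument), and splits into cases according to whether this maximum is $0$, $1$, or strictly in between, taking $Y$ to be the span of one or both maximizing vectors; the key observation there is that a maximizer $v_2$ must be proportional to $M_2v_1$, and symmetrically. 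Your approach buys a short, purely algebraic verification with no optimization and no case analysis on angles (only the $\lambda=0$ degeneracy), and it works verbatim without worrying about whether the images are nonzero; the paper's approach makes the geometry of the two ranges explicit (it is essentially the principal-angle picture behind Jordan's lemma) and hands you the maximizing pair $(v_1,v_2)$, which is convenient if one later wants the canonical $2\times 2$ block form, but it quietly assumes both images are nonzero and needs the compactness of the unit spheres to guarantee the maximum exists. Your closing remark that the full decomposition follows by iterating on $Y^{\perp}$ matches how the paper proceeds in its Lemma on orthogonal decompositions.
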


\begin{proof}
Let $V_1, V_2 \subseteq W$ be the images of the operators $M_1$
and $M_2$, respectively.
Consider the quantity
\begin{eqnarray}
\max_{\substack{v \in V_1, v' \in V_2 \\ | v | = | v' | = 1}} \left| \left< v, v' \right> \right|.
\end{eqnarray}
Let $v_1 \in V_1$ and $v_2 \in V_2$ be vectors which achieve this 
maximum.  We carry out the proof in 3 cases.

\vskip0.2in 
\textbf{Case 1:} $\left| \left< v_1, v_2 \right> \right| = 0$.

Let $Y$ be the span of $v_1$ in $W$.  Then $M_1$ clearly 
stabilizes $Y$.  The space $V_2$ is orthogonal to $v_1$, 
so the projection operator $M_2$ maps $Y$ to $\{ 0 \}$.
Therefore $Y$ is stabilized by both $M_1$ and $M_2$.

\vskip0.2in
\textbf{Case 2:} $\left| \left< v_1, v_2 \right> \right| = 1$.

Let $Y$ be the span of $v_1$ in $W$.  The space $Y$ is contained
in both $V_1$ and $V_2$, and therefore $Y$ is stabilized by
both $M_1$ and $M_2$.

\vskip0.2in
\textbf{Case 3:} $0 < \left| \left< v_1, v_2 \right> \right| < 1$.

Let $Y$ be the span of $\{ v_1, v_2 \}$.  We have assumed that the vector
$v_2$ is a vector which achieves
the maximum value for the inner product $\left| \left< v_1, v \right> \right|$ 
for all $v \in V_2$ with $\left| v \right| = 1$.  Therefore, $v_2$ must be
a scalar multiple of the projection $M_2 v_1$.  Thus, $M_2$ stabilizes
$Y$.   A similar argument shows that $M_1$ 
stabilizes $Y$ as well.
\end{proof}

\begin{lemma}
\label{dim2decomplemma2}
Let $W = \mathbb{C}^n$, with $n \geq 1$, and let $Z_1$ and $Z_2$ be Hermitian operators on
 $W$ whose eigenvalues are contained in the set $\{ -1, 1 \}$.
Then, there exists a nonzero subspace $Y \subseteq W$ with $\dim Y \leq 2$
such that $Y$ is stabilized by both $Z_1$ and $Z_2$.
\end{lemma}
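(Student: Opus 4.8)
The plan is to reduce this directly to Lemma~\ref{dim2decomplemma} by passing from the $\pm 1$-valued Hermitian operators to their associated spectral projections. Given a Hermitian operator $Z$ on $W$ whose eigenvalues lie in $\{-1,1\}$, one has $Z^2 = \mathbb{I}$, and hence the operator $M = \tfrac{1}{2}(\mathbb{I} + Z)$ satisfies $M^\dagger = M$ and $M^2 = \tfrac{1}{4}(\mathbb{I} + 2Z + Z^2) = \tfrac{1}{2}(\mathbb{I} + Z) = M$; that is, $M$ is a projection operator (onto the $+1$-eigenspace of $Z$). First I would record this, applied to both $Z_1$ and $Z_2$, obtaining projections $M_1 = \tfrac{1}{2}(\mathbb{I} + Z_1)$ and $M_2 = \tfrac{1}{2}(\mathbb{I} + Z_2)$ on $W$.

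Next I would observe that a subspace $Y \subseteq W$ is stabilized by $M_i$ if and only if it is stabilized by $Z_i$: since $Z_i = 2 M_i - \mathbb{I}$ and $M_i = \tfrac{1}{2}(\mathbb{I} + Z_i)$, each operator is a real-linear combination of the other and the identity, and the identity stabilizes every subspace. Therefore it suffices to find a nonzero subspace of dimension at most $2$ stabilized by both $M_1$ and $M_2$. Applying Lemma~\ref{dim2decomplemma} to the projections $M_1$ and $M_2$ produces exactly such a subspace $Y$, and by the equivalence just noted $Y$ is stabilized by $Z_1$ and $Z_2$ as well. This completes the argument.

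There is essentially no serious obstacle here; the only points requiring a moment's care are the verification that $M_i$ is genuinely a projection (which is where the hypothesis that the eigenvalues lie in $\{-1,1\}$, equivalently $Z_i^2 = \mathbb{I}$, is used) and the remark that invariance under $M_i$ and under $Z_i$ are equivalent. Everything substantive has already been done in Lemma~\ref{dim2decomplemma}.
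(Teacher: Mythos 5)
Your argument is correct and matches the paper's proof, which likewise applies Lemma~\ref{dim2decomplemma} to the projections $(Z_i + \mathbb{I})/2$; you have merely spelled out the verification that these are projections and that invariance under $M_i$ and $Z_i$ coincide, details the paper leaves implicit.
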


\begin{proof}
This follows immediately by applying Lemma~\ref{dim2decomplemma} to
the projection operators $(Z_1 + \mathbb{I})/2$ and $(Z_2 + \mathbb{I})/2$.
\end{proof}

\begin{lemma}
\label{fulldecomplemma}
Let $W = \mathbb{C}^n$, with $n \geq 1$, and let $Z_1$ and $Z_2$ be Hermitian
operators on $W$ whose eigenvalues are contained in the set $\{ -1 , 1 \}$.
Then, there exists an orthogonal decomposition of $W$ into
subspaces $W_1, \ldots, W_r$ which is respected by $Z_1$ and
$Z_2$ such that $\dim W_i \leq 2$ for all $i \in \{ 1, \ldots, r \}$.
\end{lemma}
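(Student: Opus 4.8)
The plan is to prove the statement by induction on $n = \dim W$. The base cases $n = 1$ and $n = 2$ are trivial, since the one-block decomposition $W = W$ already consists of a block of dimension at most $2$ respected by $Z_1$ and $Z_2$. For the inductive step I would assume the result for all complex vector spaces of dimension smaller than $n$, and then invoke Lemma~\ref{dim2decomplemma2} to obtain a nonzero subspace $Y \subseteq W$ with $\dim Y \leq 2$ that is stabilized by both $Z_1$ and $Z_2$.

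The crucial step is to observe that the orthogonal complement $Y^{\perp}$ is also stabilized by both operators. This is the one place where Hermiticity is used: if $Z$ is Hermitian and $ZY \subseteq Y$, then for $w \in Y^{\perp}$ and $y \in Y$ one has $\langle Zw, y \rangle = \langle w, Zy \rangle = 0$, so $Zw \in Y^{\perp}$. Applying this to $Z_1$ and $Z_2$ shows $Y^{\perp}$ is invariant under both. I would then note that the restrictions $Z_1|_{Y^{\perp}}$ and $Z_2|_{Y^{\perp}}$ are again Hermitian with spectra contained in $\{ -1, 1 \}$: since $W = Y \oplus Y^{\perp}$ is an orthogonal direct sum respected by each $Z_i$, the operator $Z_i$ is block diagonal in an adapted basis, so the eigenvalues of $Z_i|_{Y^{\perp}}$ form a subset of those of $Z_i$. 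Because $\dim Y \geq 1$ we have $\dim Y^{\perp} < n$, so the inductive hypothesis applies to $Y^{\perp}$ and produces an orthogonal decomposition $Y^{\perp} = W_2 \oplus \cdots \oplus W_r$ into subspaces of dimension at most $2$ respected by $Z_1|_{Y^{\perp}}$ and $Z_2|_{Y^{\perp}}$. Setting $W_1 = Y$, the decomposition $W = W_1 \oplus W_2 \oplus \cdots \oplus W_r$ is orthogonal, respected by $Z_1$ and $Z_2$, and has all blocks of dimension at most $2$, completing the induction.

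I do not expect any genuine obstacle here. The whole argument reduces to the single structural fact that the orthogonal complement of an invariant subspace of a Hermitian operator is again invariant, which is what makes the induction close up. The only points requiring a moment of care are checking that the eigenvalue hypothesis is inherited by the restrictions — immediate from the block structure — and making sure the base case is stated so that the induction terminates correctly.
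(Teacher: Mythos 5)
Your proof is correct and follows the same route the paper intends: the paper's proof of Lemma~\ref{fulldecomplemma} is just the remark that it ``follows easily from Lemma~\ref{dim2decomplemma2} by induction,'' and your argument supplies exactly the details that remark leaves implicit (invariance of the orthogonal complement via Hermiticity, inheritance of the spectral condition by the restrictions, and the induction on dimension).
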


\begin{proof}
This follows easily from Lemma~\ref{dim2decomplemma2} by induction.
\end{proof}

The next lemma gives a canonical form for any pair of Hermitian
operators that have eigenvalues in the set $\{ -1, 1 \}$.

\begin{lemma}
\label{opcanonlemma}
Let $W = \mathbb{C}^n$ for some $n > 0$, and let $X_1$ and $X_2$ be Hermitian
operators on $W$ whose eigenvalues are contained in the set $\{ -1 , 1 \}$.  Then,
there exists a unitary embedding $U \colon W \to \mathbb{C}^{2m}$ for some $m > 0$ and Hermitian
operators $X'_1, X'_2$ satisfying $U^* X'_i U = X_i$ such that $X'_1, X'_2$ have the form
\begin{eqnarray}
\label{canon1}
X'_1 = \left[ \begin{array}{cccccccc}  0 & 1 & &&&&& \\
1 & 0 & &&&&& \\
& & 0 & 1 & &&& \\
& & 1 & 0 & & & & \\
& & & & \ddots & & & \\
& & & & & \ddots & & \\
& & & & & & 0 & 1 \\
& & & &  & & 1 & 0 
\end{array} \right],
\end{eqnarray}
\begin{eqnarray}
\label{canon2}
X'_2 = \left[ \begin{array}{cccccccc}  0 & e^{i \theta_1} & &&&&& \\
e^{-i \theta_1} & 0 & &&&&& \\
& & 0 & e^{i \theta_2} & &&& \\
& & e^{-i \theta_2} & 0 & & & & \\
& & & & \ddots & & & \\
& & & & & \ddots & & \\
& & & & & & 0 & e^{i \theta_m} \\
& & & &  & & e^{-i \theta_m} & 0 
\end{array} \right],
\end{eqnarray}
with $\theta_\ell \in [0, \pi]$ for all $\ell \in \{ 1, 2, \ldots , m \}$.
\end{lemma}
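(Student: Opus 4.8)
The plan is to reduce to blocks of dimension at most two using Lemma~\ref{fulldecomplemma}, refine that decomposition, put each block into canonical form separately, and pad the $1$-dimensional blocks so that every block becomes $2$-dimensional.

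First I would apply Lemma~\ref{fulldecomplemma} to get an orthogonal decomposition $W = W_1 \oplus \cdots \oplus W_r$ respected by $X_1$ and $X_2$ with each $\dim W_i \leq 2$. I then refine this decomposition so that on every $2$-dimensional summand $W_i$ both $X_1$ and $X_2$ restrict to traceless operators (equivalently, have eigenvalues $\{-1,+1\}$). The point is that a Hermitian operator on a $2$-dimensional space with eigenvalues in $\{-1,+1\}$ is traceless unless it equals $\pm\mathbb{I}$; and if $X_1|_{W_i} = \pm\mathbb{I}$ then it commutes with $X_2|_{W_i}$, so $W_i$ splits into two $1$-dimensional subspaces each invariant under both operators (and similarly if $X_2|_{W_i}$ is scalar). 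After this refinement, each summand is either $1$-dimensional, or $2$-dimensional with both $X_1|_{W_i}$ and $X_2|_{W_i}$ traceless Hermitian squaring to the identity.

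Next I would put each summand into the desired shape. On a refined $2$-dimensional summand $W_i$: diagonalize $X_1|_{W_i}$ and conjugate by a Hadamard rotation so that it becomes $\bigl[\begin{smallmatrix}0&1\\1&0\end{smallmatrix}\bigr]$; then $X_2|_{W_i}$ becomes some traceless Hermitian $2\times 2$ matrix of square $\mathbb{I}$, and a further conjugation by a unitary commuting with $\bigl[\begin{smallmatrix}0&1\\1&0\end{smallmatrix}\bigr]$ — a rotation about the $x$-axis of the Bloch sphere, followed if necessary by conjugation by $\bigl[\begin{smallmatrix}0&1\\1&0\end{smallmatrix}\bigr]$ itself (which fixes $X_1|_{W_i}$ but sends $\theta_i \mapsto -\theta_i$) — brings it to $\bigl[\begin{smallmatrix}0&e^{i\theta_i}\\e^{-i\theta_i}&0\end{smallmatrix}\bigr]$ with $\theta_i \in [0,\pi]$. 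Such a summand contributes a unitary \emph{isomorphism} onto one $\mathbb{C}^2$-block and needs no padding. On a $1$-dimensional summand $W_i$ we have $X_1|_{W_i} = \epsilon_1$ and $X_2|_{W_i} = \epsilon_2$ with $\epsilon_1,\epsilon_2 \in \{-1,+1\}$; I embed $W_i$ into a fresh $\mathbb{C}^2$-block by sending a unit vector to $\tfrac{1}{\sqrt2}(1,\epsilon_1)$ and choosing $\theta_i = 0$ when $\epsilon_1 = \epsilon_2$ and $\theta_i = \pi$ when $\epsilon_1 = -\epsilon_2$. One checks directly that $\tfrac{1}{\sqrt2}(1,\epsilon_1)$ is then a common eigenvector of $\bigl[\begin{smallmatrix}0&1\\1&0\end{smallmatrix}\bigr]$ and $\bigl[\begin{smallmatrix}0&e^{i\theta_i}\\e^{-i\theta_i}&0\end{smallmatrix}\bigr]$ with eigenvalues $\epsilon_1$ and $\epsilon_2$ respectively.

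Finally I would assemble the pieces: let $m$ be the number of refined summands, let $U \colon W \to \mathbb{C}^{2m}$ be the direct sum of the block embeddings just described, and let $X'_1, X'_2$ be the block-diagonal operators whose $i$-th blocks are $\bigl[\begin{smallmatrix}0&1\\1&0\end{smallmatrix}\bigr]$ and $\bigl[\begin{smallmatrix}0&e^{i\theta_i}\\e^{-i\theta_i}&0\end{smallmatrix}\bigr]$. Since $n > 0$ there is at least one summand, so $m > 0$; the image of $U$ is $X'_i$-invariant (it is a sum of full $\mathbb{C}^2$-blocks and of common eigenlines of the block matrices), hence $U^* X'_i U = X_i$; and $X'_1, X'_2$ have exactly the forms (\ref{canon1})--(\ref{canon2}). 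I do not expect a serious obstacle: the only steps needing care are the refinement (so that every $2$-dimensional block carries a traceless pair, which is what lets us reach the off-diagonal canonical form) and checking that the range $\theta_i \in [0,\pi]$ is always attainable, and both are routine $2 \times 2$ matrix computations.
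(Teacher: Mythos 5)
Your proposal is correct and follows essentially the same route as the paper: reduce via Lemma~\ref{fulldecomplemma} to blocks of dimension at most two, split off the scalar/one-dimensional pieces and embed each as a common eigenvector of a $2\times 2$ canonical block (with $\theta=0$ or $\pi$), and choose a basis on the genuinely two-dimensional blocks to reach the forms (\ref{canon1})--(\ref{canon2}). Your write-up is merely more explicit (Bloch-sphere rotation, the vector $\tfrac{1}{\sqrt2}(1,\epsilon_1)$) than the paper's case analysis, but the argument is the same.
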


\begin{proof}
\textbf{Case 1:} $n = 1$.  In this case $X_1$ and $X_2$ are $1 \times 1$
matrices, which we denote by $[x_1]$ and $[x_2]$.  Let
\begin{eqnarray}
X'_1 & = & \left[ \begin{array}{cc} 0 & 1 \\ 1 & 0 \end{array} \right].
\end{eqnarray}
If $x_1 = 1$, then let $U \colon W \to \mathbb{C}^2$ be a map
whose image is the positive eigenspace of $U$; otherwise, let
$U$ be a map whose image is the negative eigenspace of $X'_1$.

If $x_2 = x_1$, then let $X'_2 = X'_1$; otherwise, 
let $X'_2 = -X'_1$.  The desired properties hold.

\vskip0.2in

\textbf{Case 2:} $n = 2$.  If either of $X_1$ or $X_2$ is a scalar
matrix, then we can find an orthogonal decomposition of
$W$ into $1$-dimensional subspaces that is respected by 
both operators, and we thus reduce to case 1.

If both $X_1$ and $X_2$ are nonscalar matrices, then each operator
has a nontrivial $(+1)$-eigenspace and a nontrivial $(-1)$-eigenspace.
We can find an orthonormal basis for $W$ which puts
$X_1$ and $X_2$ in the form of (\ref{canon1}) and (\ref{canon2}) above.

\vskip0.2in

\textbf{Case 3:} $n > 2$.  By Lemma~\ref{fulldecomplemma},
we may find an orthonormal bases for $W$ under which
the matrix expressions for $X_1$ and $X_2$ decompose
into $1 \times 1$ and $2 \times 2$ diagonal blocks.  The
desired result now follows from cases 1 and 2.
\end{proof}

\section{A canonical form for quantum strategies}

Let
\begin{eqnarray}
\label{astrategy1}
\left( \alpha , \left\{ \left\{ S_j^{(0)} , S_j^{(1)} \right\} \right\}_j \right), \\
\nonumber \alpha  \in \mathcal{A}_1 \otimes \mathcal{A}_2 \otimes \ldots \otimes \mathcal{A}_n.
\end{eqnarray}
and
\begin{eqnarray}
\label{astrategy2}
\left( \beta , \left\{ \left\{ T_j^{(0)} , T_j^{(1)} \right\} \right\}_j \right), \\
\nonumber \beta \in \mathcal{B}_1 \otimes \mathcal{B}_2 \otimes \ldots \otimes \mathcal{B}_n.
\end{eqnarray}
be quantum strategies.  Let us say that a \textit{unitary embedding} from strategy
(\ref{astrategy1}) to strategy (\ref{astrategy2}) is a collection of
unitary embeddings 
\begin{eqnarray}
\left\{ U_j \colon A_j \hookrightarrow B_j  \right\}_{j \in \{ 1, 2, \ldots, n \}}
\end{eqnarray}
such that 
\begin{eqnarray}
\left( U_1 \otimes \ldots \otimes U_n \right) \alpha = \beta
\end{eqnarray}
and
\begin{eqnarray}
U_j^* T^{(i)}_j U_j = S^{(i)}_j
\end{eqnarray}
for all $i \in \{ 0, 1 \}$, $j \in \{ 1, 2, \ldots , n \}$.

\begin{definition}
A quantum strategy
\begin{eqnarray}
\left( \beta , \left\{ \left\{ T_j^{(0)} , T_j^{(1)} \right\} \right\}_j \right), \\
\nonumber \beta  \in \mathcal{B}_1 \otimes \mathcal{B}_2 \otimes \ldots \otimes \mathcal{B}_n.
\end{eqnarray}
is in \textit{canonical form} if the following three properties hold for each $k \in \{ 1, \ldots, n \}$:
\begin{enumerate}
\item $\mathcal{B}_k = \mathbb{C}^2 \otimes \mathcal{W}_k$, where $\mathcal{W}_k$
is a finite-dimensional Hilbert space with a fixed orthonormal basis $\{ w_{k1}, w_{k2}, \ldots, w_{km_k} \}$.

\item The operator $T_k^{(0)}$ has the form
\begin{eqnarray}
T_k^{(0)} & = & \sum_{\ell = 1}^{m_k} \left[ \begin{array}{cc} 0 & 1 \\ 1 & 0 \end{array} \right] \otimes
\left| w_{k \ell} \right> \left< w_{k \ell} \right| 
\end{eqnarray}

\item The operator $T_k^{(1)}$ has the form
\begin{eqnarray}
T_k^{(1)} & = & \sum_{\ell = 1}^{m_k} \left[ \begin{array}{cc} 0 & e^{i \theta_{k \ell}} \\
e^{-i \theta_{k \ell}}  & 0 \end{array} \right] \otimes
\left| w_{k \ell} \right> \left< w_{k \ell} \right| 
\end{eqnarray}
with $\theta_{k \ell} \in [0, \pi ]$ for all $\ell \in \{ 1 ,  \ldots , m_k \}$.
\end{enumerate}
\end{definition}

\begin{proposition}
Any quantum strategy has a unitary embedding into a quantum strategy in canonical form.
\end{proposition}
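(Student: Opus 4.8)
The plan is to build the canonical form party by party: apply Lemma~\ref{opcanonlemma} to each party's pair of observables, and then repackage the resulting matrices via a tensor product. Throughout, I take a quantum strategy for a binary XOR game to consist of a state together with operators $S_j^{(0)}, S_j^{(1)}$ that are Hermitian with eigenvalues in $\{-1,1\}$, which is exactly the hypothesis needed to invoke Lemma~\ref{opcanonlemma}.

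First, fix $k \in \{1, \ldots, n\}$ and consider the Hermitian operators $S_k^{(0)}, S_k^{(1)}$ on $\mathcal{A}_k$. Lemma~\ref{opcanonlemma} supplies an integer $m_k > 0$, a unitary embedding $U_k \colon \mathcal{A}_k \hookrightarrow \mathbb{C}^{2 m_k}$, and Hermitian operators on $\mathbb{C}^{2 m_k}$, which I will call $T_k^{(0)}$ and $T_k^{(1)}$, of the block-diagonal forms (\ref{canon1}) and (\ref{canon2}) with angles $\theta_{k1}, \ldots, \theta_{k m_k} \in [0, \pi]$, such that $U_k^* T_k^{(i)} U_k = S_k^{(i)}$ for $i \in \{0,1\}$.

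Next I would reinterpret $\mathbb{C}^{2 m_k}$ as $\mathbb{C}^2 \otimes \mathcal{W}_k$, where $\mathcal{W}_k := \mathbb{C}^{m_k}$ is equipped with its standard orthonormal basis $\{w_{k1}, \ldots, w_{k m_k}\}$, by identifying the $(2\ell-1)$-st and $(2\ell)$-th standard basis vectors of $\mathbb{C}^{2 m_k}$ with $e_1 \otimes w_{k\ell}$ and $e_2 \otimes w_{k\ell}$, where $e_1, e_2$ is the standard basis of $\mathbb{C}^2$. Under this identification the $\ell$-th $2 \times 2$ diagonal block of (\ref{canon1}) acts on the subspace $\mathbb{C}^2 \otimes \{w_{k\ell}\}$ as the exchange matrix appearing in (\ref{canon1}) tensored with $|w_{k\ell}\rangle\langle w_{k\ell}|$; since these subspaces are mutually orthogonal and span $\mathbb{C}^2 \otimes \mathcal{W}_k$, the operator $T_k^{(0)}$ takes exactly the form required by property (2) of the definition of canonical form. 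Applying the same reasoning to (\ref{canon2}) yields property (3), the constraint $\theta_{k\ell} \in [0,\pi]$ being precisely what Lemma~\ref{opcanonlemma} provides. Setting $\mathcal{B}_k := \mathbb{C}^2 \otimes \mathcal{W}_k$ makes property (1) hold by construction.

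Finally, put $\beta := (U_1 \otimes \cdots \otimes U_n)\,\alpha \in \mathcal{B}_1 \otimes \cdots \otimes \mathcal{B}_n$. Then $\{U_k\}_{k}$ is a unitary embedding from the given strategy to $(\beta, \{\{T_j^{(0)}, T_j^{(1)}\}\}_j)$ in the sense defined above: the condition $(U_1 \otimes \cdots \otimes U_n)\alpha = \beta$ holds by the definition of $\beta$, and the condition $U_k^* T_k^{(i)} U_k = S_k^{(i)}$ was arranged in the first step. Since $(\beta, \{\{T_j^{(0)}, T_j^{(1)}\}\}_j)$ is in canonical form by the construction above, this proves the proposition. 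The only step requiring any care is the basis bookkeeping in the third paragraph — checking that grouping the standard basis of $\mathbb{C}^{2 m_k}$ into consecutive pairs converts the block-diagonal operators of Lemma~\ref{opcanonlemma} into the tensor-product operators demanded by the canonical form — and this is entirely routine, so I do not anticipate a genuine obstacle; the proposition is essentially a repackaging of Lemma~\ref{opcanonlemma} applied to each party separately.
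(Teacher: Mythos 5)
Your proof is correct and follows exactly the route the paper intends: its own proof is the single sentence ``This follows easily from Lemma~\ref{opcanonlemma},'' and your party-by-party application of that lemma, together with the identification of $\mathbb{C}^{2m_k}$ with $\mathbb{C}^2 \otimes \mathcal{W}_k$, is precisely the routine repackaging being left to the reader.
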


\begin{proof}
This follows easily from Lemma~\ref{opcanonlemma}.
\end{proof}

\section{A robust self-testing result for quantum systems of arbitrary dimension}

\begin{theorem}
\label{mainthm}
Let $f$ be a nonlocal game which satisfies the following two conditions:
\begin{enumerate}
\item The function $Z_f \colon [ -\pi , \pi ]^{n+1} \to \mathbb{R}$
has exactly two global maxima, and the maxima have the form
$( \alpha_0, \alpha_1 , \ldots
, \alpha_n )$ and $( -\alpha_0 , - \alpha_1 , \ldots , - \alpha_n )$,
with $0 < \alpha_j < \pi$ for all $j \in \{ 1, 2, \ldots, n \}$, and
$0 \leq \alpha_0 < \pi$.

\item The Hessian matrices of $Z_f$ at each of these maxima are nonsingular.
\end{enumerate}
Then, there exists an $n$-qubit state $g \in \left( \mathbb{C}^2 \right)^{\otimes n}$
and a constant $K > 0$ such that the following holds:
\begin{itemize}
\item For any quantum strategy in canonical form,
\begin{eqnarray*}
\Lambda \in (\mathbb{C}^2 \otimes \mathcal{W}_1 )  \otimes \ldots
\otimes (\mathbb{C}^2 \otimes \mathcal{W}_n ) \\
\left\{ M_j^{(i)} \colon \mathbb{C}^2 \otimes \mathcal{W}_j
\to \mathbb{C}^2 \otimes \mathcal{W}_j \right\}_{i,j},
\end{eqnarray*}
which achieves score $q_f - \epsilon$,
there exists a unit vector $\gamma \in \mathcal{W}_1 \otimes \ldots
\otimes \mathcal{W}_n$ such that 
\begin{eqnarray*}
\left\| \Lambda - g \otimes \gamma \right\|
\leq C \sqrt{\epsilon}.
\end{eqnarray*}
\end{itemize}
\end{theorem}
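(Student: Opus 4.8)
The plan is to reduce the general (canonical-form) case to the qubit case already settled by Lemma~\ref{slemma2}, exploiting the block structure that the canonical form forces on the operators. Write $\mathcal{W} = \mathcal{W}_1 \otimes \cdots \otimes \mathcal{W}_n$ with its orthonormal basis $\{\, w_{\vec\ell} := w_{1\ell_1} \otimes \cdots \otimes w_{n\ell_n} \mid \vec\ell = (\ell_1,\ldots,\ell_n),\ 1 \le \ell_k \le m_k \,\}$, and use the canonical identification $(\mathbb{C}^2 \otimes \mathcal{W}_1) \otimes \cdots \otimes (\mathbb{C}^2 \otimes \mathcal{W}_n) \cong (\mathbb{C}^2)^{\otimes n} \otimes \mathcal{W}$. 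Under this identification I would expand the given unit vector as $\Lambda = \sum_{\vec\ell} \psi_{\vec\ell} \otimes w_{\vec\ell}$ with $\psi_{\vec\ell} \in (\mathbb{C}^2)^{\otimes n}$ and $\sum_{\vec\ell} \|\psi_{\vec\ell}\|^2 = 1$. By properties (2)--(3) of the canonical form each $M_k^{(i)}$ is block-diagonal with respect to $\{w_{k\ell}\}_\ell$, hence so is the game operator $\mathbf{M} = \sum_{\mathbf{i} \in \{0,1\}^n} f(\mathbf{i})\, M_1^{(i_1)} \otimes \cdots \otimes M_n^{(i_n)}$ with respect to $\{w_{\vec\ell}\}$, and the score splits as
\[
q_f - \epsilon \;=\; \langle \Lambda | \mathbf{M} | \Lambda \rangle \;=\; \sum_{\vec\ell} \langle \psi_{\vec\ell} | \mathbf{M}_{\vec\ell} | \psi_{\vec\ell} \rangle ,
\]
where $\mathbf{M}_{\vec\ell}$ is the game operator of the qubit strategy whose $k$-th pair of operators is $\left[\begin{smallmatrix}0&1\\1&0\end{smallmatrix}\right]$ and $\left[\begin{smallmatrix}0&e^{i\theta_{k\ell_k}}\\ e^{-i\theta_{k\ell_k}}&0\end{smallmatrix}\right]$, evaluated on the (unnormalized) $n$-qubit vector $\psi_{\vec\ell}$.

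Next I would normalize each block: set $p_{\vec\ell} := \|\psi_{\vec\ell}\|^2$, write $\psi_{\vec\ell} = \sqrt{p_{\vec\ell}}\,\widehat\psi_{\vec\ell}$ with $\widehat\psi_{\vec\ell}$ a unit vector whenever $p_{\vec\ell} > 0$, and put $\epsilon_{\vec\ell} := q_f - \langle \widehat\psi_{\vec\ell} | \mathbf{M}_{\vec\ell} | \widehat\psi_{\vec\ell} \rangle$. Since no $n$-qubit strategy exceeds score $q_f$ (as is implicit in the proof of Proposition~\ref{mainprop}), $\epsilon_{\vec\ell} \ge 0$; moreover $\sum_{\vec\ell} p_{\vec\ell} = 1$ and $\sum_{\vec\ell} p_{\vec\ell}\,\epsilon_{\vec\ell} = \epsilon$. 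For each $\vec\ell$ choose a phase $\mu_{\vec\ell}$ making the $|0\cdots0\rangle$-coefficient of $e^{i\mu_{\vec\ell}}\widehat\psi_{\vec\ell}$ real and nonnegative; then $e^{i\mu_{\vec\ell}}\widehat\psi_{\vec\ell}$, together with the operators above (whose angles lie in $[0,\pi]$), is a strategy in $\mathbb{S}$ of score $q_f - \epsilon_{\vec\ell}$, and Lemma~\ref{slemma2} yields
\[
\left\| e^{i\mu_{\vec\ell}}\widehat\psi_{\vec\ell} - g \right\| \;\le\; K_3\sqrt{\epsilon_{\vec\ell}}, \qquad g := \tfrac{1}{\sqrt2}\,|0\cdots0\rangle + \tfrac{e^{i\alpha_0}}{\sqrt2}\,|1\cdots1\rangle ,
\]
the state of the optimal strategy $T(\alpha_0,\ldots,\alpha_n)$; this $g$ is the $n$-qubit state promised by the theorem.

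Finally I would assemble the ancilla: set $\gamma := \sum_{\vec\ell} \sqrt{p_{\vec\ell}}\, e^{-i\mu_{\vec\ell}}\, w_{\vec\ell} \in \mathcal{W}$, a unit vector since $\sum_{\vec\ell} p_{\vec\ell} = 1$. Then $g \otimes \gamma = \sum_{\vec\ell} \sqrt{p_{\vec\ell}}\, e^{-i\mu_{\vec\ell}}\, (g \otimes w_{\vec\ell})$, so $\Lambda - g\otimes\gamma = \sum_{\vec\ell}\big(\psi_{\vec\ell} - \sqrt{p_{\vec\ell}}\,e^{-i\mu_{\vec\ell}} g\big)\otimes w_{\vec\ell}$, and since $\{w_{\vec\ell}\}$ is orthonormal,
\[
\left\| \Lambda - g\otimes\gamma \right\|^2 = \sum_{\vec\ell} \left\| \psi_{\vec\ell} - \sqrt{p_{\vec\ell}}\,e^{-i\mu_{\vec\ell}} g \right\|^2 = \sum_{\vec\ell} p_{\vec\ell} \left\| \widehat\psi_{\vec\ell} - e^{-i\mu_{\vec\ell}} g \right\|^2 \le K_3^2 \sum_{\vec\ell} p_{\vec\ell}\,\epsilon_{\vec\ell} = K_3^2\,\epsilon ,
\]
so the theorem holds with this $g$ and $K = K_3$ (equivalently, $C = K_3$).

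The individual steps are routine; the work lies in getting two things exactly right. The first is justifying that the canonical form genuinely makes $\mathbf{M}$ block-diagonal over the ancilla basis $\{w_{\vec\ell}\}$, so that the score is an honest $p_{\vec\ell}$-weighted average of qubit-strategy scores to which Lemma~\ref{slemma2} applies block by block. The second --- and the point I expect to require the most care --- is carrying the per-block global phases $\mu_{\vec\ell}$ through the argument: each near-optimal $\widehat\psi_{\vec\ell}$ is close to $g$ only \emph{up to a phase}, and it is precisely the freedom to absorb those phases into the single ancilla vector $\gamma$ that lets us package everything as one product state $g \otimes \gamma$. The reason no dimension-dependent loss appears in the constant is that orthonormality of $\{w_{\vec\ell}\}$ turns the final estimate into a true sum of squares, into which Lemma~\ref{slemma2}'s bound is substituted termwise before summing against the weights $p_{\vec\ell}$.
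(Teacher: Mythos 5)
Your proposal is correct and follows essentially the same route as the paper: decompose $\Lambda$ over the ancilla basis $\{w_{\vec\ell}\}$, observe that each block is a strategy in $\mathbb{S}$ whose scores average (with weights $p_{\vec\ell}$) to $q_f-\epsilon$, apply Lemma~\ref{slemma2} blockwise, and sum the squared bounds against the weights. The only difference is bookkeeping --- the paper absorbs your phases $\mu_{\vec\ell}$ into complex coefficients $p_{\ell_1\ldots\ell_n}$ while normalizing each $\lambda_{\ell_1\ldots\ell_n}$ to have nonnegative $\left|00\ldots0\right>$-component, which is exactly your device of pushing the phases into $\gamma$.
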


\begin{proof}
Let
\begin{eqnarray}
g & = & \frac{1}{\sqrt{2}} \left( \left| 00 \ldots 0 \right>
+ \frac{ P_f ( \alpha_1 , \ldots , \alpha_n )}{ | P_f ( \alpha_1 , \ldots , \alpha_n ) |}
\left| 11 \ldots 1 \right> \right).
\end{eqnarray}

Let
\begin{eqnarray}
\label{Lambdastrategy}
\Lambda \in (\mathbb{C}^2 \otimes \mathcal{W}_1 )  \otimes \ldots
\otimes (\mathbb{C}^2 \otimes \mathcal{W}_n ) \\
\nonumber \left\{ M_j^{(i)} \colon \mathbb{C}^2 \otimes \mathcal{W}_j
\to \mathbb{C}^2 \otimes \mathcal{W}_j \right\}_{i,j},
\end{eqnarray}
be a quantum strategy in canonical form which achieves score $q_f - \epsilon$.
We may write the operators $\{ M_j^{(1)} \}_{j=1}^n$ as
\begin{eqnarray}
M_j^{(1)} = \sum_{\ell = 1}^{m_j} \left[ \begin{array}{cc}
0 & e^{i \theta_{j \ell}} \\ e^{-i \theta_{j \ell}} & 0 \end{array} \right] 
\otimes \left| w_{j \ell} \right> \left< w_{j \ell} \right|,
\end{eqnarray}
where for each $j$, $\{ w_{j1} , \ldots, w_{jm_j} \}$ is
an orthonormal basis for $\mathcal{W}_j$, and
each element $\theta_{j \ell}$ is a real number from
the interval $[0, \pi ]$.
Write the state $\Lambda$ as
\begin{eqnarray}
\Lambda = \sum_{\substack{\ell_1, \ldots, \ell_n \\
0 < \ell_k \leq m_k}} \left( p_{\ell_1 \ldots \ell_n} \right)
\lambda_{\ell_1 \ldots \ell_n} \otimes w_{1 \ell_1} \otimes \ldots \otimes w_{n \ell_n}.
\end{eqnarray}
where each $\lambda_{\ell_1 \ldots \ell_n}$ denotes a unit
vector in $( \mathbb{C}^2 )^{\otimes n}$ and 
$\{ p_{\ell_1 \ldots \ell_n} \}$ is a set of complex
numbers satisfying $\sum \left| p_{\ell_1 \cdots \ell_n} \right|^2 = 1$.  By adjusting the vectors $\{ \lambda_{\ell_1, \ldots, \ell_n} \}$
and the quantities $\{ p_{j \ell } \}$ by scalar multiplication if necessary, we may assume
that
\begin{eqnarray}
\left< \lambda_{\ell_1 \ell_2 \ldots \ell_n} \mid 00 \ldots 0 \right> \geq 0
\end{eqnarray}
for each of the vectors $\lambda_{\ell_1 \ell_2 \ldots \ell_n}$.

For any $n$-tuple $(\ell_1, \ldots, \ell_n )$ with $1 \leq \ell_k \leq
m_k$, there is a qubit strategy determined by the vector
\begin{eqnarray}
\label{componentvector}
\lambda_{\ell_1
\ell_2 \ldots \ell_n }
\end{eqnarray}
and the operators
\begin{eqnarray}
\label{componentoperators}
\left\{ \left\{ \left[ \begin{array}{cc} 0 & 1 \\ 1 & 0 \end{array} \right] ,
\left[ \begin{array}{cc} 0 & e^{i \theta_{j \ell_j}} \\ e^{- i \theta_{j \ell j}} &
0 \end{array} \right] \right\} \right\}_{j=1}^n.
\end{eqnarray}
Let $s_{\ell_1 \ldots \ell_n}$ denote the score achieved by this strategy.  The score
for strategy (\ref{Lambdastrategy}) is simply a weighted average of the
scores $\{ s_{\ell_1 \ldots \ell_n} \}$:
\begin{eqnarray}
q_f - \epsilon = \sum_{1 \leq \ell_j \leq m_j} \left| p_{\ell_1 \ldots \ell_n} \right|^2
s_{\ell_1 \ldots \ell_n}.
\end{eqnarray}

Each strategy (\ref{componentvector})--(\ref{componentoperators})
is a member of the class $\mathbb{S}$ defined in section~\ref{qubitstproofsection}.
By Lemma~\ref{slemma2}, there is a constant $K$ such that
\begin{eqnarray}
\left\| \lambda_{\ell_1 \ldots \ell_n} - g \right\| \leq K \sqrt{q_f - s_{\ell_1 \ldots \ell_n}}
\end{eqnarray}
for all $n$-tuples $(l_j)$ with $1 \leq \ell_j \leq m_j$.
Let
\begin{eqnarray}
\gamma & = & \sum_{1 \leq \ell_j \leq m_j} p_{\ell_1 \ldots \ell_n}
w_{1\ell_1 } \otimes w_{2 \ell_2} \otimes \ldots \otimes w_{n \ell_n}.
\end{eqnarray}
Then,
\begin{eqnarray}
\left\| \Lambda - g \otimes \gamma \right\|^2 & = & 
\sum_{1 \leq \ell_j \leq n } \left| p_{\ell_1 \ldots \ell_n} \right|^2 \left\| \lambda_{\ell_1 \ldots
\ell_n } - g \right\|^2 \\
& \leq & \sum_{1 \leq \ell_j \leq n } \left| p_{\ell_1 \ldots \ell_n} \right|^2 K^2
\left( q_f - s_{\ell_1 \ldots \ell_n} \right) \\
& = & K^2 \left[ q_f - \sum_{1 \leq \ell_j \leq m_j} \left| p_{\ell_1 \ldots \ell_n} \right|^2
s_{\ell_1 \ldots \ell_n} \right] \\
& = & K^2 \left[ q_f - (q_f - \epsilon ) \right] \\
& = & K^2 \epsilon.
\end{eqnarray}
This completes the proof.
\end{proof}

\section{Generalizations}

For any
binary nonlocal XOR game $f \colon \{ 0, 1 \}^n \to \mathbb{R}$,
and any binary sequence $b_0, b_1, \ldots, b_n \in \{ 0, 1 \}^n$,
we can define a binary nonlocal XOR game $g \colon 
\{ 0, 1 \}^n \to \mathbb{R}$ defined by
\begin{eqnarray}
\label{modifiedgame}
g ( i_1, i_2 , \ldots , i_n ) = (-1)^{b_0} f ( b_1 \oplus i_1 ,
b_2 \oplus i_2 , \ldots, b_n \oplus i_n ).
\end{eqnarray}
Note that for any quantum strategy $( \phi , \{ \{ M_j^{(0)} , M_j^{(1)} \} \}_j )$
for $f$, one can construct a quantum strategy for $g$ using
the same state $\phi$ which achieves the same
score.  (Simply exchange the two measurements $M_j^{(0)}$
and $M_j^{(1)}$ whenever $b_j = 1$, and,
if $b_0 = 1$, then also negate both $M_1^{(0)}$ and $M_1^{(1)}$.)
Clearly, $f$ is a second-order robust self-test if and only if
$g$ is a second-order robust self-test.

Note that $Z_f$ and $Z_g$ are related as follows:
\begin{eqnarray*}
Z_g ( \theta_0, \ldots, \theta_n ) =
(-1)^{b_0} Z_f \left( \theta_0 + \sum_i b_i \theta_i , (-1)^{b_1} \theta_1 ,
(-1)^{b_2} \theta_2 , \ldots, (-1)^{b_n} \theta_n \right).
\end{eqnarray*}

\begin{proposition}
\label{generalmainprop}
Let $f \colon \{ 0, 1 \}^n \to \mathbb{R}$
be a binary nonlocal XOR game.
Then, $f$ is a second-order robust self-test
if and only if it satisfies the following three conditions.
\begin{itemize}
\item[(A)] There is a maximum $(\alpha_0, \alpha_1, \ldots , \alpha_n)$
for $Z_f$ such that none of $\alpha_1, \alpha_2,  \ldots, \alpha_n$ is a multiple of $\pi$.

\item[(B)] Every other maximum of $Z_f$ is congruent modulo $2 \pi$ 
to either $(\alpha_0, \ldots, \alpha_n)$ or $(-\alpha_0, \ldots, - \alpha_n)$.

\item[(C)] The maxima of $Z_f$ have nonzero Hessian matrices.
\end{itemize}
\end{proposition}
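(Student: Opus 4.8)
The plan is to prove the two implications separately. For ``if'', I would pass to a transformed game $g$ of the form (\ref{modifiedgame}) for a suitable choice of $(b_0, b_1, \ldots, b_n)$, arrange that $g$ satisfies conditions (1)--(2) of Proposition~\ref{mainprop}, and then invoke that proposition together with the observation (made above) that $f$ is a second-order robust self-test iff $g$ is. For ``only if'' I would argue the contrapositive, producing for each of the three possible failures a family of near-optimal strategies that cannot all be close to a single reference strategy.

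\emph{The ``if'' direction.} Assume (A)--(C) and fix a maximum $(\alpha_0, \alpha_1, \ldots, \alpha_n)$ of $Z_f$ as in (A). Choose $b_1, \ldots, b_n$ so that the representative of $(-1)^{b_j}\alpha_j$ in $(-\pi, \pi]$ lies in $(0,\pi)$ for each $j$ --- possible since no $\alpha_j$ ($j \geq 1$) is a multiple of $\pi$. The displayed identity writes $Z_g = (-1)^{b_0} Z_f \circ \psi$ for an invertible linear map $\psi$ with $|\det \psi| = 1$; when $b_0 = 0$ the maxima of $Z_g$ are the $\psi$-preimages of the maxima of $Z_f$, and when $b_0 = 1$ they are the $\psi$-preimages of its minima, which by $Z_f(\theta_0 + \pi, \vec\theta) = -Z_f(\vec\theta)$ and $Z_f(-\vec\theta) = Z_f(\vec\theta)$ (both immediate from $P_f$ having real coefficients) lie at $(\alpha_0 + \pi, \alpha_1, \ldots, \alpha_n)$ and its negative. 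A short computation shows the first coordinate of the relevant maximum of $Z_g$ equals some $a \pmod{2\pi}$ when $b_0 = 0$ and $a + \pi$ when $b_0 = 1$; choosing $b_0$ so that this coordinate has a representative in $[0,\pi)$ therefore always succeeds. The resulting game $g$ has exactly two maxima in $[-\pi, \pi]^{n+1}$ --- this uses (B), which says there are no maxima other than the congruence classes of $\pm(\alpha_0, \ldots, \alpha_n)$ --- of the form $\pm(\alpha_0', \ldots, \alpha_n')$ with $0 < \alpha_j' < \pi$ ($j \geq 1$) and $0 \leq \alpha_0' < \pi$, and, transporting (C) through $\psi$ (and through $\Hess_{p + \pi e_0}(Z_f) = -\Hess_p(Z_f)$ in the $b_0 = 1$ case), nonsingular Hessians there. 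So $g$ satisfies (1)--(2) of Proposition~\ref{mainprop}, which completes the reduction.

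\emph{The ``only if'' direction.} A second-order robust self-test supplies a fixed reference $n$-qubit state $g$ (and reference measurements) with the property that every canonical-form strategy scoring $q_f - \epsilon$ is $O(\sqrt\epsilon)$-close to $g$ tensored with an ancilla; at $\epsilon = 0$ (and with trivial ancilla, as for the qubit strategies in $\mathbb{T}$) this forces exact agreement. Now suppose (C) fails, say the maximum $\vec\alpha$ has a singular Hessian with kernel vector $v$: then $Z_f(\vec\alpha + tv) = q_f - O(t^4)$, so $T(\vec\alpha + tv)$ scores $q_f - \epsilon$ with $\sqrt\epsilon = O(t^2) = o(t)$ while lying at distance $\asymp t$ from $T(\vec\alpha)$ (and from any reference qubit strategy), contradicting the $\sqrt\epsilon$ rate. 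Suppose (B) fails, with $\vec\beta$ a maximum not congruent to $\pm(\alpha_0, \ldots, \alpha_n)$: then $T(\vec\beta)$ is a second optimal strategy, distinguished from the reference by local-isometry invariants of its data (e.g.\ its global state, and $\mathrm{spec}(M_j^{(0)} M_j^{(1)}) = \{ e^{\pm i \beta_j} \}$), so no single reference can serve for both $T(\vec\alpha)$ and $T(\vec\beta)$. Suppose (A) fails, so every maximum has some coordinate $\alpha_k$ ($k \geq 1$) in $\{0, \pi\}$: then at the optimal strategy $T(\vec\alpha)$ one has $M_k^{(1)} = \pm M_k^{(0)}$, so the game operator factors as $\mathbf{M} = \tilde{\mathbf{M}} \otimes (\pm X)$ across the cut separating player $k$ from the rest; hence (for $n \geq 2$) the strategy $\mu \otimes \left| + \right>_k$, where $\mu$ is optimal for the reduced game on the other players, is also optimal but has a pure reduced state on qubit $k$, whereas the reference state $\frac{1}{\sqrt2}(\left| 0 \ldots 0 \right> + e^{i\alpha_0} \left| 1 \ldots 1 \right>)$ has maximally mixed reduced state there; these cannot both equal $g$ tensored with an ancilla. (The case $n = 1$ is immediate.)

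The step I expect to be the real obstacle is ``only if'': each of the three constructions has to be reconciled with the precise definition of second-order robust self-test from the main text, in particular with the fact that local equivalence of canonical strategies must detect the measurement angles and the local reduced states, not just the overall shape of the global state. The ``if'' direction is routine bookkeeping on top of Proposition~\ref{mainprop}, the only subtlety being the correct choice of $b_0$.
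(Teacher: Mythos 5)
Your ``if'' direction and your treatment of condition (C) follow the paper's own route: the paper likewise reduces to Proposition~\ref{mainprop} by replacing $f$ with a game of the form (\ref{modifiedgame}) so that the chosen maximum lands in $[0,\pi)\times(0,\pi)^n$ (you supply the bookkeeping for choosing $b_0,\ldots,b_n$ that the paper leaves implicit, and it checks out), and it proves (C) exactly as you do, by noting that second-order robustness restricted to the class $\mathbb{T}$ forces near-maxima of $Z_f$ to be within $O(\sqrt\epsilon)$ of a maximum in angle space, which a singular Hessian (kernel direction $v$, deficit $o(t^2)$ along $\vec\alpha+tv$) would violate. Where you genuinely diverge is (A) and (B) of the ``only if'' direction: the paper does not prove these in this document at all --- it cites ``Proposition 2 in the main text'' --- so there you are working without a net, and this is where the gap sits.

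Your (B) argument does not go through as stated. The invariants you invoke --- the global state and $\mathrm{spec}(M_j^{(0)}M_j^{(1)})=\{e^{\pm i\beta_j}\}$ --- fail to distinguish a maximum $\vec\beta$ obtained from $\vec\alpha$ by flipping the signs of a proper nonempty subset of the coordinates $\alpha_1,\ldots,\alpha_n$ (with whatever $\beta_0$): each spectrum $\{e^{\pm i\beta_j}\}$ coincides with $\{e^{\pm i\alpha_j}\}$, and the state $\tfrac{1}{\sqrt2}(\left|0\ldots0\right>+e^{i\beta_0}\left|1\ldots1\right>)$ is locally unitarily equivalent to the reference, yet such a $\vec\beta$ is not congruent to $\pm\vec\alpha$ and must be excluded. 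The argument that actually works has to use the state \emph{relative to the canonical-form measurements}: $T(\vec\beta)$ is not in canonical form (some angles lie outside $[0,\pi]$), and canonicalizing it means conjugating each flipped qubit by $\left[\begin{smallmatrix}0&1\\1&0\end{smallmatrix}\right]$, which carries the state to one supported on basis vectors of the form $\left|0\ldots1\ldots\right>$, orthogonal to $g\otimes\gamma$ for every ancilla vector $\gamma$; that optimal canonical strategy then violates the self-test condition at $\epsilon=0$. A similar ``two optimal canonical strategies whose states cannot both be $g\otimes\gamma$'' phrasing is the right way to state your (A) argument as well (your version presumes the reference is the GHZ-type state, which is not given when arguing the contrapositive); your product-eigenvector construction for the factored game operator $\tilde{\mathbf{M}}\otimes(\pm X)$ is fine once cast in that form. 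Finally, as you anticipate, all of the ``only if'' arguments (including the paper's own (C) step) lean on the main-text definition of second-order robust self-test constraining measurements as well as the state; that dependence is shared with the paper and is not a defect specific to your write-up.
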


\begin{proof}
Suppose that $f \colon \{ 0, 1 \} \to \mathbb{R}$ is such that conditions
(A)--(C) hold.  Then, replacing $f$ if necessary with a game of the
form (\ref{modifiedgame}), we may assume that $f$ has a maximum 
$(\beta_0, \ldots, \beta_n )$ which satisfies $\beta_0 \in 
[ 0, \pi )$ and $\beta_1 , \ldots , \beta_n \in (0, \pi )$.  Then, by
Proposition~\ref{mainprop}, $f$ is a second-order robust self-test.

Now suppose conversely that $f \colon \{ 0, 1 \} \to \mathbb{R}$
is a second-order robust self-test.  Then conditions (A) and (B)
follow from Proposition 2 in the main text, so we need
only prove condition (C).  Recall from
section~\ref{qubitstproofsection} that 
for any $(n+1)$-tuple $( \theta_0, \ldots , \theta_n )$
there is an associated qubit strategy
$T ( \theta_0 , \ldots , \theta_n )$
which achieves a score of $Z_f ( \theta_0, \ldots, \theta_n )$.
Second-order robustness within the class $\mathbb{T}$
implies the following: there exists a constant $D$
such that for any $(\theta_0 , \ldots, \theta_n )$
satisfying $Z_f ( \theta_0, \ldots, \theta_n ) \geq q_f 
- \epsilon$, there is  a maximum $(\theta'_0, \ldots,
\theta'_n )$ satisfying $\sum_k \left| \theta_k - \theta'_k \right|
\leq D \sqrt{\epsilon}$.  As a consequence,
all the maxima of $Z_f$ must have nonsingular
Hessian matrices.  This completes the proof.
\end{proof}

The next theorem is a generalization of Theorem~\ref{mainthm}
which follows easily using construction (\ref{modifiedgame}).

\begin{theorem}
Let $f$ be a nonlocal game which satisfies the conditions (A), (B), and (C) from 
Proposition~\ref{generalmainprop}.
Then, there exists an $n$-qubit state $g \in \left( \mathbb{C}^2 \right)^{\otimes n}$
and a constant $K > 0$ such that the following holds:
\begin{itemize}
\item For any quantum strategy in canonical form,
\begin{eqnarray*}
\Lambda \in (\mathbb{C}^2 \otimes \mathcal{W}_1 )  \otimes \ldots
\otimes (\mathbb{C}^2 \otimes \mathcal{W}_n ) \\
\left\{ M_j^{(i)} \colon \mathbb{C}^2 \otimes \mathcal{W}_j
\to \mathbb{C}^2 \otimes \mathcal{W}_j \right\}_{i,j},
\end{eqnarray*}
which achieves score $q_f - \epsilon$,
there exists a unit vector $\gamma \in \mathcal{W}_1 \otimes \ldots
\otimes \mathcal{W}_n$ such that 
\begin{eqnarray*}
\left\| \Lambda - g \otimes \gamma \right\|
\leq K \sqrt{\epsilon}. \qed
\end{eqnarray*}
\end{itemize}
\end{theorem}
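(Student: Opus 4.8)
The plan is to reduce to Theorem~\ref{mainthm} by means of the game transformation~(\ref{modifiedgame}). Given $f$ satisfying (A), (B), (C), fix a maximum $(\alpha_0, \ldots, \alpha_n)$ of $Z_f$ as in condition~(A), choose $b_1, \ldots, b_n \in \{0,1\}$ so that $(-1)^{b_j} \alpha_j \in (0, \pi)$ for every $j \in \{1, \ldots, n\}$, and choose $b_0 \in \{0,1\}$ so that the corresponding maximum of $Z_{\hat f}$ has $\theta_0$-coordinate in $[0, \pi)$, where $\hat f$ denotes the game produced from $f$ and $(b_0, \ldots, b_n)$ by~(\ref{modifiedgame}). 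Exactly as in the proof of Proposition~\ref{generalmainprop}, $\hat f$ then satisfies conditions (1)--(2) of Theorem~\ref{mainthm}: by (B), $Z_{\hat f}$ has precisely the two maxima $\pm(\beta_0, \ldots, \beta_n)$ with $\beta_j \in (0, \pi)$ and $\beta_0 \in [0, \pi)$; and since $Z_{\hat f}$ is $Z_f$ precomposed with an invertible linear change of coordinates (and multiplied by an overall sign when $b_0 = 1$), the Hessian of $Z_{\hat f}$ at a critical point is a congruence of the corresponding Hessian of $Z_f$ by an invertible matrix, so nonsingularity is inherited. Applying Theorem~\ref{mainthm} to $\hat f$ produces an $n$-qubit state $\tilde g$ and a constant $K_0$; note also $q_{\hat f} = q_f$, since strategies for $f$ and $\hat f$ correspond and the correspondence preserves scores.

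The one delicate point is the transfer of strategies. When some $b_j = 1$, the transformation~(\ref{modifiedgame}) interchanges $M_j^{(0)}$ and $M_j^{(1)}$, so a strategy in canonical form for $f$ is \emph{not} carried to a strategy in canonical form for $\hat f$ by a unitary embedding of the product type used in the canonical-form setup: restoring canonical form on $\mathbb{C}^2 \otimes \mathcal{W}_j$ after the swap requires conjugating by an operator that is block diagonal over the basis of $\mathcal{W}_j$ but mixes the $\mathbb{C}^2$ factor with $\mathcal{W}_j$, and a product approximant $\tilde g \otimes \gamma'$ cannot be pulled back through such an operator. I would therefore not transfer the full strategy, but imitate the proof of Theorem~\ref{mainthm} and do the transfer component by component. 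Thus, given a canonical-form strategy $\Lambda$ for $f$ achieving $q_f - \epsilon$, write
\[
\Lambda = \sum_{1 \leq \ell_j \leq m_j} p_{\ell_1 \cdots \ell_n}\, \lambda_{\ell_1 \cdots \ell_n} \otimes w_{1 \ell_1} \otimes \ldots \otimes w_{n \ell_n}
\]
as in Theorem~\ref{mainthm}, with each $\lambda_{\ell_1 \cdots \ell_n}$ a unit vector in $(\mathbb{C}^2)^{\otimes n}$, but this time rescaling the $p$'s so that $\langle \lambda_{\ell_1 \cdots \ell_n} \mid x \rangle \geq 0$, where $x = (b_1, \ldots, b_n) \in \{0,1\}^n$; if $s_{\ell_1 \cdots \ell_n}$ is the score of the component qubit strategy~(\ref{componentvector})--(\ref{componentoperators}), then $q_f - \epsilon = \sum |p_{\ell_1 \cdots \ell_n}|^2 s_{\ell_1 \cdots \ell_n}$.

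For each component I would apply~(\ref{modifiedgame}) to the corresponding qubit strategy --- interchanging the two operators on qubit $j$ whenever $b_j = 1$, and negating the operators on qubit $1$ when $b_0 = 1$ --- and then restore the form of the class $\mathbb{S}$ (for $\hat f$) by conjugating each qubit $j$ by an explicit $2 \times 2$ unitary $v_j(\theta_{j \ell_j})$, equal to the identity when $b_j = 0$ and depending continuously on its argument, followed by an overall phase; at the level of qubits these conjugations form a genuine tensor product $v^{(\ell)} = v_1(\theta_{1 \ell_1}) \otimes \ldots \otimes v_n(\theta_{n \ell_n})$ and compose transparently, which is exactly what fails at the level of the full strategy. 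The resulting strategy lies in $\mathbb{S}$ for $\hat f$, again has angles in $[0, \pi]$, and achieves score $s_{\ell_1 \cdots \ell_n}$; so Lemma~\ref{slemma2}, applied to $\hat f$, bounds both the distance from its state to the state of $T(\beta_0, \ldots, \beta_n)$ and each $|\theta_{j \ell_j} - \beta_j|$ by $K_0 \sqrt{q_f - s_{\ell_1 \cdots \ell_n}}$. Undoing $v^{(\ell)}$ and the phase, and using the angle bound together with the Lipschitz continuity of the maps $v_j$, one obtains that $\lambda_{\ell_1 \cdots \ell_n}$ is $K \sqrt{q_f - s_{\ell_1 \cdots \ell_n}}$-close to a \emph{single} fixed unit vector $g \in (\mathbb{C}^2)^{\otimes n}$ --- namely $v_1(\beta_1) \otimes \ldots \otimes v_n(\beta_n)$ applied to the state of $T(\beta_0, \ldots, \beta_n)$, rephased to have nonnegative overlap with $|x\rangle$ --- where $K$ depends only on $f$. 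Taking $\gamma = \sum p_{\ell_1 \cdots \ell_n}\, w_{1 \ell_1} \otimes \ldots \otimes w_{n \ell_n}$, the final computation in the proof of Theorem~\ref{mainthm} gives $\| \Lambda - g \otimes \gamma \|^2 = \sum |p_{\ell_1 \cdots \ell_n}|^2 \| \lambda_{\ell_1 \cdots \ell_n} - g \|^2 \leq K^2 \epsilon$, as required.

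I expect the main obstacle to be precisely the step just described: showing that the component-dependent local unitaries $v^{(\ell)}$ and phases do not stop the vectors $\lambda_{\ell_1 \cdots \ell_n}$ from concentrating near one fixed reference state. This uses the control over the measurement angles furnished by Lemma~\ref{slemma2}, the continuity of the re-canonicalizing unitaries in those angles, and a phase convention for the $\lambda_{\ell_1 \cdots \ell_n}$ (here, nonnegative overlap with $|x\rangle$) that is consistent with the reference state; everything else is routine bookkeeping or a direct appeal to Theorem~\ref{mainthm} and Lemma~\ref{slemma2}.
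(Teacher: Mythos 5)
Your proof is correct, and it takes the route the paper intends --- reduce to the special position via construction (\ref{modifiedgame}) --- but the paper's proof consists only of the remark that the theorem ``follows easily'' from that construction together with Theorem~\ref{mainthm}, and you have rightly identified why the black-box reduction is not quite immediate: swapping $M_j^{(0)}$ and $M_j^{(1)}$ when $b_j=1$ destroys canonical form, the re-canonicalizing unitary on $\mathbb{C}^2\otimes\mathcal{W}_j$ is the block operator $\sum_\ell v_j(\theta_{j\ell})\otimes \left| w_{j\ell} \right>\left< w_{j\ell} \right|$ whose blocks depend on the measured angles, and Theorem~\ref{mainthm} gives no control on those angles, so a product approximation of the re-canonicalized state cannot simply be pulled back through that conjugation. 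Your remedy --- run the componentwise decomposition from the proof of Theorem~\ref{mainthm}, transfer each component qubit strategy into $\mathbb{S}$ for $\hat f$ by the product unitary $v^{(\ell)}$ and a correcting phase, and then use Lemma~\ref{slemma2} together with the Lipschitz dependence of $v_j$ and of the phase on the angles to concentrate the vectors $\lambda_{\ell_1 \cdots \ell_n}$ near one fixed reference vector $g$ --- is sound, and the closing computation is identical to the one in Theorem~\ref{mainthm}. The details you leave implicit are routine: one may take $v_j(\theta)$ to be the Pauli $X$ matrix times $\mathrm{diag}(1,e^{i\theta})$ (which in fact leaves the angle at qubit $j$ unchanged), the $b_0=1$ negation is repaired by the fixed unitary $\mathrm{diag}(1,-1)$ on qubit $1$, and the needed angle control $|\theta_{j\ell_j}-\beta_j| = O\bigl(\sqrt{q_f - s_{\ell_1\cdots\ell_n}}\bigr)$ is available either from the operator part of strategy-closeness in Lemma~\ref{slemma2} (using $|\theta-\beta| \leq \tfrac{\pi}{2}\,|e^{i\theta}-e^{i\beta}|$ for $\theta,\beta \in [0,\pi]$) or directly from Lemma~\ref{nearmaximalemma} applied to $Z_{\hat f}$, which is what the proof of Lemma~\ref{slemma} actually supplies. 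In short: same approach as the paper, but carried out one level down (at Lemma~\ref{slemma2} rather than at Theorem~\ref{mainthm} as a black box), which is precisely the care that the paper's one-line proof elides.
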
